\newtheorem{theorem}{Theorem}
\newtheorem{corollary}{Corollary}
\newtheorem{fact}{Fact}
\newtheorem{definition}{Definition}
\newcommand*{\clP}{$ \mathrm{\bf P} $}
\newcommand*{\clNP}{$ \mathrm{\bf NP} $}
\newcommand*{\cH}{\mathcal{H}}
\newcommand*{\cS}{\mathcal{S}}
\newcommand*{\clAPX}{$\mathrm{\bf APX}$}
\title{\bf Improved Lower Bounds for the Shortest Superstring and Related Problems\\[1ex]}
\author{
Marek Karpinski\thanks{Dept. of Computer Science and the Hausdorff
    Center for Mathematics, University of Bonn.
    Supported in part by DFG grants and the Hausdorff Center grant EXC59-1.
    Email:~\texttt{marek@cs.uni-bonn.de}}
\and    
    Richard Schmied\thanks{Dept. of Computer Science, University of Bonn.
    Work supported by Hausdorff Doctoral Fellowship.
    Email:~\texttt{schmied@cs.uni-bonn.de}}
}
\date{}
\begin{document}
\maketitle

\begin{abstract}
We study the approximation hardness of the
Shortest Superstring, the Maximal Compression  and the 
Maximum Asymmetric Traveling Salesperson 
(MAX-ATSP) problem. 
We introduce a new reduction method that produces strongly restricted instances of the Shortest 
Superstring problem, in which the maximal orbit size is eight
 (with no character
appearing more than eight times) and all given strings having length at most 
four. Based on this reduction method, we are able to improve 
the best up to now known approximation lower bound 
for the Shortest Superstring problem and the Maximal Compression problem
by an order of magnitude.
The results imply  also  an improved approximation lower bound 
for the MAX-ATSP problem. 
\end{abstract}

\section{Introduction}
In the \textbf{Shortest Superstring problem}, we are given
a finite set $S$ of strings and we would like to construct their shortest 
superstring, which is  the shortest possible string such that every
string in $S$ is a proper substring of it.

The task of computing a shortest common superstring appears in a wide variety of application 
 related to computational biology (see. e.g.~\cite{L88} and \cite{L90}). 
Intuitively,  short superstrings preserve important
biological structure and are good models of the original DNA sequence. 
In context of computational biology, DNA sequencing is
the important task of determining the sequence of nucleotides in a molecule of DNA.
The DNA can be seen as a double-stranded
sequence of four types of nucleotides
 represented by the alphabet $\{a,c,g,t\}$.
 Identifying those strings
for different molecules is an important
step towards understanding the biological functions of the
molecules. However, with current laboratory methods, it is quite
 impossible to extract a long molecule directly
as a whole. In fact, biochemists split
 millions of  identical molecules into pieces each
typically containing at most $500$ nucleotides.
Then, from  sometimes
millions of these fragments, one has to compute the superstring representing
the whole molecule.

From the computational point of view, the Shortest Superstring problem
is an optimization problem, which consists of finding a minimum length superstring
for a given set $S$ of strings over a finite alphabet $\Sigma$.
The underlying decision version was proved to be
 \clNP-complete~\cite{MS77}.
 However, there are many applications that involve
 relatively simple classes of strings.
Motivated by those applications, many authors have investigated 
whether the Shortest Superstring problem becomes polynomial time
solvable under various restrictions to the set of instances.
Gallant et al.~\cite{GMS80} proved that this problem in the exact setting 
is still \clNP-complete for strings of length three and polynomial time solvable 
for strings of length two. 
On the other hand, Timkovskii~\cite{T90} studied the 
Shortest Superstring problem under restrictions to the orbit size
of the letters in $\Sigma$.  The orbit size of
a letter is the number of its occurrences in the strings of $S$. 
Timkovskii proved that this problem restricted to instances with
maximal orbit size two is polynomial time solvable. He 
raised the question about the status of the problem
 with maximal orbit size $k$  for any constant $k\geq 3$. 
It is known that the Shortest Superstring problem  remains
\clNP-hard for the following strongly restricted instances, such as
\begin{enumerate}
\item[$(i)$]
all strings have length four and the maximal orbit size is six~\cite{M94},
\item[$(ii)$]
the size of the alphabet of the instance  is exactly two~\cite{GMS80}, and
\item[$(iii)$]
 all strings are of the form $10p10q$, where $p,q\in \mathbb{N}$~\cite{M98}.
\end{enumerate} 
In order to cope with the exact computation intractability, approximation algorithms were 
designed
to deal with this problem. The first polynomial time approximation
algorithm with a constant approximation ratio was given by Blum et al.~\cite{BJL$^+$94}.
It achieves an  approximation ratio  $3$. This factor was improved in a  series
of papers  yielding approximation ratios of $2.88$ by Teng and Yao~\cite{TT93}; $2.83$
by Czumaj et al.~\cite{CGP$^+$94}; $2.79$
 by Kosaraju, Park, and Stein~\cite{KPS94}; $2.75$ by Armen and Stein~\cite{AS95};
 $2.67$ by Armen and Stein~\cite{AS98}; $2.596$ by Breslauer, Jiang, and Jiang~\cite{BJJ97}
and $2.5$ by Sweedyk~\cite{S99}. 
The currently best known  approximation algorithm is due to Mucha~\cite{M12}
and yields an approximation ratio of $2.478$.

On the lower bound side, Blum et al.~\cite{BJL$^+$94} proved that approximating the
Shortest Superstring problem is \clAPX-hard.
However, the constructed reduction  produces instances
with arbitrarily large alphabets. In~\cite{O99}, Ott provided the first
explicit approximation hardness result and 
proved that the problem is
\clAPX-hard even if the size of the alphabet is two.
In fact, Ott proved that instances over a binary alphabet
are \clNP-hard to approximate with an approximation ratio $17246/17245~(1.000057)-\epsilon $ for every $\epsilon > 0$. 
In 2005, Vassilevska~\cite{V05} gave an improved approximation lower bound of $1217/1216~(1.00082)$ by using a natural construction.
 The constructed instances  of the Shortest
Superstring problem have maximal orbit size $20$ and the length of the strings is  exactly $4$.
 
 In this paper, we prove that even instances of the
Shortest Superstring problem with maximal orbit size $8$ and all strings having length $4$ 
are \clNP-hard
to approximate with less than $333/332~(1.00301)$.\\
\\
 \textbf{Maximal Compression problem}. We are given a collection
of strings $S=\{s_1,\ldots,s_n\}$. The task is  to find a superstring for $S$
with maximum compression, which is the difference between the sum of
the lengths of the given strings and the length of the superstring. 

In the exact setting, an optimal solution to the Shortest Superstring problem
is an optimal solution to this problem,
but the approximate solutions can differ significantly
in the sense of approximation ratio.
The Maximal Compression problem arises in
various data compression problems (cf.~\cite{SS82},~\cite{S88} and~\cite{MJ75}).
The decision version of this problem
 is \clNP-complete~\cite{MS77}.
Tarhio and Ukkonen~\cite{TU88} and Turner~\cite{T89}
gave approximation algorithms with approximation ratio $2$. 
The best known approximation upper bound  is $1.5$~\cite{KLS$^+$05}
by reducing it to the MAX-ATSP problem, which is defined below.

On the approximation lower bound side, Blum et al.~\cite{BJL$^+$94} proved the 
\clAPX-hardness of the Maximal Compression problem. 
The first explicit approximation lower bounds were given by Ott~\cite{O99},   
who proved that it is \clNP-hard to approximate this problem with an approximation factor 
$11217/11216~(1.000089)-\epsilon $ for every $\epsilon>0$.  
This hardness result was improved by Vassilevska~\cite{V05} implying a lower bound of
$1072/1071~(1.00093)-\epsilon~ $ for any $\epsilon>0$, unless \clP~$=$~\clNP. In this paper, we prove that
approximating the Maximal Compression problem
 with an approximation ratio less than $204/203 ~(1.00492)$ is \clNP-hard.\\
\\
\textbf{Maximum Asymmetric Traveling Salesperson (MAX-ATSP) problem}.
We are given a complete directed graph $G$  and a weight function
$w$ assigning each edge of $G$ a nonnegative weight. The task is to find a closed
tour of maximum weight visiting every vertex of $G$ exactly once .

This problem has various applications and in fact, a good approximation algorithm for MAX-ATSP
yields a good approximation algorithm for many other optimization problems
such as the  Shortest Superstring problem, the Maximum Compression problem
and the Minimum Asymmetric $(1,2)$-Traveling Salesperson (MIN-$(1,2)$-ATSP) problem. 
The latter problem is the restricted version
of the Minimum Asymmetric Traveling Salesperson problem, in which 
we restrict the weight function $w$ to weights one and two.
The MAX-ATSP problem can be seen as a generalization of the MIN-$(1,2)$-ATSP
problem in the sense that any $(\frac{1}{\alpha})$-approximation algorithm for the former 
problem transforms in a $(2-\alpha)$- approximation algorithm for the latter problem.
Due to this reduction, all negative results concerning the approximation of the MIN-$(1,2)$-ATSP
problem imply hardness results for the MAX-ATSP problem. 
Since MIN-$(1,2)$-ATSP  is \clAPX-hard~\cite{PY93}, there is little hope for polynomial time
approximation algorithms with arbitrary good precision for the MAX-ATSP problem.  
On the other hand, the first approximation algorithm for the MAX-ATSP problem with guaranteed
approximation performance is due to Fisher, Nemhauser, and Wolsey~\cite{FNW79} and achieves
an approximation factor of $2$.
After that Kosaraju, Park, and Stein~\cite{KPS94} gave an approximation algorithm for that problem
with performance ratio $1.66$.
This result was improved by Bl\"aser~\cite{B02} who obtained an approximation upper bound of
 $1.63$. Lewenstein and Sviridenko~\cite{LS03} were able to improve the 
 approximation upper bound for that problem to $1.60$. 
Then, Kaplan et al.~\cite{KLS$^+$05}  designed an algorithm for the MAX-ATSP problem
  yielding the best known approximation upper bound of $1.50$.
  
On the approximation hardness side, Engebretsen~\cite{E99} proved that, for any $\epsilon>0$,
there is no ($2805/2804-\epsilon$)-approximation algorithm for MIN-$(1,2)$-ATSP, unless \clP~ $=$~ \clNP,
which yields an approximation lower bound of $2804/2803~ (1.00035)-\epsilon$ for the MAX-ATSP problem. 
The negative result was improved by Engebretsen and Karpinski~\cite{EK06}
to $321/320~(1.0031)-\epsilon $ for the MIN-$(1,2)$-ATSP problem. It implies the best known approximation
lower bound of $320/319~(1.0031 )-\epsilon$, unless \clP~ $=$~ \clNP. In this paper, we prove that
approximating the MAX-ATSP problem with an approximation ratio less than $204/203 ~(1.00492)$ is \clNP-hard.
\section{Preliminaries}
In the following,  we 
introduce some notation and abbreviations.\\
\\
Throughout, for $i\in \mathbb{N}$, we use the abbreviation $[i]$ for the set $\{1,\ldots,i\}$.
Given an finite alphabet $\Sigma$, a string is an element of $\Sigma^*$.
Given two strings $v=v_1\cdots v_n$ and $w=w_1\cdots w_m$ over 
$\Sigma$, we denote the length of $v$  by $|v|$.
Furthermore, $v$ is a substring of $w$, if $m \ge n$
and there exists a $j\in \{0,..,n-m\}$ such that for all $i\in [m]$,
$v_i=w_{j+i}$. $w$ is said to be a superstring of $v$ if
$v$ is a substring of $w$. Given a set of strings
$S = \{s_1, . . . , s_n\}\subset \Sigma^* $, a string $s\in \Sigma^*$ is a superstring for $S$
if $s$ is a  superstring of every $s_i\in S$. Given a superstring $s$ for $S$,  
the compression of $s$, denoted  $comp(S,s)$, is  defined as
$$comp(S,s)=\sum\limits_{s_i\in S}|s_i|-|s|.$$ In addition, we introduce the notion of 
the maximal orbit size
of $S$ being the maximal number of occurences of a character in $S$.
We are ready to give the definition of the Shortest Superstring problem
and the Maximal Compression problem.
\begin{definition}
Given an alphabet $\Sigma$ and a set of strings 
$S = \{s_1, . . . , s_n\}\subset \Sigma^* $
such that no string in $S$ is a substring of another string 
in $S$, in the Shortest Superstring problem we have to find a 
string $s$ for $S$ of minimum length, 
whereas
in the Maximum Compression problem, we have to find a superstring $s$
for $S$ with maximum 
compression.
\end{definition}
\noindent
In the following, we concentrate on the traveling salesperson problems.
We begin with the definition of the MAX-ATSP problem. For this reason,
we introduce the notion of a \emph{Hamiltonian tour}. Given a directed graph
$G=(V,A)$, a Hamiltonian tour is a cycle in $G$ visiting each vertex of
$G$ exactly once.
\begin{definition}[MAX-ATSP]
Given a complete directed graph $G=(V,A)$ and a weight function 
$w$ assigning each edge of $G$ a nonnegative weight,
the MAX-ATSP problem consists of finding a Hamiltonian tour of maximum weight in $G$. 
\end{definition}
\noindent 
Next, we give the definition of the  MIN-$(1,2)$-ATSP problem, which is closely related to
the MAX-ATSP problem.
\begin{definition}[MIN-$(1,2)$-ATSP]
In the MIN-$(1,2)$-ATSP problem, we are given a complete directed 
graph $G=(V,A)$ and a weight function $w:A\rightarrow \{1,2\}$. The task is to find
a Hamiltonian tour of minimum weight in $G$.
\end{definition}
\section{Related Work}
In the following, we present some results related to the problems studied in this paper. 
In particular, we describe briefly some reductions, which we use later on.\\
\\ 
The following theorem is due to Vassilevska~\cite{V05} and deals with 
best known approximation lower bounds for the Shortest Superstring problem as well as
for the Maximal Compression problem. 
\begin{theorem}[\cite{V05}]
For any $\epsilon > 0$, it is  \clNP-hard to approximate the  Shortest Superstring problem 
and the Maximal Compression problem restricted to instances 
with equal length strings  in polynomial time within a factor
of
\begin{itemize}
\item $1.00082-\epsilon$  and
\item  $1.00093-\epsilon$,   respectively.
\end{itemize}
\end{theorem}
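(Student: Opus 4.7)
The plan is to obtain the hardness bounds through a gap-preserving reduction from a bounded-occurrence constraint satisfaction problem with an explicit numerical inapproximability factor, and then push the gap through a careful string-gadget construction. The most convenient starting point is the $(3,B)$-Occ-MAX-E3-LIN-2 problem of Berman--Karpinski, where each variable appears in a bounded number of equations and the inapproximability factor is an explicit constant depending on $B$. Because the final ratios $1.00082$ and $1.00093$ are both close to $1$, the source gap can be small, and the whole game is to show that the chain of constants from the CSP inapproximability through the reduction multiplies out to the two claimed numbers.

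For the construction I would build, for each variable $x$ appearing in $p$ equations, a ``variable gadget'' consisting of length-four strings whose overlap graph is a directed cycle admitting exactly two maximum-overlap traversals, interpreted as $x = 0$ and $x = 1$. The two orientations should produce equal internal compression, so neither assignment is preferred a priori, and they should differ only in which interface symbols are exposed to neighbouring clause gadgets. For each linear equation I would add a ``clause gadget,'' again built from length-four strings, whose strings can overlap with the exposed interface symbols of the participating variable gadgets. The calibration is the delicate part: if the assignment satisfies the equation, the clause gadget merges with the variable gadgets for some fixed savings $s$; if it does not, the savings drop to $s-1$. To make this clean, one views the instance through the Shortest Superstring $\leftrightarrow$ MAX-ATSP correspondence on the overlap graph and designs gadgets so that the only Hamiltonian tours of nontrivial weight correspond to actual consistent assignments, with a penalty of exactly one unit per violated equation.

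The gap analysis is then bookkeeping: if the CSP has optimum $(1-\eta)m$ equations satisfied, then the optimum superstring has length $L^{*} = L_0 - s\cdot m + \eta m$, where $L_0$ is the total symbol count of the constructed instance, and the optimum compression is $s\cdot m - \eta m$. Applying the two-sided inapproximability gap for MAX-E3-LIN-2 on $\eta$ produces two different ratios for the two objectives precisely because the denominators $L^{*}$ and $s\cdot m - \eta m$ are of different orders, and this is exactly what separates $1.00082$ from $1.00093$. One must also verify that the instance only uses equal-length strings (length four), which is automatic if all gadget strings are built at that length.

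The main obstacle is the simultaneous calibration of three competing demands on the gadgets: all strings must have length exactly four, every character must have bounded orbit (Vassilevska's construction achieves orbit $20$), and the compression must change by exactly one unit per unsatisfied equation with no spurious cross-gadget overlaps. Controlling spurious overlaps is typically the hardest subtask, since length-four strings over a small effective alphabet can accidentally overlap with strings of unrelated gadgets; the standard remedy is to use fresh ``separator'' symbols at the endpoints of each gadget and to pad each clause gadget so that every unintended overlap is dominated by an intended one, but doing this without blowing up the orbit bound is what ultimately dictates the final constant and leaves room for the improvement the present paper pursues.
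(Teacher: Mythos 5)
This statement is quoted from Vassilevska's paper \cite{V05}; the present paper states it without proof, so there is no in-paper argument to compare against line by line. Judged on its own terms, your proposal correctly identifies the general blueprint that both Vassilevska and this paper use (a gap-preserving reduction from a bounded-occurrence linear-equation CSP, variable gadgets of length-four strings with two equal-cost ``orientations'' encoding $0/1$, clause gadgets calibrated to lose exactly one unit of compression per unsatisfied equation, and the observation that the length and compression objectives yield different ratios because their denominators differ). But as a proof it has a genuine gap in the direction that carries all the difficulty: soundness. You assert that one can design the gadgets ``so that the only Hamiltonian tours of nontrivial weight correspond to actual consistent assignments,'' but this is not something one can arrange; an adversarial superstring is free to interleave, reverse, or split gadgets arbitrarily, and the required statement is that \emph{every} superstring of a given length can be transformed, without increasing its length, into a normalized one from which an assignment violating at most the corresponding number of equations can be read off. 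This normalization and the ensuing case analysis over all the ways gadget strings can be mis-aligned is exactly where the bulk of the work lies --- compare Sections 5.3 and 5.4 of the present paper, which devote many pages to defining simple alignments, normed superstrings, and the constellation-by-constellation accounting for checker and contact variables. Your sketch does not construct any concrete gadget, does not verify the orbit bound or the absence of spurious overlaps beyond naming the problem, and does not derive the constants $1217/1216$ and $1072/1071$; the claim that ``the chain of constants multiplies out to the two claimed numbers'' is asserted, not established. So the completeness direction and the high-level architecture are fine, but the soundness argument and the explicit calibration that produce the stated numerical bounds are missing.
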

\noindent 
In addition, the maximal orbit size of the constructed instances in~\cite{V05} is
 $20$ and all strings have length four.
  In the same paper, it was proved that  the Shortest Superstring
problem is the hardest to approximate on instances over a binary alphabet.
\begin{theorem}[\cite{V05}]
Suppose the Shortest Superstring problem can be approximated by
a factor $\alpha$ on instances over a binary alphabet. Then, the
Shortest Superstring problem can be approximated by a factor $\alpha$
on instances over any alphabet.
\end{theorem}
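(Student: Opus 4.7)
The plan is to construct a ratio-preserving reduction that turns any instance $S$ of the Shortest Superstring problem over an arbitrary alphabet $\Sigma = \{a_1, \ldots, a_k\}$ into a binary instance $S'$ satisfying $\mathrm{OPT}(S') = \ell \cdot \mathrm{OPT}(S)$, where $\ell$ is a constant depending only on $k$. Once this identity is in hand, applying the assumed binary $\alpha$-approximation algorithm to $S'$ and decoding its output gives a superstring of $S$ of length at most $\alpha \cdot \mathrm{OPT}(S)$.

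First, I would fix a length-preserving encoding $e : \Sigma \to \{0,1\}^{\ell}$ whose codewords begin with a long synchronization marker that cannot appear anywhere else inside a concatenation of codewords. A concrete choice is
\[
e(a_i) \;=\; 1^{k+1}\,0\,1^{i}\,0\,1^{k-i}\,0, \qquad i\in[k],
\]
of common length $\ell = 2k+4$, extended to strings by $e(b_1\cdots b_m) = e(b_1)\cdots e(b_m)$, and yielding $S' = \{e(s) : s \in S\}$. Since every codeword ends with a $0$ and no internal run of $1$'s exceeds length $k$, the marker $1^{k+1}$ occurs in any concatenation $e(b_1) \cdots e(b_m)$ only at the aligned positions $1, \ell+1, 2\ell+1, \ldots$, so any length-$\ell$ window equal to some $e(a_j)$ must start at such an aligned offset; this is the alignment property.

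Using alignment, the inequality $\mathrm{OPT}(S') \le \ell \cdot \mathrm{OPT}(S)$ is immediate, since $e(t)$ is a length-$\ell |t|$ superstring of $S'$ whenever $t$ is a superstring of $S$. For the reverse direction, I would take a shortest superstring $t'$ of $S'$ in canonical form, i.e., a merge of the input strings $e(s_i)$ under some permutation with prescribed overlaps. Alignment forces every such overlap to have length divisible by $\ell$: otherwise the leading codeword of the succeeding input would begin at a non-aligned position inside the preceding input, which is itself a concatenation of codewords, contradicting the alignment property. Hence $t'$ decomposes as $e(b_1)\cdots e(b_m)$, and $t = b_1\cdots b_m$ is a superstring of $S$ of length $|t'|/\ell$. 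Composing with the $\alpha$-approximation algorithm then gives $|t| \le \alpha \cdot \mathrm{OPT}(S') / \ell = \alpha \cdot \mathrm{OPT}(S)$.

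The main obstacle is the alignment property itself and, relatedly, the assertion that overlap lengths in a canonical binary superstring are forced to be multiples of $\ell$; the marker-based design reduces both to the observation about runs of $1$'s, but care is needed at boundaries where the trailing $1^{k-i}$ of one codeword abuts the leading $1^{k+1}$ of the next. The closing $0$ of each codeword is inserted precisely so that such abutments cannot resynchronize to produce a shifted copy of the marker $1^{k+1}$; beyond this finite combinatorial check, the remaining bookkeeping is routine.
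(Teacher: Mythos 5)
Your proposal is correct, and it is essentially the argument behind the cited result: the paper itself states this theorem without proof, attributing it to~\cite{V05}, whose reduction likewise encodes each of the $k$ letters by a fixed-length, self-synchronizing binary codeword so that every overlap in a canonical superstring is forced to be a multiple of the code length $\ell$, giving $\mathrm{OPT}(S')=\ell\cdot\mathrm{OPT}(S)$ exactly. Your marker $1^{k+1}$ followed by runs of $1$'s of length at most $k$ correctly enforces the alignment property, so the construction goes through.
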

\noindent
Given an instance $S$ of the Shortest Superstring problem,
consider the associated weighted complete graph, in which the vertices
are represented by the strings in $S$ and the weight
of an edge is given by the the number of maximum overlapped letters of the corresponding
strings. Then,  the
optimal compression is equivalent to the weight of
a maximum Hamiltonian path. By introducing a
 special vertex representing the start and the end
of the Hamiltonian cycle, the Maximal Compression
problem is equivalent to the MAX-ATSP problem on this graph.
This fact was used in~\cite{KLS$^+$05}  
in order to obtain an improved approximation
algorithm for the Maximal Compression problem. 
\begin{fact}\label{redmaxtspmaxcomp}
An $\alpha$-approximation algorithm for the MAX-ATSP problem implies an 
$\alpha$-approximation algorithm  for the Maximal Compression problem.
\end{fact}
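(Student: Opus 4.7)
The plan is to establish a tight, approximation-preserving reduction from Maximal Compression to MAX-ATSP along the lines sketched in the paragraph preceding the fact. Given an instance $S=\{s_1,\ldots,s_n\}$ of Maximal Compression (where, by definition, no string in $S$ is a substring of another), I would construct a complete directed graph $G$ on the vertex set $\{v_0,s_1,\ldots,s_n\}$, where $v_0$ is an auxiliary vertex. The weight of an arc $(s_i,s_j)$ with $i\neq j$ is set to the maximum overlap $\mathit{ov}(s_i,s_j)$, i.e.\ the length of the longest suffix of $s_i$ that is also a prefix of $s_j$; the weights on arcs incident to $v_0$ are set to $0$. This construction is clearly polynomial time computable.

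The core of the argument is a value-preserving bijection between feasible superstrings of $S$ and Hamiltonian tours in $G$. In one direction, any superstring $s$ of $S$ induces a left-to-right ordering $s_{\pi(1)},\ldots,s_{\pi(n)}$ of the strings by their first occurrence in $s$; because no $s_i$ is a substring of another, this ordering is well defined and $\mathit{comp}(S,s)\le\sum_{k=1}^{n-1}\mathit{ov}(s_{\pi(k)},s_{\pi(k+1)})$, which is the weight of the Hamiltonian tour $v_0\to s_{\pi(1)}\to\cdots\to s_{\pi(n)}\to v_0$. In the other direction, given any Hamiltonian tour in $G$, removing $v_0$ yields a Hamiltonian path $s_{\pi(1)},\ldots,s_{\pi(n)}$ of the same weight $W$; concatenating the $s_{\pi(k)}$'s while sliding each one back by $\mathit{ov}(s_{\pi(k)},s_{\pi(k+1)})$ characters produces a superstring $s^\star$ whose compression is exactly $W$ (again, the no-substring condition guarantees each $s_{\pi(k)}$ survives as a substring). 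Consequently, $\mathrm{OPT}_{\text{MAX-ATSP}}(G)=\mathrm{OPT}_{\text{MaxComp}}(S)$, and from a tour one can recover a corresponding superstring of equal value in linear time.

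The approximation transfer now follows immediately. If $\mathcal A$ is an $\alpha$-approximation algorithm for MAX-ATSP with $\alpha\ge 1$, run $\mathcal A$ on $G$ to obtain a tour $T$ with $w(T)\ge \mathrm{OPT}_{\text{MAX-ATSP}}(G)/\alpha$, then convert $T$ into a superstring $s^\star$ as above. Since $\mathit{comp}(S,s^\star)=w(T)$ and $\mathrm{OPT}_{\text{MAX-ATSP}}(G)=\mathrm{OPT}_{\text{MaxComp}}(S)$, we obtain $\mathit{comp}(S,s^\star)\ge \mathrm{OPT}_{\text{MaxComp}}(S)/\alpha$, which is precisely an $\alpha$-approximation for Maximal Compression.

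The step most deserving of care is the value-preserving correspondence in both directions, specifically the claim that the compression of the concatenation equals the sum of consecutive overlaps. This is where the assumption that no string of $S$ is a substring of another (part of the formal definition) is essential: without it, a string could be absorbed inside the merge of two others, causing the compression to exceed the sum of pairwise overlaps and breaking the equality between optimal tour weight and optimal compression. Once this point is handled, the remaining arguments are straightforward bookkeeping and the $\alpha$-preservation is automatic.
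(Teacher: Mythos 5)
Your proposal is correct and follows essentially the same route the paper takes: it is exactly the overlap-graph construction with an auxiliary start/end vertex sketched in the paragraph preceding the fact (and attributed there to Kaplan et al.), with the paper leaving the value-preserving correspondence between tours and superstrings implicit. Your fleshing out of the two directions, including the role of the no-substring assumption, matches the intended argument.
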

\noindent
Another interesting relation can be derived by replacing all edges with weight
two of an instance of the MIN-$(1,2)$-ATSP problem by edges of weight zero and
then, computing a Hamiltonian tour of maximum weight.
Vishwanathan\cite{V92} proved that this transformation relates the   
MIN-$(1,2)$-ATSP problem to the MAX-ATSP problem in the following sense. 
%
%
\begin{theorem}[\cite{V92}]\label{sspredminmax}
An $(\frac 1{\alpha})$- approximation algorithm for the MAX-ATSP problem implies
an $(2-\alpha)$- approximation algorithm for the MIN-$(1,2)$-ATSP problem. 
\end{theorem}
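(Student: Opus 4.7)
The plan is to prove the reduction via the weight-flipping transformation described in the paragraph just before the theorem statement. Concretely, given a MIN-$(1,2)$-ATSP instance on the complete digraph $G=(V,A)$ with $|V|=n$ and weight function $w:A\to\{1,2\}$, I would construct the MAX-ATSP instance on the same graph with weights $w'(e)=2-w(e)\in\{0,1\}$. The key identity is that for every Hamiltonian tour $H$ (which uses exactly $n$ arcs), $w(H)+w'(H)=2n$. Consequently, if $\mathrm{OPT}_{\min}$ and $\mathrm{OPT}_{\max}$ denote the optimal values for MIN-$(1,2)$-ATSP on $(G,w)$ and MAX-ATSP on $(G,w')$ respectively, then $\mathrm{OPT}_{\min}+\mathrm{OPT}_{\max}=2n$: a tour minimizing $w$ is exactly a tour maximizing $w'$.

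Next, I would feed the transformed instance to the assumed $(1/\alpha)$-approximation algorithm for MAX-ATSP, obtaining (in polynomial time) a Hamiltonian tour $H$ with $w'(H)\ge \alpha\cdot\mathrm{OPT}_{\max}$. Interpreting $H$ as a feasible tour for the original MIN-$(1,2)$-ATSP instance, its cost is
\begin{equation*}
w(H)=2n-w'(H)\le 2n-\alpha\cdot\mathrm{OPT}_{\max}=2n-\alpha(2n-\mathrm{OPT}_{\min})=(2-2\alpha)n+\alpha\cdot\mathrm{OPT}_{\min}.
\end{equation*}

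The one remaining step is the trivial but crucial bound $\mathrm{OPT}_{\min}\ge n$, which holds because every arc of $G$ has weight at least $1$ and a Hamiltonian tour has $n$ arcs. Since $\alpha\le 1$ (as $1/\alpha\ge 1$ for a maximization approximation ratio), the coefficient $(2-2\alpha)$ is nonnegative, so substituting $n\le \mathrm{OPT}_{\min}$ into the previous inequality yields
\begin{equation*}
w(H)\le (2-2\alpha)\mathrm{OPT}_{\min}+\alpha\cdot\mathrm{OPT}_{\min}=(2-\alpha)\mathrm{OPT}_{\min},
\end{equation*}
which is exactly the claimed approximation guarantee for MIN-$(1,2)$-ATSP. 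There is no substantive obstacle here: the whole argument is a two-line algebraic manipulation once the complementary-weight identity $w(H)+w'(H)=2n$ and the lower bound $\mathrm{OPT}_{\min}\ge n$ are in hand. The only point worth stating carefully is that $\alpha\le 1$, so the inequality $n\le \mathrm{OPT}_{\min}$ is applied in the right direction.
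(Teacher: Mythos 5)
Your proof is correct and uses exactly the weight-flipping transformation $w'(e)=2-w(e)$ that the paper sketches in the sentence preceding the theorem; the paper itself gives no proof, simply citing Vishwanathan. Your completion of the argument --- the complementary identity $w(H)+w'(H)=2n$, the bound $\mathrm{OPT}_{\min}\ge n$, and the observation that $\alpha\le 1$ makes the substitution go in the right direction --- is the standard and correct way to fill in the details.
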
   
\noindent
Due to this reduction, every hardness result concerning the MIN-$(1,2)$-ATSP problem
can be transformed into a hardness result for the MAX-ATSP problem.
The best known approximation lower bound for the MIN-$(1,2)$-ATSP problem is 
proved in~\cite{EK06}
and it yields the following hardness result .
\begin{theorem}[\cite{EK06}]
For any constant $\epsilon > 0$, it is \clNP-hard to approximate 
the MIN-$(1,2)$-ATSP problem with an approximation ratio $1.0031-\epsilon$.
\end{theorem}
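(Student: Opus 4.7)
The plan is to obtain the stated hardness by a gap-preserving reduction from a bounded-occurrence constraint satisfaction problem with strong explicit inapproximability, most naturally the \emph{Hybrid problem} of Berman and Karpinski (a conjunction of linear equations of sizes two and three over $\mathbb{F}_2$ in which every variable occurs a bounded number of times). The overall idea is to encode a CSP instance $\varphi$ as a complete directed graph $G$ with weights in $\{1,2\}$ so that an optimal assignment to $\varphi$ corresponds to a Hamiltonian tour whose cost equals $N+c$, where $N$ is the number of vertices of $G$ and $c$ is essentially the number of violated equations. Since Berman--Karpinski furnish a tight gap between the numbers of simultaneously satisfiable equations in the YES and NO cases, and since each variable occurs only constantly many times, the vertex count $N$ stays linear in the number of equations, so an additive hardness gap translates into a multiplicative one.

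The main combinatorial work is the design of two kinds of gadgets. For every variable $x$ of $\varphi$ I would introduce a directed \emph{variable gadget}, a small subgraph in which the weight-$1$ arcs force any locally Hamiltonian traversal to be of one of exactly two canonical types, a \emph{true} pass and a \emph{false} pass. For every equation I would add a \emph{constraint gadget} with entry and exit ports wired to the corresponding variable gadgets; its internal vertices can be visited using only weight-$1$ arcs exactly when the truth values picked by the adjacent variable passes satisfy the equation, and any violating traversal is forced to use at least one weight-$2$ arc. All remaining arcs of the complete graph receive weight $2$. A local repair argument then shows that every Hamiltonian tour can be transformed into a \emph{gadget-consistent} tour without increasing its weight, so it suffices to analyse those.

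The analysis has the usual two directions. For completeness, from an assignment satisfying all but $k$ equations one stitches the canonical passes together into a gadget-consistent Hamiltonian tour of weight exactly $N+k$. For soundness, from any tour of weight at most $N+k$ one first locally repairs it to be gadget-consistent, then reads off an assignment and argues gadget by gadget that it satisfies all but at most $k$ equations. Plugging in the Berman--Karpinski $(c,s)$ gap together with the precise counts of gadgets per variable and per equation, one optimises the parameters of the reduction to obtain the ratio $321/320-\eps$ and hence the $1.0031-\eps$ bound of the theorem.

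The main obstacle, and the place where one gains an order of magnitude over Engebretsen's earlier $2805/2804$ bound, is the joint design and accounting of the gadgets. They must simultaneously be \emph{small}, so that the additive overhead $N$ does not dilute the hardness gap; \emph{rigid}, so that no tour can cheaply satisfy several otherwise violated equations by re-using a single weight-$2$ arc; and \emph{compatible} with the bounded-occurrence structure of the Hybrid problem, so that each constraint gadget has only constantly many neighbours and the reduction remains size-preserving. It is the tight balancing of weight-$1$ arcs per gadget against the per-equation penalty that dictates whether the numerical bound comes out as $321/320$ rather than the weaker value derivable from a MAX-$3$SAT based reduction.
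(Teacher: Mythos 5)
First, note that the paper you are working from does not prove this statement at all: it is quoted verbatim from Engebretsen and Karpinski [EK06] and used as a black box (only to derive, via Theorem~\ref{sspredminmax}, the previously best lower bound for MAX-ATSP that the present paper then improves). So there is no in-paper proof to compare against; the relevant benchmark is the argument in [EK06] itself.

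Your plan does describe the right general strategy of that reference -- a gap-preserving reduction from a bounded-occurrence system of linear equations over $\mathbb{F}_2$ (a Hybrid-type instance obtained from H{\aa}stad's MAX-E3-LIN via the Berman--Karpinski amplification), with variable gadgets admitting two canonical traversals, equation gadgets that force a weight-$2$ arc exactly when violated, and a local-repair/consistency argument on both sides. However, as a proof it has a genuine gap: every piece of quantitative content is asserted rather than established. You never exhibit the gadgets, never count their vertices or weight-$1$ arcs, and never carry out the completeness/soundness accounting; instead you write that ``one optimises the parameters \dots to obtain the ratio $321/320-\epsilon$.'' But the specific constant \emph{is} the theorem -- a qualitative version of your outline only yields \clAPX-hardness, which was already known from Papadimitriou and Yannakakis. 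To actually obtain $321/320$ one must pin down the exact cost of the reduction per equation of the Berman--Karpinski instance (which has $42\nu$ variables, $60\nu$ size-two equations, $2\nu$ size-three equations, and a satisfiability gap of $(1-2\epsilon)\nu$): the gadgets must be small enough that the total tour length in the YES case is $320\nu+O(\epsilon\nu)$, so that the additive gap of $\nu$ unsatisfied equations becomes the multiplicative gap $321/320$. Without the explicit constructions and this arithmetic, the soundness direction in particular cannot be checked -- the danger in such reductions is precisely that a single weight-$2$ arc might ``pay for'' several violated equations at once, or that a tour might enter a constraint gadget through a non-canonical port, and ruling this out requires the concrete gadget, not a description of its intended properties.
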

\noindent
According to Theorem~\ref{sspredminmax}, it implies the  hardness result for
the MAX-ATSP problem stated below. 
\begin{corollary}
For any constant $\epsilon > 0$, it is \clNP-hard to approximate 
the MAX-ATSP problem within $1.0031-\epsilon$.
\end{corollary}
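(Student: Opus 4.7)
The plan is to derive the stated corollary as a direct contrapositive application of Vishwanathan's reduction (Theorem~\ref{sspredminmax}) to the MIN-$(1,2)$-ATSP hardness of~\cite{EK06}. Since both ingredients are already stated in the excerpt, the argument reduces to a small arithmetic check around the fixed point $\alpha=1$ of the map $\alpha\mapsto 2-\alpha$.

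First I would assume, toward a contradiction, the existence of a polynomial time $\beta$-approximation algorithm for MAX-ATSP with $\beta=1.0031-\epsilon$ for some fixed $\epsilon>0$. Writing $\beta=1/\alpha$ with $\alpha=1/(1.0031-\epsilon)$, Theorem~\ref{sspredminmax} supplies a polynomial time $(2-\alpha)$-approximation algorithm for MIN-$(1,2)$-ATSP. The next step is to verify that $2-\alpha$ is strictly smaller than $1.0031-\epsilon'$ for some $\epsilon'>0$ depending on $\epsilon$. Concretely, one computes
\[
2-\alpha \;=\; 2-\frac{1}{1.0031-\epsilon} \;<\; 1.0031-\epsilon',
\]
which holds because the function $f(\beta)=2-1/\beta$ has derivative $1/\beta^2=1$ at $\beta=1$, so near $\beta=1$ the quantity $f(\beta)$ lies below $\beta$ by a term of order $(\beta-1)^2$; thus a gain of $\epsilon$ on the MAX-ATSP side translates into a gain of at least $\epsilon'=\epsilon-O((1.0031-1)^2)>0$ on the MIN-$(1,2)$-ATSP side (choosing $\epsilon$ small enough makes $\epsilon'$ positive). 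This produces a polynomial time approximation algorithm for MIN-$(1,2)$-ATSP with ratio strictly below $1.0031$, contradicting the theorem of Engebretsen and Karpinski~\cite{EK06}.

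There is essentially no technical obstacle here, because both the reduction and the lower bound it is applied to are black-box inputs; the only thing to be careful about is the order of quantifiers on $\epsilon$. Specifically, I would phrase the argument so that the $\epsilon>0$ on the MAX-ATSP side is fixed first, then used to exhibit an explicit $\epsilon'>0$ on the MIN-$(1,2)$-ATSP side; the reverse direction would be vacuous. Once this is done, the corollary follows immediately, and I would conclude by remarking that the hardness factor $1.0031$ is preserved (up to arbitrarily small slack) precisely because $\alpha\mapsto 2-\alpha$ fixes $1$, so near-optimal approximation ratios transfer losslessly between the minimization and maximization variants under Vishwanathan's reduction.
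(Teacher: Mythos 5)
Your proposal is correct and follows exactly the paper's (one-line) argument: compose Vishwanathan's reduction (Theorem~\ref{sspredminmax}) with the Engebretsen--Karpinski lower bound for MIN-$(1,2)$-ATSP. One small remark on your bookkeeping: since $2-1/\beta=\beta-(\beta-1)^2/\beta<\beta$ for every $\beta>1$, the slack transferred to the minimization side is in fact $\epsilon'=\epsilon+(\beta-1)^2/\beta\geq\epsilon$ (the quadratic correction helps rather than hurts), so there is no need to choose $\epsilon$ small.
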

\subsection{Hybrid Problem}

In their paper on approximation hardness of bounded occurrence instances of several
combinatorial optimization problems, Berman and Karpinski~\cite{BK99} introduced the 
Hybrid problem and proved that this problem is \clNP-hard to approximate
with some constant. 

\begin{definition}[Hybrid problem] 
Given a system of linear
equations mod 2 containing n variables, $m_2$ equations with exactly two variables, and $m_3$
equations with exactly three variables, find an assignment to the variables that satisfies as
many equations as possible.
\end{definition}
\noindent
In the aforementioned paper, Berman and Karpinski  proved the following
hardness result.
\begin{theorem}[\cite{BK99}]\label{ssphybridsatz}
For any constant $\epsilon > 0$, there exists instances of the Hybrid problem with $42\nu$
variables, $60\nu$ equations with exactly two variables, and $2\nu$ equations with exactly three
variables such that:
\begin{enumerate}
\item[$(i)$] Each variable occurs exactly three times.
\item[$(ii)$] Either there is an assignment to the variables that leaves at most $\epsilon\nu$ equations unsatisfied,
or else every assignment to the variables leaves at least $(1-\epsilon)\nu$ equations
unsatisfied.
\item[$(iii)$] It is \clNP-hard to decide which of the two cases in item $(ii)$ above holds.
\end{enumerate}
\end{theorem}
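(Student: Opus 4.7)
The plan is to derive the hybrid instance from Håstad's optimal hardness for Max-E3-Lin-2 using the expander-amplifier technique of Berman and Karpinski. Håstad's theorem provides, for any $\delta>0$, systems of linear equations modulo $2$ with exactly three variables per equation that are \clNP-hard to distinguish between the $(1-\delta)$-satisfiable case and the case where no assignment satisfies more than a $(1/2+\delta)$-fraction. The $2\nu$ triple-equations of the hybrid instance will carry a scaled copy of such a hard system; the $60\nu$ pair-equations will be added as equality constraints to enforce consistency of the many copies of each original variable that are introduced in order to drop the per-variable occurrence down to exactly $3$.

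First I would bring Håstad's instance into a regular form, so that every original variable appears the same number of times across equations. Then each variable $x$ is replaced by a collection of new variables, one per occurrence, so that every new variable appears exactly once in the triple-equation block. To glue together the copies of $x$, one places a constant-degree expander (or, more generally, a Berman--Karpinski amplifier) on them and adds, as pair-equations, an equality $x_i=x_j$ for every edge. Tuning the degree of the regularized E3-Lin-2 instance and of the amplifier so that every new variable has total incidence exactly $3$ yields condition $(i)$, and a direct count of variable-slots, $3\cdot 2\nu+2\cdot 60\nu=126\nu=3\cdot 42\nu$, confirms the stated parameters.

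The soundness claim $(ii)$ uses the expansion property of the amplifier. Given any assignment to the new variables, for each original $x$ I flip the minority value among its copies to the majority; the spectral gap of the amplifier bounds the number of previously satisfied pair-equations destroyed by this flip by an $O(1)$ factor times the number of pair-equations that were originally unsatisfied. Hence, up to an additive $O(\epsilon\nu)$ loss in the number of satisfied equations, one may assume that all copies of every original variable agree, and the induced assignment to the underlying Håstad instance then leaves at least a $(1/2-\delta)$-fraction of the $2\nu$ triple-equations unsatisfied, yielding at least $(1-\epsilon)\nu$ total violations. Completeness is immediate: the consistent extension of a near-satisfying assignment leaves at most $\delta\cdot 2\nu\le \epsilon\nu$ equations unsatisfied. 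Together with Håstad's \clNP-hardness, this gives claim $(iii)$.

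The principal technical obstacle is realizing the prescribed balance $42\nu/60\nu/2\nu$ together with the \emph{exactly three} occurrences condition while simultaneously preserving a $\Theta(1)$ PCP gap. This requires choosing the regularized Håstad instance and the amplifier so that every new variable receives precisely the intended occurrence budget, and it forces the amplifier to have simultaneously small constant degree (to respect the occurrence budget) and strong enough spectral expansion (so that the majority-flip rounding step loses only $O(\epsilon\nu)$ equations). Getting both properties out of a single explicit construction is the delicate point; once this balancing is carried out, the three claims of the theorem follow from Håstad's completeness and soundness together with the loss analysis of the rounding step.
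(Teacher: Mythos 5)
This theorem is not proved in the paper at all: it is imported verbatim from Berman and Karpinski~\cite{BK99}, and the paper only summarizes the resulting structure (the circles $C^x$ on $7t_x$ vertices, the matchings $M^x$, and the contact/checker distinction) in order to use it. Your sketch does follow the same general strategy as the cited proof --- start from H{\aa}stad's MAX-E3-LIN hardness, replace each variable by copies, and glue the copies with equality constraints carried by an amplifier --- so at the level of strategy you have reconstructed the right reduction.

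However, two points that you defer are precisely the substance of the theorem, and as stated your version of each would fail. First, the amplifier is never instantiated. The parameters $42\nu$, $60\nu$, $2\nu$ and the exactly-three-occurrences condition come from a specific gadget: for a variable occurring $t_x$ times one takes a cycle on $7t_x$ vertices (contributing $7t_x$ circle equations), designates every seventh vertex a contact (these absorb the hyperedge occurrence), and places a perfect matching on the $6t_x$ checkers (contributing $3t_x$ matching equations); summing over $\sum_x t_x = 6\nu$ occurrences gives $42\nu$ variables and $42\nu + 18\nu = 60\nu$ pair equations. A generic $3$-regular expander on the copies cannot work, because the contact vertices must have degree only $2$ in the pair-equation graph to leave room for the hyperedge. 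Your ``principal technical obstacle'' paragraph acknowledges this but does not resolve it, and resolving it is the construction. Second, your soundness rounding is too lossy. You flip each variable's copies to the majority and bound the damage by ``an $O(1)$ factor times the number of pair-equations that were originally unsatisfied''; a constant-factor loss would turn the soundness bound into $(c-\epsilon)\nu$ for some $c<1$ rather than the claimed $(1-\epsilon)\nu$, and the gap location is exactly what the downstream inapproximability constants depend on. The Berman--Karpinski argument avoids any loss: the wheel satisfies the combinatorial amplifier property that every cut $(S,\bar S)$ of its vertex set is crossed by at least $\min(|S\cap K|,|\bar S\cap K|)$ edges, where $K$ is the set of contacts, so making the assignment consistent on a wheel repairs at least as many pair equations as the number of hyperedges it can break. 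This inequality with constant exactly $1$ is what a spectral-gap bound does not give you, so the rounding step needs the combinatorial property, not expansion in the eigenvalue sense.
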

\noindent
Analyzing the details of their construction, it can be seen that
every instance of the Hybrid problem produced by it has an even more special structure. The equations
containing three variables are of the form $x \oplus y \oplus z = \{0, 1\}$. 
Those equations arise from the Theorem of H{\aa}stad~\cite{H01} concerning the hardness
of approximating equations with exactly three variables called the MAX-E3-LIN problem,
which can be seen as a special instance of the Hybrid problem.\\
\begin{figure}[h]
\begin{center}
\fbox{ 
\input{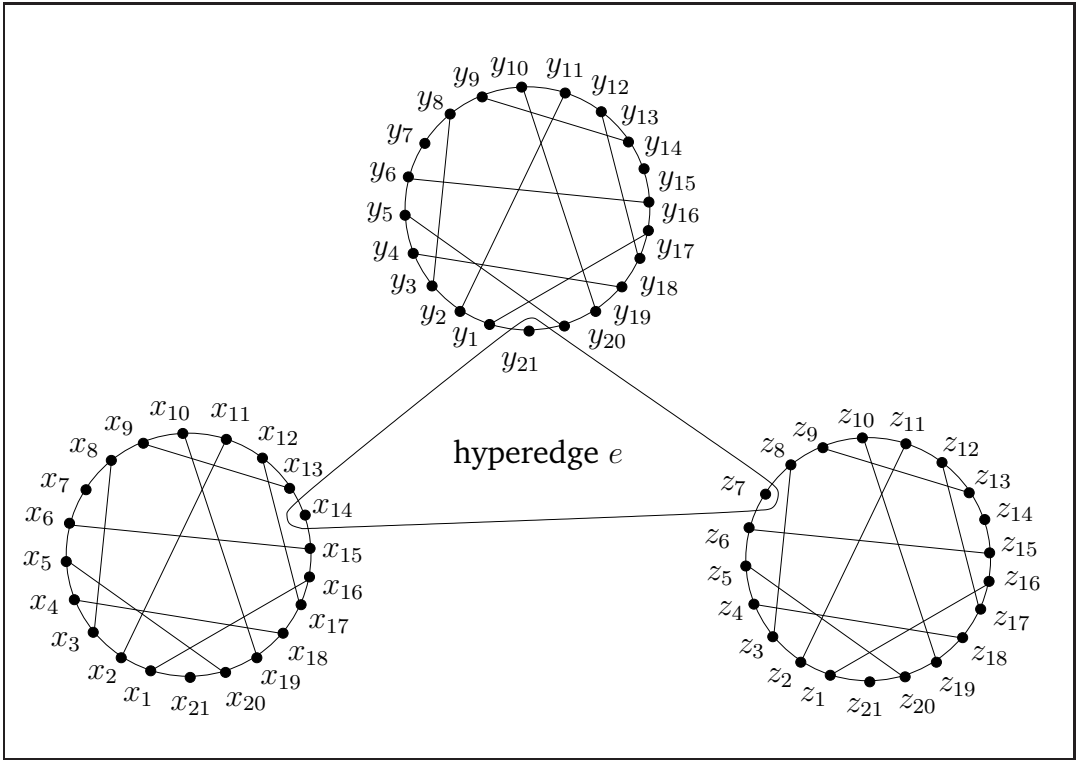}
}
\end{center}
\caption{ An example  of a Hybrid instance with circles 
$C^x$, $C^y$, $C^z$, and hyperedge $e=\{z_7,y_{21},x_{14}\}$.}
\label{fig:hybridssp}
\end{figure} 
\\
For every variable $x$ of the original instance $\mathcal{E}_3$ of the MAX-E3-LIN problem, they introduced a corresponding set  
of variables $V_x$. If the variable $x$ occurs $t_x$ times in $\mathcal{E}_3$, then, $V_x$ contains
$7t_x$ variables $x_1,\ldots, x_{7t_x}$. Furthermore, the variables in $V_x$ are connected 
by equations of the form $x_i\oplus x_{i+1}=0$ with $i\in [7t_x-1]$ and $x_1\oplus x_{7t_x}=0$.
This construction induces the circle $C^x$ on the variables $V_x$. In addition to it,
every circle $C^x$ possesses an associated matching $M^x$. The variables contained in
$Con(V_x)=\{x_i \mid i\in \{7\nu\mid \nu \in [t_x]\}  \}$ are called \emph{contact variables}, whereas
the variables in $V_x\backslash Con(V_x)$ are called \emph{checker variables}.

 Let $\mathcal{E}_3$ be an instance of the MAX-E3-LIN problem and $\cH$ be its corresponding instance of the
Hybrid problem. We denote by $V(\mathcal{E}_3)$ the set of variables which occur in the instance $\mathcal{E}_3$. 
Then, $\cH$ can be represented graphically by $|V(\mathcal{E}_3)|$ circles $C^x$ with $x\in V(\mathcal{E}_3)$ containing
the variables $V(C^x)=\{x_1, \ldots,x_{n^x}\}$ as vertices. The edges are identified
by the equations  included in  $\cH$. The equations with exactly three variables are represented by
hyperedges $e$ with cardinality $|e|=3$. 
The equations 
 $x_i\oplus x_{i+1}=0$ induce a circle containing 
 the vertices $\{x_1, \ldots,x_{n^x}\}$ and the matching equations $x_i\oplus x_{j}=0$ with $\{i,j\}\in M^x$
  induce a perfect matching on the set of checker variables. 
  An example of an instance of the Hybrid problem is depicted in Figure\ref{fig:hybridssp}.
   
  In summary, we notice that
 there are four type of equations in 
the Hybrid problem $(i)$ the circle equations $x_i\oplus x_{i+1}=0$ with $i\in [7t_x-1]$,  
$(ii)$ circle border equations $x_1\oplus x_{7t_x}$, $(iii)$ matching equations
$x_i\oplus x_j=0$ with $\{i,j\}\in M^x$, and $(iv)$ equations with three variables 
of the form $x\oplus y \oplus z=\{0,1\}$.\\
In the remainder, we  assume that equations of the form 
 $x \oplus y \oplus z = \{0, 1\}$ contain only unnegated variables due to
 the transformation $\bar{x} \oplus y \oplus z = 0 \equiv x \oplus y \oplus z = 1$.

\section{Our Contribution}
We now formulate our main result.
\begin{theorem}\label{hauptsatzssp}
Given an instance $\cH$ of the Hybrid problem  with $n$ circles,
$m_2$ equations with two variables and  $m_3$ equations with exactly
three variables with the properties described in Theorem~\ref{ssphybridsatz},
we construct in polynomial time an instance $\cS_{\cH}$ of
the Shortest Superstring problem and Maximal Compression problem with the following properties: 
\begin{enumerate}
\item[$(i)$] If there exists an assignment $\phi$ to the variables of $\cH$ which leaves at most  $u$ equations
unsatisfied, then, there exist a superstring $s_{\phi}$ for $\cS_{\cH}$  with  length
at most $|s_{\phi}|=5m_2+16 m_3+7n+u$.   
\item[$(ii)$] From every superstring $s$ for $S_{\cH}$  with length $|s|=5m_2+16 m_3+u+7n$,
we can construct in polynomial time  
an assignment $\psi_s$ to the variables
of $\cH$ that leaves at most $u$ equations in $\cH$ unsatisfied.
\item[$(iii)$] If there exists an assignment $\phi$ to the variables of $\cH$ which leaves at most  $u$ equations
unsatisfied, then, there exist a superstring $s_{\phi}$ for $\cS_{\cH}$  with  compression 
at least $comp(S_{\cH},s_{\phi})=3m_2+ 12 m_3-u+5n$ .
\item[$(iv)$] From every superstring $s$ for $S_{\cH}$  with  compression  
$comp(S_{\cH},s)=3m_2+ 12 m_3-u+5n$,
we can construct in polynomial time  
an assignment $\psi_s$ to the variables
of $\cH$ that leaves at most $u$ equations in $\cH$ unsatisfied.   
\item[$(v)$] The maximal orbit size of the instance $\cS_{\cH}$ is $8$ and the length 
 of each string in $\cS_{\cH}$ is  $4$.
\end{enumerate}
\end{theorem}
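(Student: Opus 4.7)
The plan is to design, for each equation type in the Hybrid instance $\cH$, a small ``gadget'' consisting of length-$4$ strings over an alphabet that partitions cleanly between gadgets, so that overlaps between strings of different gadgets are forbidden except at designated interfaces representing shared variables. Three kinds of gadgets are needed: one for two-variable equations (circle, border, or matching), one for three-variable hyperedges, and one ``closure'' per circle $C^x$. The parameters to aim for are total string length $8m_2+28m_3+12n$, canonical minimum superstring length $5m_2+16m_3+7n$, and canonical maximum compression $3m_2+12m_3+5n$, so that any imperfection in the global assignment inflates length (and deflates compression) by exactly $u$.

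\textbf{Gadget families.} A two-variable equation $x_i\oplus x_j=0$ contributes $2$ length-$4$ strings whose ``satisfying'' merge saves $3$ characters (contributing $5$ to length, $3$ to compression), while any ``unsatisfying'' merge saves only $2$. A three-variable equation $x\oplus y\oplus z=c$ contributes $7$ length-$4$ strings whose six merges in a satisfying arrangement save $12$ characters in total ($16$ to length, $12$ to compression); every unsatisfying arrangement forfeits at least one additional overlap. The closure of a circle $C^x$ contributes $3$ length-$4$ strings yielding $7$ to length and $5$ to compression when the circle assignment is consistent. The per-gadget bookkeeping then sums to the parameters above.

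\textbf{Completeness, parts $(i)$ and $(iii)$.} Given an assignment $\phi$ leaving $u$ equations unsatisfied, concatenate every gadget in its canonical order, using the satisfying merge at each satisfied equation and the forced shorter merge at each of the $u$ unsatisfied ones. Summing the per-gadget contributions gives $|s_\phi|\le 5m_2+16m_3+7n+u$, and since length plus compression equals the fixed total character count, this equivalently yields $\mathrm{comp}(\cS_{\cH}, s_\phi)\ge 3m_2+12m_3+5n-u$.

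\textbf{Soundness, parts $(ii)$ and $(iv)$.} The heart of the argument is a structural lemma: any superstring $s$ attaining the claimed length bound decomposes, because of the alphabet separation and the length-$4$ restriction, into gadget blocks whose intra- and inter-gadget overlaps come from a short finite list of ``legal'' patterns. Each gadget's legal patterns are in bijection with a truth value of its variable(s), defining a candidate assignment $\psi_s$. If $\psi_s$ is locally inconsistent across the three gadgets of some variable, a conflict-resolution step (taking the majority value) can be shown to charge at least one excess character per conflict, already paid for by the excess $u$ over the canonical minimum length. Accounting for excess characters from both unsatisfying merges and conflict resolutions then bounds the number of equations unsatisfied by $\psi_s$ by $u$. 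The extraction procedure is polynomial.

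\textbf{Orbit size and string length, part $(v)$.} Length $4$ is immediate. Characters split into ``private'' characters that occur only inside a single gadget (bounded by the gadget size, at most $7$) and ``interface'' characters that bridge a variable across its exactly three occurrences in $\cH$; tuning how often each interface character appears per gadget keeps the total at most $8$. The main obstacle I anticipate is the structural lemma in the soundness step: one must rule out exotic superstrings that splice fragments of distinct gadgets in unintended ways. My plan is to resolve this by giving each gadget enough private characters to force it to form a single contiguous block in $s$, reducing the analysis to a finite case check of how two adjacent gadgets sharing a variable can be spliced at their common interface.
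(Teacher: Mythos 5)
Your plan reproduces the paper's architecture and its bookkeeping exactly: total character count $12n+8m_2+28m_3$, canonical length $7n+5m_2+16m_3$, canonical compression $5n+3m_2+12m_3$, one lost overlap per unsatisfied equation, and a normalization-plus-extraction argument for soundness. But the proposal asserts the gadgets rather than constructing them, and the construction is the actual content of the theorem. In the paper, a two-variable equation is realized by two cyclic rotations of a single length-4 string over variable-indexed letters ($x^{l1}_i x^{r1}_{i+1} x^{m0}_i x^{m0}_{i+1}$ and its rotation), whose two ``simple alignments'' each save $2$ internally; the third saved character comes only from interfacing with the neighboring gadget when the adjacent assignments agree. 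A three-variable equation needs two interlocking triples of length-4 strings ($S^\alpha$, $S^\beta$) plus a modified three-string gadget for one incident circle equation, and the paper only reaches length-4 strings here via a second refinement (Section~\ref{sec:improvapproach}) after first proving a weaker bound with length-6 strings. Verifying that these specific strings admit no unintended overlaps, that the saved characters are exactly as claimed in every constellation, and that the letter naming keeps every orbit at size $8$ is where all the work lies; none of it is present in your writeup.

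The most concrete error is in your soundness step: resolving an inconsistent variable by ``taking the majority value'' does not work. A variable's value is read off from the simple alignments of exactly the two circle-equation gadgets adjacent to it (two bits), so an inconsistency is a tie and majority is undefined. The paper instead defines $\psi_s$ by a case analysis on the whole constellation around a matching equation or a hyperedge (cases $(i)$--$(iv)$ for checker variables, $(i)$--$(iii)$ for contact variables), choosing which endpoint of an inconsistent pair to flip based on which neighboring equations the flip affects, and then checks in each case that the flip either gains a satisfied equation outright or is paid for by an overlap that the superstring has already forfeited. That per-constellation charging argument is the bulk of Sections 5.4.2--5.4.3 and cannot be compressed to ``a finite case check of how two adjacent gadgets can be spliced''; the splicing patterns interact across three gadgets at once (the two circle neighbors and the matching or hyperedge gadget), and the amortization must be verified globally so that no lost overlap is double-counted.
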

\noindent
The former theorem can be used to derive an explicit approximation lower bound for
the Shortest Superstring  problem by reducing instances of the Hybrid problem
of the form described in Theorem~\ref{ssphybridsatz} to  the Shortest Superstring  problem.
\begin{corollary}
For every $\epsilon >0$, it is  \clNP-hard to approximate the  Shortest Superstring  problem
with an approximation factor $\frac{333}{332}~( 1.00301)-\epsilon$.
\end{corollary}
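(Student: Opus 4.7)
The plan is to compose the gap hardness for the Hybrid problem (Theorem~\ref{ssphybridsatz}) with the length-preserving reduction of Theorem~\ref{hauptsatzssp}. Parts $(i)$ and $(ii)$ of the latter together pin down the optimum superstring length exactly: writing $\mathrm{opt}(\cH)$ for the minimum number of equations of $\cH$ left unsatisfied by any assignment, $(i)$ supplies the upper bound $\mathrm{opt}(\cS_\cH) \le 5m_2 + 16m_3 + 7n + \mathrm{opt}(\cH)$, while the contrapositive of $(ii)$ supplies the matching lower bound, since any strictly shorter superstring would be constructively converted into an assignment beating $\mathrm{opt}(\cH)$.

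Fix a small $\delta > 0$. Theorem~\ref{ssphybridsatz} will supply, in polynomial time, Hybrid instances $\cH$ with $n$ circles, $m_2 = 60\nu$ two-variable equations, and $m_3 = 2\nu$ three-variable equations, for which distinguishing the YES case $\mathrm{opt}(\cH) \le \delta\nu$ from the NO case $\mathrm{opt}(\cH) \ge (1-\delta)\nu$ is \clNP-hard. I would then run the polynomial-time reduction of Theorem~\ref{hauptsatzssp} on each such $\cH$ to obtain a Shortest Superstring instance $\cS_\cH$ whose optimum length is at most $5m_2 + 16m_3 + 7n + \delta\nu$ in the YES case and at least $5m_2 + 16m_3 + 7n + (1-\delta)\nu$ in the NO case. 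Any polynomial-time algorithm whose performance ratio on $\cS_\cH$ is strictly below the quotient of these two bounds would decide this \clNP-hard gap, forcing \clP~$=$~\clNP.

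Substituting the Berman--Karpinski values of $m_2$, $m_3$ and $n$ into the baseline $5m_2 + 16m_3 + 7n$ and letting $\delta \to 0$, the limiting gap ratio collapses to $\frac{333}{332}$, yielding the claimed hardness factor $\frac{333}{332} - \eps$ for every $\eps > 0$. The argument is structurally a direct pipeline and contains no conceptually difficult step: the two previously stated theorems carry all the combinatorial content. The one part requiring genuine attention is the final arithmetic, namely verifying that the substitution of $m_2$, $m_3$ and $n$ into $5m_2 + 16m_3 + 7n$ produces exactly the limiting ratio $\frac{333}{332}$ once $\delta$ is driven to zero; this is the main place where the constants must be handled carefully.
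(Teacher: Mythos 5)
Your pipeline (gap instance of the Hybrid problem $\to$ Theorem~\ref{hauptsatzssp} $\to$ ratio of the two optimum bounds) is the same as the paper's, but there is a genuine gap at exactly the point you flag as ``the final arithmetic'': substituting the Berman--Karpinski parameters into $5m_2+16m_3+7n$ does \emph{not} collapse to $333/332$. With $m_2=60\nu$ and $m_3=2\nu$ you get $5m_2+16m_3=332\nu$, but the additive term $7n$ is not negligible: $n$ is the number of circles, i.e.\ the number of variables of the underlying MAX-E3-LIN instance $\mathcal{E}_3$, which can be as large as $6\nu$ (three variables per each of the $2\nu$ three-variable equations). The gap ratio you would actually obtain is $\frac{332\nu+7n+(1-\delta)\nu}{332\nu+7n+\delta\nu}$, which for $n=\Theta(\nu)$ is bounded away from $\frac{333}{332}$ (in the worst case only about $\frac{375}{374}$). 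Nothing in Theorem~\ref{ssphybridsatz} forces $n=o(\nu)$, so ``letting $\delta\to 0$'' alone cannot rescue the claimed constant.

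The missing idea is an amplification step that kills the $7n$ term. The paper first fixes a constant $k$ and replaces $\mathcal{E}_3$ by $k$ copies of itself \emph{over the same variable set} before building $\cH$: the equation counts become $60\nu k$ and $2\nu k$, and the gap becomes $\approx \nu k$, while the number of circles stays bounded by $6\nu$ (one circle per variable of $\mathcal{E}_3$, each circle merely growing longer). Then $7n \le 42\nu = \frac{42}{k}\cdot \nu k$, and the ratio becomes $\frac{333-\delta}{332+\delta+\frac{42}{k}}$, which exceeds $\frac{333}{332}-\epsilon$ for suitable constants $k$ and $\delta$. Without this copying (or some other argument showing $n=o(m_3)$), your proof only establishes a strictly weaker inapproximability factor.
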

\begin{proof}
First of all, we choose constants $k\in \mathbb{N}$ and  $\delta >0$ such that  
$\frac{333-\delta}{332+\delta +\frac {42}k}\geq
\frac{333}{332}-\epsilon$ holds. Given an instance $\mathcal{E}_3$ of the MAX-E3-LIN problem,
we generate $k$ copies of $\mathcal{E}_3$ and produce an instance 
 $\cH$ of the Hybrid problem. Then, 
we construct the corresponding instance $\cS_{\cH}$ of the Shortest Superstring problem
 with the properties described in Theorem~\ref{hauptsatzssp}.
We conclude according to Theorem~\ref{ssphybridsatz} that there exist a superstring
for $\cS_{\cH}$ 
with length at most 
$$5\cdot60\nu k+16\cdot2\nu k+\delta\nu k+ 7n \leq(332+\delta+\frac {7n}{k\nu} )\nu k
 \leq(332+\delta+\frac {7\cdot 6}k )\nu k$$ 
or the length of a superstring for $\cS_{\cH}$  is bounded from below by  
$$5\cdot60\nu k+16\cdot2\nu k+(1-\delta)\nu k+ 7n \geq (332+(1-\delta) )\nu k \geq (333-\delta)\nu k .$$
From Theorem~\ref{ssphybridsatz}, we know that the two cases above are \clNP-hard
to distinguish.
Hence, for every $\epsilon>0$, it is  \clNP-hard to find a solution to
 the  Shortest Superstring problem with an approximation ratio
 $\frac{333-\delta}{332+\delta +\frac {42}k}\geq
\frac{333}{332}-\epsilon$.
\end{proof}
\noindent
Analogously, Theorem~\ref{hauptsatzssp} can be used to derive an approximation lower bound for the Maximal Compression
problem. 
\begin{corollary}
For every $\epsilon >0$, it is  \clNP-hard to approximate the  Maximal Compression  problem
with an approximation factor $204/203~( 1.00492)-\epsilon$.
\end{corollary}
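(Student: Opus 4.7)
\noindent
The plan is to replay the argument used in the preceding corollary for the Shortest Superstring problem, but this time invoking items $(iii)$ and $(iv)$ of Theorem~\ref{hauptsatzssp} in place of items $(i)$ and $(ii)$. Since those items give a two-sided correspondence between the number of unsatisfied equations in the Hybrid instance $\cH$ and the compression of a superstring for $\cS_{\cH}$, the same gap-amplification reduction from MAX-E3-LIN that produced the Shortest Superstring hardness bound will directly produce a hardness bound for Maximal Compression, with different numerical constants coming from the coefficients $3m_2+12m_3+5n$ rather than $5m_2+16m_3+7n$.

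Concretely, I would first fix $k\in\mathbb{N}$ and $\delta>0$ so that
\[
\frac{204-\delta}{203+\delta+\tfrac{30}{k}}\ \geq\ \frac{204}{203}-\epsilon,
\]
then take $k$ copies of a MAX-E3-LIN instance $\mathcal{E}_3$ to build a Hybrid instance $\cH$ with $m_2=60\nu k$, $m_3=2\nu k$ and $n\leq 6\nu$ circles (exactly as in the previous corollary), and finally apply Theorem~\ref{hauptsatzssp} to obtain $\cS_{\cH}$. Combining Theorem~\ref{ssphybridsatz} with item $(iii)$ yields, in the YES case, a superstring whose compression is at least $3\cdot 60\nu k+12\cdot 2\nu k-\delta\nu k+5n=(204-\delta)\nu k+5n$; dually, by item $(iv)$, in the NO case the maximum compression over all superstrings is at most $3\cdot 60\nu k+12\cdot 2\nu k-(1-\delta)\nu k+5n=(203+\delta)\nu k+5n$. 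Any polynomial-time algorithm approximating Maximal Compression within a ratio strictly smaller than the quotient of these two values would decide the \clNP-hard gap problem of Theorem~\ref{ssphybridsatz}, and by the choice of $k$ and $\delta$ that quotient is at least $\frac{204}{203}-\epsilon$.

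The one delicate point, and the only place where the calculation is not completely mechanical, is the additive $5n$ contribution: it appears in both the YES and the NO value and therefore pulls the ratio towards $1$. The main obstacle is thus ensuring that $n$ is small relative to $\nu k$, so that $5n/(\nu k)$ is a vanishing function of the blow-up factor $k$. Reusing the bound $n\leq 6\nu$ from the previous corollary gives $5n/(\nu k)\leq 30/k$, which is absorbed by taking $k$ sufficiently large; once this book-keeping is carried out, the rest of the argument reduces to the ratio inequality above.
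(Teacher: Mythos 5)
Your proposal is correct and is exactly the argument the paper intends: the paper gives no separate proof of this corollary, stating only that it follows ``analogously'' from Theorem~\ref{hauptsatzssp}, and your computation is precisely that analogue, with the right constants ($3\cdot 60+12\cdot 2=204$ from items $(iii)$--$(iv)$, the bound $n\leq 6\nu$ carried over, and the additive $5n$ term correctly absorbed into the $30/k$ slack). No gaps.
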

\noindent
By applying Fact~\ref{redmaxtspmaxcomp}, we obtain the following hardness result for the 
MAX-ATSP problem.
\begin{corollary}
For every $\epsilon >0$, it is  \clNP-hard to approximate the  MAX-ATSP  problem
with an approximation factor $204/203~( 1.00492)-\epsilon$.
\end{corollary}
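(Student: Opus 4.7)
The plan is to obtain the MAX-ATSP lower bound by contraposition from the Maximal Compression hardness result established in the immediately preceding corollary, using the approximation-preserving reduction stated in Fact~\ref{redmaxtspmaxcomp}. That fact asserts that any $\alpha$-approximation algorithm for MAX-ATSP can be turned, in polynomial time, into an $\alpha$-approximation algorithm for the Maximal Compression problem with exactly the same ratio. Consequently, any inapproximability factor known for Maximal Compression automatically transfers to MAX-ATSP without any loss.

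Concretely, I would fix $\epsilon > 0$ and assume, for the sake of contradiction, that there exists a polynomial time $(204/203 - \epsilon)$-approximation algorithm $\mathcal{A}$ for the MAX-ATSP problem. Composing $\mathcal{A}$ with the reduction guaranteed by Fact~\ref{redmaxtspmaxcomp} produces a polynomial time algorithm that approximates the Maximal Compression problem within the same factor $204/203 - \epsilon$, contradicting the corollary just proved above. Hence no such $\mathcal{A}$ can exist unless \clP~$=$~\clNP, and the claimed hardness for MAX-ATSP follows.

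No substantive obstacle arises here: the entire argument is a one-line chaining of two already-established results, namely the Maximal Compression inapproximability bound and the ratio-preserving reduction from MAX-ATSP to Maximal Compression. The only point worth double-checking is the direction of the reduction in Fact~\ref{redmaxtspmaxcomp}, which runs from MAX-ATSP to Maximal Compression and is therefore precisely the direction needed for the hardness to propagate from Maximal Compression back to MAX-ATSP.
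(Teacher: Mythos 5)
Your proposal is correct and matches the paper's own derivation exactly: the paper obtains this corollary by applying Fact~\ref{redmaxtspmaxcomp} to the Maximal Compression lower bound of $204/203-\epsilon$, which is precisely your contrapositive argument. The direction of the reduction is indeed the right one, so nothing further is needed.
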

%
\section{Reduction from the Hybrid Problem}
In this section, we present the proof of 
 Theorem~\ref{hauptsatzssp}. 
But before we describe the approximation preserving reduction, we first 
prove a slightly weaker result. The first approach uses 
strings with length $6$ simulating equations with exactly three variables. 
In section~\ref{sec:improvapproach}, we 
will introduce smaller gadgets for equations with exactly three variables implying
the claimed inapproximability results.\\
\\  
Let us state the properties of our first approach. 
\begin{theorem}\label{hauptsatzsspI}
Given an instance $\cH$ of the Hybrid problem  with $n$ circles,
$m_2$ equations with two variables and  $m_3$ equations with exactly
three variables with the properties described in Theorem~\ref{ssphybridsatz},
we construct in polynomial time an instance $\cS_{\cH}$ of
the Shortest Superstring problem and Maximal Compression problem with the following properties: 
\begin{enumerate}
\item[$(i)$] If there exists an assignment $\phi$ to the variables of $\cH$ which leaves at most  $u$ equations
unsatisfied, then, there exist a superstring $s_{\phi}$ for $\cS_{\cH}$  with  length
at most $|s_{\phi}|=5m_2+22 m_3+7n+u$.   
\item[$(ii)$] From every superstring $s$ for $S_{\cH}$  with length $|s|=5m_2+22 m_3+u+7n$,
we can construct in polynomial time  
an assignment $\psi_s$ to the variables
of $\cH$ that leaves at most $u$ equations in $\cH$ unsatisfied.
\item[$(iii)$] If there exists an assignment $\phi$ to the variables of $\cH$ which leaves at most  $u$ equations
unsatisfied, then, there exist a superstring $s_{\phi}$ for $\cS_{\cH}$  with  compression 
at least $comp(S_{\cH},s_{\phi})=3m_2+ 14 m_3-u+5n$ .
\item[$(iv)$] From every superstring $s$ for $S_{\cH}$  with  compression  
$comp(S_{\cH},s)=3m_2+ 14 m_3-u+5n$,
we can construct in polynomial time  
an assignment $\psi_s$ to the variables
of $\cH$ that leaves at most $u$ equations in $\cH$ unsatisfied.   
\item[$(v)$] The maximal orbit size of the instance $\cS_{\cH}$ is eight and the length 
 of a string in $\cS_{\cH}$ is bounded by six.
\end{enumerate}
\end{theorem}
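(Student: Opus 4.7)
The plan is to reduce the Hybrid problem to the Shortest Superstring problem by a gadget construction with three kinds of gadgets, one for each equation type in Theorem~\ref{ssphybridsatz}. A \emph{circle-edge gadget} and a \emph{matching-edge gadget} simulate the two-variable equations using strings of length $4$, while a \emph{three-variable gadget} simulates the equations $x \oplus y \oplus z = c$ using strings of length up to $6$; each circle $C^x$ additionally contributes a constant amount of seven fresh anchor characters. In every gadget the strings admit a canonical ``satisfying'' overlap pattern whose superstring attains the claimed per-gadget length ($5$ for two-variable, $22$ for three-variable), while any locally unsatisfying configuration loses exactly one unit of overlap, so that a violated equation costs one extra character in the superstring.

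For part~$(i)$, given an assignment $\phi$ to the variables of $\cH$, I would traverse each circle $C^x$ in a consistent polarity and concatenate the canonical superstrings of the circle-edge gadgets along the traversal, inserting the canonical superstrings of the matching and three-variable gadgets at the appropriate contact variables. Each satisfied two-variable gadget contributes exactly $5$ characters, each satisfied three-variable gadget exactly $22$, each circle's anchor exactly $7$, and each equation violated by $\phi$ contributes exactly one extra character. Summing yields a superstring of length $5m_2 + 22m_3 + 7n + u$, proving~$(i)$. The reduction is designed so that the total length of all input strings equals $8m_2 + 36m_3 + 12n$, whence part~$(iii)$ follows from the identity $comp(\cS_{\cH},s) = \sum_{s_i \in \cS_{\cH}}|s_i| - |s|$.

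The main work is the converse direction, parts~$(ii)$ and~$(iv)$. Given a superstring $s$ for $\cS_{\cH}$ of length $5m_2 + 22m_3 + 7n + u$, I would first carry out a normalization step, showing that without increasing $|s|$ one may assume that $s$ decomposes into a concatenation of canonical gadget blocks threaded along the circles. This relies on the near-disjointness of the gadget alphabets together with the bounded orbit size: any cross-gadget overlap can save only a controlled number of characters and can be replaced locally by an equivalent canonical block. Once $s$ is in normal form, each gadget is in either its satisfying or one of its unsatisfying configurations, and the polarity chosen along each circle yields a well-defined assignment $\psi_s$; the number of unsatisfying gadgets is exactly the slack $u$, proving~$(ii)$. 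Part~$(iv)$ then follows from the same length--compression identity used for~$(iii)$.

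The hard step is the normalization. One must rule out exotic superstrings that interleave strings from unrelated gadgets to save characters in ways that correspond to no assignment at all. I would handle this by a careful case analysis of the possible overlaps between strings belonging to different gadgets, using that each shared character occurs at most eight times so that cross-gadget overlaps are short and few, and can always be unpacked into canonical blocks by a local rewriting argument. Property~$(v)$ is then verified directly from the construction: every gadget string has length at most $6$ by design, and a gadget-by-gadget count, using that every variable of $\cH$ occurs in exactly three equations, confirms that no character appears in more than eight strings of $\cS_{\cH}$.
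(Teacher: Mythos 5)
Your plan follows the paper's proof essentially step for step: the same per-equation gadgets with canonical $0$/$1$-alignments encoding truth values, the same accounting (total input length $12n+8m_2+36m_3$, cost $5$ per two-variable equation, $22$ per three-variable equation, $7$ per circle, one lost overlap per unsatisfied equation), the same length--compression identity for parts $(iii)$ and $(iv)$, and the same normalization of an arbitrary superstring into a concatenation of simple-aligned gadget blocks from which $\psi_s$ is read off. The only caveat is that the entire substance of the paper's argument for parts $(ii)$ and $(iv)$ --- the explicit gadget strings and the exhaustive constellation-by-constellation analysis showing that each local rewriting never increases the length while the induced assignment $\psi_s$ satisfies at least as many equations as there are overlapped letters --- is exactly the step you defer to ``a careful case analysis,'' so what you have is a faithful outline of the paper's proof rather than a self-contained one.
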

\noindent
Combining Theorem~\ref{ssphybridsatz} with Theorem~\ref{hauptsatzsspI}, we obtain 
the following explicit lower bound for the Shortest Superstring problem.
%
%
\begin{corollary}
It is  \clNP-hard to approximate  the Shortest Superstring problem
with an approximation factor less than $345/344~(1.0029)$.
\end{corollary}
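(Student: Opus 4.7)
My plan is to mirror the proof of the preceding $333/332$ corollary essentially verbatim, simply substituting Theorem~\ref{hauptsatzsspI} for Theorem~\ref{hauptsatzssp}. The only change between the two theorems that enters the arithmetic is the coefficient of $m_3$: the $16m_3$ appearing in item $(i)$ of Theorem~\ref{hauptsatzssp} is replaced by $22m_3$ in item $(i)$ of Theorem~\ref{hauptsatzsspI}. This turns the critical sum $5\cdot 60 + 16\cdot 2 = 332$ into $5\cdot 60 + 22\cdot 2 = 344$, and correspondingly changes the target ratio from $333/332$ to $345/344$.

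Concretely, given $\epsilon > 0$, I first fix $k \in \mathbb{N}$ and $\delta > 0$ with
$$\frac{345 - \delta}{344 + \delta + 42/k} \geq \frac{345}{344} - \epsilon.$$
Starting from an arbitrary MAX-E3-LIN instance $\mathcal{E}_3$, I take $k$ copies and feed them through the Berman--Karpinski construction to produce a Hybrid instance $\cH$ with $m_2 = 60\nu k$, $m_3 = 2\nu k$, and $n$ circles, and then apply the reduction of Theorem~\ref{hauptsatzsspI} to obtain the corresponding SSP instance $\cS_{\cH}$.

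The two cases of Theorem~\ref{ssphybridsatz} then translate, through items $(i)$ and $(ii)$ of Theorem~\ref{hauptsatzsspI}, into a YES-case length upper bound
$$|s| \leq 5\cdot 60\nu k + 22\cdot 2\nu k + 7n + \delta\nu k \leq (344 + \delta + 42/k)\,\nu k$$
and a NO-case length lower bound $|s| \geq (345 - \delta)\,\nu k$. Since Theorem~\ref{ssphybridsatz} makes the two cases \clNP-hard to distinguish, the claimed lower bound of $\frac{345}{344} - \epsilon$ follows immediately from the choice of $k$ and $\delta$.

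Beyond this routine recomputation, I do not anticipate any conceptual obstacle; the only detail requiring care is absorbing the additive $7n$ term into the $42/k$ slack by choosing $k$ large enough, exactly as was done in the preceding corollary.
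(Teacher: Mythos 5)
Your proposal is correct and is exactly the intended argument: the paper omits an explicit proof of this corollary (stating only that it follows by combining Theorem~\ref{ssphybridsatz} with Theorem~\ref{hauptsatzsspI}), and your recomputation mirrors the paper's own proof of the $333/332$ corollary with the single substitution $16m_3 \mapsto 22m_3$, yielding $5\cdot 60 + 22\cdot 2 = 344$ and the ratio $\frac{345-\delta}{344+\delta+42/k}$. The handling of the additive $7n$ term via the $42/k$ slack is identical to the paper's.
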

Before we proceed to the proof of Theorem~\ref{hauptsatzsspI}, we describe the 
reduction from a high-level view and try to build some intuition.

\subsection{Main Ideas and Overview}
Given an instance of the Hybrid problem $\cH$, we want to transform $\cH$ into an instance of the 
Shortest Superstring problem. Fortunately, the special structure of the linear 
equations in the Hybrid problem is particularly well-suited for our reduction,
since a part of the equations with two variables form a 
circle and every variable occurs exactly three times.
For every equation $g_{i+1}\equiv x_i\oplus x_{i+1}=0$ included in this circle, we introduce a set $S(g_{i+1})$ containing
two strings, which can be aligned advantageously in two natural ways. If those fragments 
corresponding to two successively following equations $x_{i-1}\oplus x_{i}=0$ and $x_i\oplus x_{i+1}=0$ use the same natural 
alignment, we are able to  overlap those fragments by
one letter. From a high level view, 
 we can construct  an associated superstring for each circle in $\cH$, which contains the natural aligned 
strings. 
In fact, we define for every equation  $g\in \cH$ an associated set
of strings $S(g)$ and the corresponding natural alignments. 
The instance $\cS_{\cH}$ of the Shortest Superstring problem is given by the union
of all sets $S(g)$.
Due to the construction of the sets $S(g)$, there is a particular
way to interpret an alignment of the strings in $S(g)$ included in  the resulting 
superstring as an assignment to the variables in the
Hybrid instance. 
The major challenge in the proof of
correctness is to prove that every superstring for $\cS_{\cH}$ can be interpreted as an
assignment to the variables in the Hybrid instance $\cH$ with the property that the number of
satisfied equations is connected to the length of the superstring.

\subsection{ Constructing $\mathcal{S}_{\cH} $ from $\cH$ }\label{sspconstr}
Given an instance of the Hybrid problem $\cH$, we are going to construct the corresponding
instance $\mathcal{S}_{\cH} $ of the Shortest Superstring problem. 
Furthermore, we introduce some notations and conventions.\\
\\
For every 
equation $g\in \cH$, we define a set $S(g)$ of corresponding strings.
The corresponding instance $\mathcal{S}_{\cH}$ of the Shortest Superstring problem is given by 
 $\mathcal{S}_{\cH} = \bigcup\limits_{g \in \mathcal{H} } S(g) $.
The strings in the set $S(g)$ differ by the type of considered equation $g\in \cH$. 
Let us  start 
with the description of $\cal{S}_{\cH} $. Therefore, we need to specify 
the instance of the Hybrid problem more precisely.\\ 
Let $\mathcal{E}_3$ be an instance of the MAX-E3-LIN problem and $\cH$ its corresponding  instance of the Hybrid problem  with $n$ circles.
For every variable $x\in V(\mathcal{E}_3)$, there is an associated circle $C^x$. Each circle consists of
 $m^x_2-1$ circle equations $g^x_{i+1}$ with $i\in [m^x_2-1]$, a circle border equation $g^x_1\equiv x_1\oplus x_{m^x_2}=0$ 
and $|M^x|$ matching equations $g^x_{e}$ with  $e\in M^x$. Furthermore, we have 
$m_3$ equations $g^3_j$ with exactly three variables. 
We are going to specify the sets $S(g)$ differing by the type of equation $g$.
In particular,  we  distinguish four types of equations contained in $\cH$.
\begin{enumerate}
\item[$(i)$] circle equations 
\item[$(ii)$] matching equations
\item[$(iii)$] circle border equations
\item[$(iv)$] equations with exactly three variables
\end{enumerate}
Let us begin with the description of the strings corresponding to circle border equations.
\subsubsection{Strings Corresponding to  Circle Border Equations } 
Given an instance of the hybrid problem $\cH$, a circle $C^x$ in $\cH$ and its
circle border equation $g^x_1\equiv x_1\oplus x_n=0$, we introduce six associated strings, that are all
included in
the set $S(g^x_1)$. Due to the construction of the circle $C^x$, the variable $x_n$ is a contact variable.
This means that $x_n$ appears in an equation $g^3_j$ with exactly three variables. The strings in the
set $S(g^x_1)$ differ by the type of equation $g^3_j$. We begin with the case 
$g^3_j\equiv x_n\oplus y\oplus z=0 $.\\
\\   
 The string $L_xC^l_x$ is used as the initial part of the superstring corresponding to this circle, whereas
 $ C^r_x R_x$ is used as the end part. Furthermore, we introduce 
 strings that represent an assignment that sets either  the  variable $x_1$ to $0$ or 
 the variable
$x_n$ to $1$.
 The corresponding two strings are
 $$
 C^l_x x^{m0}_1   x^{l1}_n  C^r_x  \qquad \textrm { and }
 \qquad  x^{l1}_n C^r_x   C^l_x x^{m0}_1.  
 $$
Finally, we define the last two strings of the set $S(g^x_1)$
$$C^l_x x^{r1}_1   x^{m0}_n  C^r_x \qquad 
 \textrm { and } \qquad x^{m0}_n C^r_x   C^l_x x^{r1}_1. 
$$
having a similar interpretation. Both pairs of strings can be overlapped by two letters. Those
natural alignments have a crucial influence during the process of constructing a superstring.
For this reason, we introduce a notation for this alignments. By the \emph{$0$-alignment} of the strings in $S(g^x_1)$,
we refer to the following alignment of the four strings.
In the following, $(\downarrow)$ will denote the overlapping of the strings.\\
 \vspace{1mm}\\
\fbox{\parbox{\dimexpr \linewidth - 2\fboxrule - 2\fboxsep}{
 $$
 C^l_x x^{m0}_1   x^{l1}_n  C^r_x  \quad \textrm { and }
 \quad  x^{l1}_n C^r_x   C^l_x x^{m0}_1 \qquad
 C^l_x x^{r1}_1   x^{m0}_n  C^r_x \quad 
 \textrm { and } \quad x^{m0}_n C^r_x   C^l_x x^{r1}_1. 
$$
$$
\downarrow \hspace{7cm} \downarrow 
$$
 $$C^l_x x^{m0}_1   x^{m1}_n  C^r_xC^l_x x^{m0}_1 
 \qquad \qquad \textrm { and } \qquad \qquad  x^{m0}_n C^r_xC^l_x x^{r1}_1   x^{m0}_n  C^r_x
 $$
 \vspace{1mm}
 }}
 \vspace{3mm}\\
On the other hand, we the define the \emph{$1$-alignment} of the strings in $S(g^x_1)$ as follows.\\
 \vspace{1mm}\\
\fbox{\parbox{\dimexpr \linewidth - 2\fboxrule - 2\fboxsep}{
 $$
 C^l_x x^{m0}_1   x^{l1}_n  C^r_x  \quad \textrm { and }
 \quad  x^{l1}_n C^r_x   C^l_x x^{m0}_1 \qquad
 C^l_x x^{r1}_1   x^{m0}_n  C^r_x \quad 
 \textrm { and } \quad x^{m0}_n C^r_x   C^l_x x^{r1}_1. 
$$
$$
\downarrow \hspace{7cm} \downarrow 
$$
 $$ x^{l1}_n C^r_x C^l_x x^{m0}_1   x^{l1}_n  C^r_x
 \qquad \qquad \textrm { and } \qquad \qquad  
 C^l_x x^{r1}_1   x^{m0}_n  C^r_x C^l_x x^{r1}_1
 $$
  \vspace{1mm}
 }}
 \vspace{3mm}\\
Both ways to join the four strings are called \emph{simple} alignments. \\
\\
After having described how the strings corresponding to $S(g^x_1)$ in case
of $g^3_j\equiv x_n\oplus y\oplus z=0 $ are defined, we are going to 
deal with the case $g^3_j\equiv x_n\oplus y\oplus z=1 $.\\   
As before, we use $L_xC^l_x$ as the initial part of the superstring corresponding to this circle, whereas
 $ C^r_x R_x$ is used as the end part. Furthermore, we define the remaining  four strings contained in 
 $S(g^x_1)$ by the following.\\
 \vspace{1mm}\\
\fbox{\parbox{\dimexpr \linewidth - 2\fboxrule - 2\fboxsep}{
$$
 C^l_x x^{m0}_1   x^{m1}_n  C^r_x  \qquad \qquad 
 \qquad  x^{m1}_n C^r_x   C^l_x x^{m0}_1  
 $$
$$
\textrm { and }
$$
$$C^l_x x^{r1}_1   x^{l0}_n  C^r_x \qquad \qquad 
 \qquad x^{l0}_n C^r_x   C^l_x x^{r1}_1 
$$
}}
 \vspace{3mm}\\
 Both pairs of strings can be overlapped by two letters. We introduce a notation for this
 alignments. \\  
 \vspace{1mm}\\
\fbox{\parbox{\dimexpr \linewidth - 2\fboxrule - 2\fboxsep}{
$$
 C^l_x x^{m0}_1   x^{m1}_n  C^r_x  \quad \textrm { and }
 \quad  x^{m1}_n C^r_x   C^l_x x^{m0}_1 \qquad
 C^l_x x^{r1}_1   x^{l0}_n  C^r_x \quad 
 \textrm { and } \quad x^{l0}_n C^r_x   C^l_x x^{r1}_1. 
$$
$$
\downarrow \hspace{7cm} \downarrow 
$$
 $$C^l_x x^{m0}_1   x^{m1}_n  C^r_xC^l_x x^{m0}_1 
 \qquad \qquad \textrm { and } \qquad \qquad  x^{l0}_n C^r_xC^l_x x^{r1}_1   x^{l0}_n  C^r_x
 $$
  \vspace{1mm}
}}
 \vspace{3mm}\\
 The former introduced  alignment is called the \emph{$0$-alignment} of the strings in $S(g^x_1)$.
 On the other hand, we the define the \emph{$1$-alignment} of the strings in $S(g^x_1)$ as follows.\\
\vspace{1mm}\\
\fbox{\parbox{\dimexpr \linewidth - 2\fboxrule - 2\fboxsep}{
 $$
 C^l_x x^{m0}_1   x^{m1}_n  C^r_x  \quad \textrm { and }
 \quad  x^{m1}_n C^r_x   C^l_x x^{m0}_1 \qquad
 C^l_x x^{r1}_1   x^{l0}_n  C^r_x \quad 
 \textrm { and } \quad x^{l0}_n C^r_x   C^l_x x^{r1}_1. 
$$
$$
\downarrow \hspace{7cm} \downarrow 
$$
 $$ x^{m1}_n C^r_x C^l_x x^{m0}_1   x^{m1}_n  C^r_x
 \qquad \qquad \textrm { and } \qquad \qquad  
 C^l_x x^{r1}_1   x^{l0}_n  C^r_x C^l_x x^{r1}_1
 $$
  \vspace{1mm}
 }}
 \vspace{3mm}\\
 In the remainder, we refer to both ways to overlap the four strings 
 as \emph{simple} alignments. 
Next, we describe the strings corresponding to
matching equations. 
\subsubsection{Strings Corresponding to Matching Equations}
Let $C^x$ be a circle in $\cH$ and $M^x$ its associated 
perfect matching. Let $\{i,j\}$ be an edge in $M^x$ and
$g^x_{\{i,j\}} \equiv x_i\oplus x_j=0$ the  associated matching equation.
We now define the corresponding set $S(g^x_{\{i,j\}})$ consisting
of two strings assuming $i<j$. Then, we introduce
two strings of the form 
$$x^{r0}_j x^{l0}_j x^{r1}_i x^{l1}_i \qquad 
\textrm{ and } 
\qquad
x^{r1}_i x^{l1}_i x^{r0}_j x^{l0}_j
$$
corresponding to the matching equation.
There are two ways to align those two strings to obtain an  overlap of two letters.
In the remainder, we refer to those alignments as \emph{simple}.\\
\vspace{1mm}\\
\fbox{\parbox{\dimexpr \linewidth - 2\fboxrule - 2\fboxsep}{
 \vspace{3mm}
$$
x^{r0}_j x^{l0}_j x^{r1}_i x^{l1}_i 
\quad
\textrm{ and } 
\quad
x^{r1}_i x^{l1}_i x^{r0}_j x^{l0}_j
$$
$$
\swarrow \qquad \searrow
$$
$$
\lefteqn{ \overbrace{ \phantom{x^{r0}_j x^{l0}_j x^{r1}_i x^{l1}_i    }  }
^{x^{r0}_j x^{l0}_j x^{r1}_i x^{l1}_i   }  }
x^{r0}_j x^{l0}_j 
\lefteqn{ \underbrace{ \phantom{x^{r1}_i x^{l1}_i x^{r0}_j x^{l0}_j    }  }
_{x^{r1}_i x^{l1}_i x^{r0}_j x^{l0}_j  }  }
x^{r1}_i x^{l1}_i x^{r0}_j x^{l0}_j
\qquad 
\qquad 
\lefteqn{ \overbrace{ \phantom{x^{r1}_i x^{l1}_i x^{r0}_j x^{l0}_j    }  }
^{x^{r1}_i x^{l1}_i x^{r0}_j x^{l0}_j  }  }
x^{r1}_i x^{l1}_i 
\lefteqn{ \underbrace{ \phantom{ x^{r0}_j x^{l0}_j x^{r1}_i x^{l1}_i   }  }
_{x^{r0}_j x^{l0}_j x^{r1}_i x^{l1}_i  }  }
x^{r0}_j x^{l0}_j x^{r1}_i x^{l1}_i
$$
 \vspace{1mm}
}}
\vspace{1mm}\\
The first way to overlap the strings is called the \emph{$0$-alignment}, whereas the second one
is called the \emph{$1$-alignment}. Next, we describe the strings corresponding to
circle equations. 
\subsubsection{Strings Corresponding to Circle  Equations}
Let $C^x$ be a circle in $\cH$ and $M^x$ its associated 
matching.
Furthermore, let  $\{i, j\}$ and
 $\{i+1, j'\}$ be both contained in $M^x$. We assume that $i<j$. Then, we 
 introduce the corresponding strings for $x_i \oplus x_{i+1}=0$.
 If $i+1<j'$, we have 
$$
x^{l1}_{i} x^{r1}_{i+1} x^{m0}_{i} x^{m0}_{i+1}
\qquad 
\textrm{ and }
\qquad 
x^{m0}_{i} x^{m0}_{i+1}  x^{l1}_{i} x^{r1}_{i+1}.
$$
Otherwise $(i+1>j')$, we have 
$$
x^{l1}_{i} x^{m1}_{i+1} x^{m0}_{i} x^{r0}_{i+1} 
\qquad \textrm{ and }
\qquad x^{m0}_{i} x^{r0}_{i+1}  x^{l1}_{i} x^{m1}_{i+1}.
$$
In case of $i>j$ and $i+1>j'$, we use 
$$
x^{m1}_{i} x^{m1}_{i+1} x^{l0}_{i} x^{r0}_{i+1} \qquad \textrm{ and }
\qquad x^{l0}_{i} x^{r0}_{i+1}  x^{m1}_{i} x^{m1}_{i+1}
$$
Finally, if $i>j$ and $i+1<j'$, we introduce
$$
x^{m1}_{i} x^{r1}_{i+1} x^{l0}_{i} x^{m0}_{i+1} \qquad \textrm{ and }
\qquad x^{l0}_{i} x^{m0}_{i+1}  x^{m1}_{i} x^{r1}_{i+1}
$$
Let $x_i$ be a variable in $\cH$ contained in an equation $g^3_j$ with
three variables. We now define the corresponding strings for the
equations $x_{i-1} \oplus x_{i}= 0$ and $x_i \oplus x_{i+1}= 0$. 
We assume that $\{i-1, j\}$ and $\{i+1, j'\}$ are both included in $M^x$.
Furthermore, we assume $i-1<j$ and $i+1<j'$. 
If the equation $g^3_j$ is of the form $x_i\oplus y \oplus z=0$, we introduce
$$
x^{l1}_{i-1} x^{r1}_{i} x^{m0}_{i-1} x^{m0}_{i} 
\qquad 
\textrm{ and }
\qquad 
x^{m0}_{i-1} x^{m0}_{i}  x^{l1}_{i-1} x^{r1}_{i}.
$$
for $x_{i-1} \oplus x_{i}= 0$. Furthermore, for $x_i \oplus x_{i+1}= 0$, we
use the strings 
$$
x^{l1}_{i} x^{r1}_{i+1} x^{m0}_{i} x^{m0}_{i+1} 
\qquad 
\textrm{ and }
\qquad 
x^{m0}_{i} x^{m0}_{i+1}  x^{l1}_{i} x^{r1}_{i+1}.
$$
On the other hand, if the equation $g^3_j$ is 
of the form $x_i\oplus y \oplus z=1$, we introduce
$$
x^{l1}_{i-1} x^{m1}_{i} x^{m0}_{i-1} x^{r0}_{i} 
\qquad
\textrm{ and }
\qquad 
x^{m0}_{i-1} x^{r0}_{i}  x^{l1}_{i-1} x^{m1}_{i}.
$$
corresponding to the equation $x_{i-1} \oplus x_{i}= 0$. For $x_i \oplus x_{i+1}= 0$, we
use the strings 
$$
x^{m1}_{i} x^{r1}_{i+1} x^{l0}_{i} x^{m0}_{i+1} 
\qquad 
\textrm{ and }
\qquad 
x^{l0}_{i} x^{m0}_{i+1}  x^{m1}_{i} x^{r1}_{i+1}.
$$
%
Accordingly, we introduce the notation of simple alignments for the strings in $S(g^x_{i+1})$.
For the strings 
$$
x^{m1}_{i} x^{m1}_{i+1} x^{r0}_{i} x^{l0}_{i+1} 
\qquad 
\textrm{ and }
\qquad 
x^{r0}_{i} x^{l0}_{i+1}  x^{m1}_{i} x^{m1}_{i+1},
$$
we define the following alignments as simple.
$$
\lefteqn{ \overbrace{ \phantom{x^{m1}_{i} x^{m1}_{i+1} x^{r0}_{i} x^{l0}_{i+1}     }  }
^{x^{m1}_{i} x^{m1}_{i+1} x^{r0}_{i} x^{l0}_{i+1}    }  }
x^{m1}_{i} x^{m1}_{i+1} 
\lefteqn{ \underbrace{ \phantom{x^{r0}_{i} x^{l0}_{i+1}  x^{m1}_{i} x^{m1}_{i+1}    }  }
_{x^{r0}_{i} x^{l0}_{i+1}  x^{m1}_{i} x^{m1}_{i+1}  }  }
x^{r0}_{i} x^{l0}_{i+1}  x^{m1}_{i} x^{m1}_{i+1}
\qquad
\textrm{ and } 
\qquad
\lefteqn{ \overbrace{ \phantom{x^{r0}_{i} x^{l0}_{i+1}  x^{m1}_{i} x^{m1}_{i+1}    }  }
^{x^{r0}_{i} x^{l0}_{i+1}  x^{m1}_{i} x^{m1}_{i+1}  }  }
x^{r0}_{i} x^{l0}_{i+1}  
\lefteqn{ \underbrace{ \phantom{ x^{m1}_{i} x^{m1}_{i+1} x^{r0}_{i} x^{l0}_{i+1}   }  }
_{x^{m1}_{i} x^{m1}_{i+1} x^{r0}_{i} x^{l0}_{i+1}  }  }
x^{m1}_{i} x^{m1}_{i+1} x^{r0}_{i} x^{l0}_{i+1}
$$ 
The the former alignment is called the $1$-alignment and the latter one is called the
$0$-alignment.  Next, we describe the strings corresponding to
 equations with three variables. 
\subsubsection{Strings Corresponding to Equations with Three Variables}
We now concentrate on equations with exactly three variables. 
Let $g^3_j$ be an equation with three variables in $\cH$. For every 
equation $g^3_j$, we define two  corresponding sets $S^A(g^3_j)$ and $S^B(g^3_j)$ 
 both containing exactly three strings. Finally, the set $S(g^3_j)$ is defined 
 by the union $S^A(g^3_j)\cup S^B(g^3_j)$. We distinguish whether $g^3_j$
is of the form  $x\oplus y \oplus z =1$ or $x\oplus y \oplus z =0$.
The description starts with the former case.\\
\\  
An equation of the form $ x\oplus y \oplus z =0$ is represented by $S^A(g^3_j)$
containing the strings
$$
x^{r1}A^1_jx^{l1}y^{r1}A^2_jy^{l1} \qquad y^{r1}A^2_jy^{l1} x^{m0} A^3_jC_j 
\qquad 
x^{m0} A^3_jC_jx^{r1} A^1_j x^{l1}
$$
and by $S^B(g^3_j)$ containing the strings
$$
x^{r1}B^1_jx^{l1}    z^{r1}B^2_jz^{l1} \qquad 
z^{r1}B^2_jz^{l1} C_j B^3_j x^{m0} \qquad 
C_jB^3_j x^{m0}  x^{r1} B^1_j  x^{l1}
$$
On the other hand, for equations of the form
 $g^3_j\equiv x\oplus y \oplus z =1$, we introduce $S^A(g^3_j)$ containing the following strings.
$$
x^{r0}A^1_jx^{l0}y^{r0}A^2_jy^{l0} 
\qquad 
y^{r0}A^2_jy^{l0} x^{m1} A^3_jC_j 
\qquad 
x^{m1} A^3_j C_jx^{r0} A^1_j x^{l0}
$$
Furthermore, we give the definition of $S^B(g^3_j)$  including the following strings.
$$
x^{r0}B^1_j x^{l0}    z^{r0}B^2_j z^{l0} 
\qquad z^{r0}B^2_j z^{l0} C_j B^3_jx^{m1} 
\qquad 
C_j  B^3_j x^{m1}  x^{r0} B^1_j  x^{l0}
$$
The strings in the set $S^A(g^3_j)$ can be aligned in a cyclic fashion in order to obtain
different strings which we will use in our reduction. Every specific alignment possesses
its own abbreviation given below.\\
\vspace{3mm}\\
\fbox{\parbox{\dimexpr \linewidth - 2\fboxrule - 2\fboxsep}{ 
 \vspace{1mm}
$$
x^{r1}A^1_jx^{l1}y^{r1}A^2_jy^{l1} \quad y^{r1}A^2_jy^{l1} x^{m0} A^3_jC_j 
\quad 
x^{m0} A^3_jC_jx^{r1} A^1_j x^{l1}
$$
$$
\downarrow \qquad \qquad
$$
$$
\lefteqn{ \overbrace{ \phantom{x^{r1}A^1_jx^{l1}y^{r1}A^2_jy^{l1}   }  }
^{x^{r1}A^1_jx^{l1}y^{r1}A^2_jy^{l1}  }  }
x^{r1}A^1_jx^{l1}
\lefteqn{ \underbrace{ \phantom{y^{r1}A^2_jy^{l1} x^{m0} A^3_jC_j   }  }
_{y^{r1}A^2_jy^{l1} x^{m0} A^3_jC_j  } }
y^{r1}A^2_jy^{l1} 
\lefteqn{ \overbrace{ \phantom{x^{m0} A^3_jC_jx^{r1} A^1_j x^{l1}   }  }
^{x^{m0} A^3_jC_jx^{r1} A^1_j x^{l1}  }  }
x^{m0} A^3_jC_jx^{r1} A^1_j x^{l1}
\equiv  x^{r1} A_j x^{l1} 
\textrm{ called $x^1$- alignment}
$$
$$
\downarrow
$$
$$
\lefteqn{ \underbrace{ \phantom{y^{r1}A^2_jy^{l1} x^{m0} A^3_jC_j   }  }
_{ y^{r1}A^2_jy^{l1} x^{m0} A^3_jC_j  } }
y^{r1}A^2_jy^{l1} 
\lefteqn{ \overbrace{ \phantom{x^{m0} A^3_jC_jx^{r1} A^1_j x^{l1}   }  }
^{x^{m0} A^3_jC_jx^{r1} A^1_j x^{l1}  }  }
x^{m0} A^3_jC_j 
\lefteqn{ \underbrace{ \phantom{x^{r1}A^1_jx^{l1}y^{r1}A^2_jy^{l1}  }  }
_{x^{r1}A^1_jx^{l1}y^{r1}A^2_jy^{l1}  }  }
x^{r1}A^1_jx^{l1}y^{r1}A^2_jy^{l1} \equiv  y^{r1} A_j y^{l1} 
\textrm{ called $y^1$- alignment}
$$
$$
\downarrow
$$
$$
\lefteqn{ \underbrace{ \phantom{x^{m0} A^3_jC_jx^{r1} A^1_j x^{l1}   }  }
_{x^{m0} A^3_jC_jx^{r1} A^1_j x^{l1}  }  }
x^{m0} A^3_jC_j
\lefteqn{ \overbrace{ \phantom{x^{r1}A^1_jx^{l1}y^{r1}A^2_jy^{l1}  }  }
^{x^{r1}A^1_jx^{l1}y^{r1}A^2_jy^{l1}  }  }
x^{r1}A^1_jx^{l1}
\lefteqn{ \underbrace{ \phantom{y^{r1}A^2_jy^{l1} x^{m0} A^3_jC_j  }  }
_{y^{r1}A^2_jy^{l1} x^{m0} A^3_jC_j } }
y^{r1}A^2_jy^{l1} x^{m0} A^3_jC_j \equiv x^{m0}A_jC_j
\textrm{ called left-$x^0$- alignment}
$$
}}
\vspace{3mm}\\
Analogously, the strings in $S^B(g^3_j)$ can also be aligned in a cyclic fashion.
We are going to define the abbreviations for these alignments.\\
\vspace{3mm}\\
\fbox{\parbox{\dimexpr \linewidth - 2\fboxrule - 2\fboxsep}{
 \vspace{1mm}
$$
x^{r1}B^1_jx^{l1}    z^{r1}B^2_jz^{l1} \quad 
z^{r1}B^2_jz^{l1} C_j B^3_j x^{m0} \quad 
C_jB^3_j x^{m0}  x^{r1} B^1_j  x^{l1}$$
$$
\downarrow
$$
$$
\lefteqn{ \underbrace{ \phantom{ C_jB^3_j x^{m0}  x^{r1} B^1_j  x^{l1}  }  }
_{C_jB^3_j x^{m0}  x^{r1} B^1_j  x^{l1 } } }
C_jB^3_j x^{m0}   
\lefteqn{ \overbrace{ \phantom{x^{r1}B^1_jx^{l1}    z^{r1}B^2_jz^{l1}   }  }
^{x^{r1}B^1_jx^{l1}    z^{r1}B^2_jz^{l1}  }  }
x^{r1}B^1_jx^{l1}   
\lefteqn{ \underbrace{ \phantom{z^{r1}B^2_jz^{l1} C_j B^3_j x^{m0}  }  }
_{z^{r1}B^2_jz^{l1} C_j B^3_j x^{m0} }  }
z^{r1}B^2_jz^{l1} C_j B^3_j x^{m0} \equiv C_jB_jx^{m0}
\textrm{ called right-$x^0$- alignment} 
$$
$$
\downarrow
$$
$$
\lefteqn{ \overbrace{ \phantom{x^{r1}B^1_jx^{l1}    z^{r1}B^2_jz^{l1}   }  }
^{x^{r1}B^1_jx^{l1}    z^{r1}B^2_jz^{l1} }  }
x^{r1}B^1_jx^{l1}   
\lefteqn{ \underbrace{ \phantom{z^{r1}B^2_jz^{l1} C_j B^3_j x^{m0}   }  }
_{z^{r1}B^2_jz^{l1} C_j B^3_j x^{m0}  }  }
z^{r1}B^2_jz^{l1} 
\lefteqn{ \overbrace{ \phantom{C_jB^3_j x^{m0}  x^{r1} B^1_j  x^{l1}    }  }
^{C_jB^3_j x^{m0}  x^{r1} B^1_j  x^{l1}   } }
C_jB^3_j x^{m0}  x^{r1} B^1_j  x^{l1}  \equiv x^{r1}B_jx^{l1}
\textrm{ called $x^1$- alignment}
$$
$$
\downarrow
$$
$$
\lefteqn{ \underbrace{ \phantom{z^{r1}B^2_jz^{l1} C_j B^3_j x^{m0}   }  }
_{z^{r1}B^2_jz^{l1} C_j B^3_j x^{m0}  }  }
z^{r1}B^2_jz^{l1}   
\lefteqn{ \overbrace{ \phantom{C_jB^3_j x^{m0}  x^{r1} B^1_j  x^{l1}  }  }
^{C_jB^3_j x^{m0}  x^{r1} B^1_j  x^{l1}  } }
C_jB^3_j x^{m0} 
\lefteqn{ \underbrace{ \phantom{x^{r1}B^1_jx^{l1}    z^{r1}B^2_jz^{l1}    }  }
_{x^{r1}B^1_jx^{l1}    z^{r1}B^2_jz^{l1}   }  }
x^{r1}B^1_jx^{l1}    z^{r1}B^2_jz^{l1}  \equiv z^{r1}B_j z^{l1}  
\textrm{ called $z^1$- alignment}
$$
}}
\vspace{3mm}\\
The strings in $S^B(g^3_j)$ and $S^A(g^3_j)$ can be overlapped
in a special way that corresponds to assigning the value $0$ to $x$.
\vspace{3mm}\\
\fbox{\parbox{\dimexpr \linewidth - 2\fboxrule - 2\fboxsep}{
 \vspace{3mm}
$$
x^{r1}A^1_jx^{l1}y^{r1}A^2_jy^{l1} \quad 
y^{r1}A^2_jy^{l1} x^{m0} A^3_jC_j \quad 
x^{m0} A^3_jC_jx^{r1} A^1_j x^{l1}
$$
$$
z^{r1}B^2_jz^{l1} C_j B^3_j x^{m0} \quad C_jB^3_j x^{m0}  x^{r1} B^1_j  x^{l1}
\quad 
x^{r1}B^1_jx^{l1}    z^{r1}B^2_jz^{l1}
$$
$$
\downarrow
$$
$$
\lefteqn{ \underbrace{ \phantom{x^{m0} A^3_jC_jx^{r1}A^1_jx^{l1}   }  }
_{x^{m0} A^3_jC_jx^{r1}A^1_jx^{l1}  }  }
x^{m0} A^3_jC_j
\lefteqn{ \overbrace{ \phantom{x^{r1}A^1_jx^{l1}y^{r1}A^2_jy^{l1}   }  }
^{x^{r1}A^1_jx^{l1}y^{r1}A^2_jy^{l1}  }  }
x^{r1}A^1_jx^{l1}
\lefteqn{ \underbrace{ \phantom{y^{r1}A^2_jy^{l1} x^{m0} A^3_jC_j  }  }
_{y^{r1}A^2_jy^{l1} x^{m0} A^3_jC_j  } }
y^{r1}A^2_jy^{l1} x^{m0} A^3_j
\lefteqn{ \overbrace{ \phantom{C_jB^3_j x^{m0}  x^{r1} B^1_j  x^{l1}  }  }
^{C_jB^3_j x^{m0}  x^{r1} B^1_j  x^{l1}  }  }
C_jB^3_j x^{m0}  
\lefteqn{ \underbrace{ \phantom{x^{r1}B^1_jx^{l1}    z^{r1}B^2_jz^{l1}   }  }
_{x^{r1}B^1_jx^{l1}    z^{r1}B^2_jz^{l1}  }  }
x^{r1}B^1_jx^{l1}      
\lefteqn{ \overbrace{ \phantom{z^{r1}B^2_jz^{l1} C_j B^3_j x^{m0}   }  }
^{z^{r1}B^2_jz^{l1} C_j B^3_j x^{m0}  } }
z^{r1}B^2_jz^{l1} C_j B^3_j x^{m0}
$$
 \vspace{3mm}
}}
\vspace{1mm}\\
In the remainder, we call this alignment the $x^0$-alignment of $S(g^3_j)$ and use the
abbreviation $x^{m0}C_jx^{m0}$ for this string.
\subsection{   Constructing the Superstring $s_{\phi}$ from $\phi$}\label{sspgivenphi}
Given an assignment $\phi$ to the variables of $\cH$, we are going to 
construct  
the associated superstring $s_{\phi}$ for 
the instance $\cal{S}_{\cH} $.\\
\\
For every $g\in \cH$, we  formulate rules for
aligning the corresponding strings in $S(g)$ according to the 
assignment $\phi$.  We start with sets corresponding to
circle border equations and circle equations. 
Afterwards, we show how the actual fragments
can be overlapped with strings from the sets corresponding
to matching equations and equations with three variables. 
Furthermore, we analyze the relation between the assignment 
$\phi$ and the length of the obtained superstring $s_{\phi}$.
We begin with the description of the alignment of 
strings corresponding to circle border equations in $\cH$.

\subsubsection{Aligning Strings Corresponding to Circle Border Equations }
Let $C^x$ be 
 a circle  in $\cH$ and $x_1\oplus x_n=0$ its circle  border equation. Furthermore, we assume
 that $x_n$ is contained in a equation with three variables of the form $x_n\oplus y\oplus z=0$.
 First, we set
 the string $L_xC^l_x$  as the initial part of our superstring corresponding to the circle $C^x$.
 Then, we use  the $\phi(x_1)$-alignment of the strings
$$C^l_x x^{m0}_1  x^{m1}_n  C^r_x,   \quad
  x^{l1}_n C^r_x   C^l_x x^{m0}_1,
\quad 
C^l_x x^{r1}_1   x^{m0}_n  C^r_x ,  \quad
\textrm {and } \quad  x^{m0}_n C^r_x   C^l_x x^{r1}_1.   $$ 
 In this condition, one of the strings $s_l$ can be overlapped  from the left side 
 with $L_xC^l_x$ by one letter. The other string $s_r$ will be joined from the right side 
 with $ C^r_x R_x$
by one letter. 
This construction will help us to check whether $\phi$   assigns the same
value to the variable $x_n$ as to $x_1$. The string $s_r$ can be interpreted as the $\phi(x_1)$-alignment of the 
strings corresponding to $x_n\oplus x_{n+1}=0$, since the first letter of $s_r$ is either
$x^{m0}_n$ or $x^{l1}_n$. \\ 
\\
The parts corresponding to a circle border equation with $x_n\oplus y\oplus z =1$
can be constructed analogously. Next, we are going to align strings corresponding
to circle equations.
\subsubsection{Aligning Strings Corresponding to Circle Equations }
Let $x_i\oplus x_{i+1}=0$ be a circle equation contained in $\cH$.
Furthermore, let the corresponding strings are given by
$$
 x^{m0}_{i} x^{m0}_{i+1} x^{l1}_{i} x^{r1}_{i+1} \quad \textrm{ and }
 \quad 
 x^{l1}_{i} x^{r1}_{i+1} x^{m0}_{i} x^{m0}_{i+1}.
$$
In dependence of the given assignment $\phi$, we use
simple alignments to overlap the considered strings.
More precisely,  we make use of the $\phi(x_{i+1})$-alignment.
For every pair of associated strings, we derive an overlap of two letters.
We are going to align those fragments with strings corresponding to
 matching equations and equations with three variables.
\subsubsection{Aligning Strings Corresponding to Matching Equations}
Let $x_i \oplus x_j=0$ be a matching equation in $\cH$.
Let us assume that $i<j$. We define the alignment 
of the strings in $S(g^x_{\{i,j\}})$ according to the value of
$\phi (x_{i+1})$. More precisely, we use the $\phi (x_{i+1})$-alignment
of the strings
$$
x^{r0}_j x^{l0}_j x^{r1}_i x^{l1}_i 
\quad
\textrm{ and }
\quad
x^{r1}_i x^{l1}_i x^{r0}_j x^{l0}_j. 
$$
Due to this alignment, we obtain an overlap of two letters.
We are going to analyze the length of the resulting  superstring
in dependence of the assignment $\phi$ to the variables 
$x_{i}$, $x_{i+1}$, $x_{j}$ and $x_{j+1}$.
We start with the case  $\phi(x_{i+1})=\phi(x_{j+1})=1$.\\
\\ 
\textbf{Case $\phi(x_{i+1})=\phi(x_{j+1})=1$:}\\
We use the $1$-alignment of the strings $x^{r1}_i x^{l1}_i x^{r0}_j x^{l0}_j$
and $x^{r0}_j x^{l0}_j x^{r1}_i x^{l1}_i$. The situation  is depicted below.
(The two triangle notation $\medtriangleright\medtriangleright$ and $\medtriangleleft\medtriangleleft$ 
will be explained hereafter.)
\\
\vspace{4mm}\\
\fbox{\parbox{\dimexpr \linewidth - 2\fboxrule - 2\fboxsep}{
\vspace{3mm}
$$
b
 \medtriangleright\medtriangleright
  \framebox{$X_i$} \,
x^{r0}_j x^{l0}_j x^{r1}_i x^{l1}_i \, 
\framebox{$x^{l1}_i$} 
 \medtriangleleft\medtriangleleft m
\medtriangleright\medtriangleright
 \framebox{$Y_j$} \,
x^{r1}_i x^{l1}_i x^{r0}_j x^{l0}_j \,
\framebox{$x^{m1}_j$}
\medtriangleleft\medtriangleleft
e
$$
$$\downarrow$$
$$
b
\medtriangleright\medtriangleright
 \framebox{$X_i$} \,
\lefteqn{ \overbrace{ \phantom{x^{r1}_i x^{l1}_i x^{r0}_j x^{l0}_j   }  }
^{x^{r1}_i x^{l1}_i x^{r0}_j x^{l0}_j  }         }
x^{r1}_i x^{l1}_i 
\lefteqn{ \underbrace{ \phantom{x^{r0}_j x^{l0}_j x^{r1}_i \framebox{$x^{l1}_i$}   } } 
_{  x^{r0}_j x^{l0}_j x^{r1}_i x^{l1}_i    }         }
x^{r0}_j x^{l0}_j x^{r1}_i 
 \framebox{$x^{l1}_i$}
 \medtriangleleft\medtriangleleft
 m
 \medtriangleright\medtriangleright
 \framebox{$Y_j^{\,}$}\,
\framebox{$x^{m1}_j$}
\medtriangleleft\medtriangleleft
e
$$
\vspace{1mm}
}}
\vspace{1mm}\\
The actual superstring $s$ is denoted by the following sequence.
$$s=b
 \medtriangleright\medtriangleright
  \framebox{$X_i$} \,
x^{r0}_j x^{l0}_j x^{r1}_i x^{l1}_i \, 
\framebox{$x^{l1}_i$} 
 \medtriangleleft\medtriangleleft m
\medtriangleright\medtriangleright
 \framebox{$Y_j$} \,
x^{r1}_i x^{l1}_i x^{r0}_j x^{l0}_j \,
\framebox{$x^{m1}_j$}
\medtriangleleft\medtriangleleft
e$$
The part $\medtriangleright\medtriangleright \framebox{$X_i$}$ represents 
a simple alignment of the 
strings corresponding
to $x_{i-1}\oplus x_i=0$
 ending with the letter
$X_i\in \{x^{m0}_i, x^{r1}_i    \}$,
which means  
$$
\medtriangleright\medtriangleright \framebox{$X_i$}\in 
\{
x^{m0}_{i-1} x^{m0}_{i}  x^{l1}_{i-1} x^{r1}_{i} x^{m0}_{i-1} x^{m0}_{i},
x^{l1}_{i-1} x^{r1}_{i}  x^{m0}_{i-1} x^{m0}_{i}  x^{l1}_{i-1} x^{r1}_{i}
\}.
$$ 
The letter in the box emphasizes the letter which can be used to
 overlap from the right side with other strings. 
Furthermore, the string $\framebox{$x^{l1}_i$} \medtriangleleft\medtriangleleft $ denotes
$x^{l1}_{i} x^{r1}_{i+1}  x^{m0}_{i} x^{m0}_{i+1}  x^{l1}_{i} x^{r1}_{i+1}$. 
Analogously,  
$\medtriangleright\medtriangleright \framebox{$Y_j$} $ is a simple alignment of  the strings corresponding to
 $x_{j-1}\oplus x_j=0$, where
$Y_j\in \{  x^{r0}_j, x^{m1}_j \}$. Furthermore, we use
$\framebox{$x^{m1}_j$} \medtriangleleft\medtriangleleft$ to denote 
$x^{m1}_j   x^{m1}_{j+1}  x^{l0}_{j} x^{r0}_{j+1}  x^{m1}_{j} x^{m1}_{j+1} $.
Finally, $b$, $m$ and $e$ are sequences of letters, which we do not specify in detail.
They define the remaining parts of the superstring $s$. 
\\
If $X_i=x^{r1}_i$ holds, we align $\medtriangleright\medtriangleright \framebox{$X_i$}$ 
with $x^{r1}_i x^{l1}_i x^{r0}_j x^{l0}_j  x^{r1}_i x^{l1}_i$
to achieve an additional overlap of one letter.
An analogue situation holds for $\medtriangleright\medtriangleright \framebox{$Y_j$}$ and 
$\framebox{$x^{m1}_j$} \medtriangleleft\medtriangleleft$. 
All in all, we obtain an overlap of three letters if $\phi(x_{i})=\phi(x_{i+1})=1$
and $\phi(x_{j+1})=\phi(x_j)=1$ holds. Otherwise, we lose an overlap of one letter 
per unsatisfied equation. \\
\\ 
\textbf{Case $\phi(x_{i+1})=\phi(x_{j+1})=0$:}\\
We use the $0$-alignment of the strings $x^{r1}_i x^{l1}_i x^{r0}_j x^{l0}_j$
and $x^{r0}_j x^{l0}_j x^{r1}_i x^{l1}_i$.\\
\vspace{3mm}\\
\fbox{\parbox{\dimexpr \linewidth - 2\fboxrule - 2\fboxsep}{
\vspace{4mm}
$$
b
\medtriangleright\medtriangleright \framebox{$X_i$} \,
x^{r0}_j x^{l0}_j x^{r1}_i x^{l1}_i \, 
\framebox{$x^{m0}_i$} \medtriangleleft\medtriangleleft \,m
\medtriangleright\medtriangleright \framebox{$Y_j$} \,
x^{r1}_i x^{l1}_i x^{r0}_j x^{l0}_j \,
\framebox{$x^{l0}_j$} \medtriangleleft\medtriangleleft \,
e
$$
$$\downarrow$$
$$
b \,
\medtriangleright\medtriangleright \framebox{$X_i$} \,
\framebox{$x^{m0}_i$} \medtriangleleft\medtriangleleft \,m
\medtriangleright\medtriangleright \framebox{$Y_j$} \,
\lefteqn{ \overbrace{ \phantom{x^{r0}_j x^{l0}_j x^{r1}_i x^{l1}_i   } } 
^{  x^{r0}_j x^{l0}_j x^{r1}_i x^{l1}_i    }         }
x^{r0}_j x^{l0}_j
\lefteqn{ \underbrace{ \phantom{x^{r1}_i x^{l1}_i x^{r0}_j \framebox{$x^{l0}_j$}   }  }
_{x^{r1}_i x^{l1}_i x^{r0}_j x^{l0}_j  }         }
x^{r1}_i x^{l1}_i x^{r0}_j 
\framebox{$x^{l0}_j$}\medtriangleleft\medtriangleleft
\,e
$$\vspace{1mm}
}}
\vspace{1mm}\\
In this case, we use $\framebox{$x^{m0}_i$} \medtriangleleft\medtriangleleft$   as an abbreviation for 
$  x^{m0}_{i} x^{m0}_{i+1}  x^{l1}_{i} x^{r1}_{i+1}  x^{m0}_{i} x^{m0}_{i+1}$
and $\framebox{$x^{l0}_j$}\medtriangleleft\medtriangleleft$ for 
$   x^{l0}_{j} x^{r0}_{j+1}  x^{m1}_{j} x^{m1}_{j+1} x^{l0}_{j} x^{r0}_{j+1}$.
If $X_i=x^{m0}_i$ holds, we align $\medtriangleright\medtriangleright \framebox{$X_i$}$ 
with $\framebox{$x^{m0}_i$} \medtriangleleft\medtriangleleft$ and gain an additional overlap of
 one letter.
An analogue situation holds for $\medtriangleright\medtriangleright \framebox{$Y_j$}$ and 
$\framebox{$x^{l0}_j$} \medtriangleleft\medtriangleleft$. 
Hence, we obtain an overlap of three letters if $\phi(x_{i+1})=\phi(x_i)=0$
and $\phi(x_{j+1})=\phi(x_j)=0$ holds. If the corresponding equation with two 
variables is not satisfied, we lose an overlap of one letter.
 \\
\\
\textbf{Case $\phi(x_{i+1})\neq \phi(x_{j+1})=1$:}\\
In this case, we use the $0$-alignment of the strings $x^{r1}_i x^{l1}_i x^{r0}_j x^{l0}_j$
and $x^{r0}_j x^{l0}_j x^{r1}_i x^{l1}_i$.\\
\vspace{3mm}\\
\fbox{\parbox{\dimexpr \linewidth - 2\fboxrule - 2\fboxsep}{ 
\vspace{4mm}
$$
b
\medtriangleright\medtriangleright \framebox{$X_i$} \,
x^{r0}_j x^{l0}_j x^{r1}_i x^{l1}_i \, 
\framebox{$x^{m0}_i$} \medtriangleleft\medtriangleleft \,
m
\medtriangleright\medtriangleright \framebox{$Y_j$} \,
x^{r1}_i x^{l1}_i x^{r0}_j x^{l0}_j \,
\framebox{$x^{m1}_j$} \medtriangleleft\medtriangleleft
e
$$
$$\downarrow$$
$$
b
\medtriangleright\medtriangleright \framebox{$X_i$} \,
\framebox{$x^{m0}_i$} \medtriangleleft\medtriangleleft \,
m
\medtriangleright\medtriangleright \framebox{$Y_j$} \,
\framebox{$x^{m1}_j$} \medtriangleleft\medtriangleleft e \quad
\lefteqn{ \overbrace{ \phantom{ x^{r0}_j x^{l0}_j x^{r1}_i x^{l1}_i  }  }
^{x^{r0}_j x^{l0}_j x^{r1}_i x^{l1}_i  }         }
x^{r0}_j x^{l0}_j 
\lefteqn{ \underbrace{ \phantom{ x^{r1}_i x^{l1}_i x^{r0}_j x^{l0}_j  } } 
_{  x^{r1}_i x^{l1}_i x^{r0}_j x^{l0}_j    }         }
x^{r1}_i x^{l1}_i x^{r0}_j x^{l0}_j
$$
\vspace{1mm}
}}
\vspace{1mm}\\
We attach  $x^{r0}_j x^{l0}_j x^{r1}_i x^{l1}_i x^{r0}_j x^{l0}_j$
 at the end of our actual solution $s_{\phi}$ without having any overlap with the so far obtained superstring. 
Notice that we obtain in each case an additional overlap of one letter  if the corresponding 
equation with two variables  is satisfied, i.e. $X_i=x^{m0}_i$ and $Y_j=x^{m1}_j$.\\
\\
\textbf{Case $\phi(x_{i+1})\neq \phi(x_{j+1})=0$:}\\
According to $\phi$, we use the $1$-alignment of the strings $x^{r1}_i x^{l1}_i x^{r0}_j x^{l0}_j$
and $x^{r0}_j x^{l0}_j x^{r1}_i x^{l1}_i$.\\
\vspace{3mm}\\
\fbox{\parbox{\dimexpr \linewidth - 2\fboxrule - 2\fboxsep}{
\vspace{4mm}
$$
b
 \medtriangleright\medtriangleright
  \framebox{$X_i$} \,
x^{r0}_j x^{l0}_j x^{r1}_i x^{l1}_i \, 
\framebox{$x^{l1}_i$} 
 \medtriangleleft\medtriangleleft m
\medtriangleright\medtriangleright
 \framebox{$Y_j$} \,
x^{r1}_i x^{l1}_i x^{r0}_j x^{l0}_j \,
\framebox{$x^{l0}_j$}
\medtriangleleft\medtriangleleft
e
$$
$$\downarrow$$
$$
b
\medtriangleright\medtriangleright
 \framebox{$X_i$} \,
\lefteqn{ \overbrace{ \phantom{x^{r1}_i x^{l1}_i x^{r0}_j x^{l0}_j   }  }
^{x^{r1}_i x^{l1}_i x^{r0}_j x^{l0}_j  }         }
x^{r1}_i x^{l1}_i 
\lefteqn{ \underbrace{ \phantom{x^{r0}_j x^{l0}_j x^{r1}_i \framebox{$x^{l1}_i$}   } } 
_{  x^{r0}_j x^{l0}_j x^{r1}_i x^{l1}_i    }         }
x^{r0}_j x^{l0}_j x^{r1}_i 
 \framebox{$x^{l1}_i$}
 \medtriangleleft\medtriangleleft
 m
 \medtriangleright\medtriangleright
 \framebox{$Y_j^{\,}$}\,
\framebox{$x^{l0}_j$}
\medtriangleleft\medtriangleleft
e
$$
\vspace{1mm}
}}
\vspace{1mm}\\
We join  $ x^{r1}_i x^{l1}_i x^{r0}_j x^{l0}_j x^{r1}_i x^{l1}_i$ 
from the right side
with
$\framebox{$x^{l1}_i$} 
 \medtriangleleft\medtriangleleft$  and obtain an overlap 
 of one letter. This reduces the length of the superstring by one letter
 independent of the assignment $\phi(x_j)$. In case of $X_i=x^{r1}_i$, we
 achieve another overlap of one letter, since we are able to align 
 $\medtriangleright\medtriangleright
 \framebox{$X_i$}$ from the right side with 
$ x^{r1}_i x^{l1}_i x^{r0}_j x^{l0}_j x^{r1}_i x^{l1}_i$.
It corresponds to the satisfied equation $x_i\oplus x_{i+1}=0$. 
Hence, we obtain at least the same number of overlapped letters as
satisfied equations.\\
\\
In summary, we note that we are able to achieve 
an  overlap of at least 
one letter in each case if the corresponding equation is satisfied by $\phi$.
Hence, we obtain an overlap of at most three letters. \\
\\
The other cases concerning equations $x_i\oplus x_j=0$ with $i>j$
can be analyzed analogously.
Next, we are going to align strings corresponding to equations with three variables.

\subsubsection{ Aligning Strings Corresponding to Equations with Three Variables}
Let $g^3_j\in \cH$ be an equation with three variables $x$, $y$ and $z$.
Furthermore, let $x_{i-1}\oplus x =0$, $x \oplus x_{i+1} =0$, $y_{j-1}\oplus y =0$,
$y \oplus y_{j+1} =0$,  $z_{k-1}\oplus z =0$ and $z \oplus z_{k+1} =0$
be the equations with two variables, in which the variables $x$, $y$ and $z$ occur.
Given the assignment $\phi$ to $x$, $y$ and $z$, we are going to define the alignment of  
 the corresponding strings.
Let us start with equations of the form  $g^3_j\equiv x\oplus y\oplus z= 0 $.
Then, we define the 
rule for aligning strings in $S^A(g^3_j)$ and $S^B(g^3_j)$ as follows, whereby
we handle the cases $\phi(x_{i+1})+\phi(y_{j+1})+\phi(z_{k+1})=\{3,2,1,0\}$ separately starting
with $\phi(x_{i+1})+\phi(y_{j+1})+\phi(z_{k+1})=3$.
\\
\\
%
\textbf{Case $\phi(x_{i+1})+\phi(y_{j+1})+\phi(z_{k+1})=3$:} \\
In this case, we align the strings in $S(g^3_j)$ in such a way that we obtain the
former introduced strings $y^{r1}A_jy^{l1}$ and $z^{r1}B_jz^{l1}$.
The situation, which we want to analyze, is depicted below.\\
\vspace{3mm}\\
\fbox{\parbox{\dimexpr \linewidth - 2\fboxrule - 2\fboxsep}{
$$
\vspace{2mm}
b\,
\medtriangleright\medtriangleright \framebox{$X$} \,
 \framebox{$x^{l1}$} 
 \medtriangleleft\medtriangleleft \, m_1
\medtriangleright\medtriangleright \framebox{$Y$}\
 y^{r1} A_j y^{l1}\
  \framebox{$y^{l1}$} 
  \medtriangleleft\medtriangleleft \, m_2
\medtriangleright\medtriangleright \framebox{$Z$} \
z^{r1}B_jz^{l1}\
 \framebox{$z^{l1}$} 
  \medtriangleleft\medtriangleleft
  \, e
$$
$$
\downarrow
$$
$$
b\,
\medtriangleright\medtriangleright \framebox{$X$} \,
 \framebox{$x^{l1}$} 
 \medtriangleleft\medtriangleleft \, m_1
\medtriangleright\medtriangleright \framebox{$Y$}\
\lefteqn{ \underbrace{ \phantom{y^{r1} A_j \framebox{$y^{l1}$}  }  }
_{y^{r1} A_j y^{l1}  }  } 
 y^{r1} A_j 
 \framebox{$y^{l1}$} \medtriangleleft\medtriangleleft \, m_2
\medtriangleright\medtriangleright \framebox{$Z$} \
 \lefteqn{ \overbrace{ \phantom{ z^{r1}B_j\framebox{$z^{l1}$}    }  }
^{z^{r1}B_jz^{l1}   }  } 
z^{r1}B_j
\framebox{$z^{l1}$} \medtriangleleft\medtriangleleft 
\, e
$$
\vspace{1mm}
}}
\vspace{1mm}\\
Similarly to the situations that we discussed concerning matching equations, we
define the actual superstring $s$ in the way described below.
$$
s=b\,
\medtriangleright\medtriangleright \framebox{$X$} \,
 \framebox{$x^{l1}$} 
 \medtriangleleft\medtriangleleft \, m_1
\medtriangleright\medtriangleright \framebox{$Y$}\
 y^{r1} A_j y^{l1}\
  \framebox{$y^{l1}$} 
  \medtriangleleft\medtriangleleft \, m_2
\medtriangleright\medtriangleright \framebox{$Z$} \
z^{r1}B_jz^{l1}\
 \framebox{$z^{l1}$} 
  \medtriangleleft\medtriangleleft
  \, e
$$
Here, $b$, $m_1$, $m_2$ and $e$ denote parts of $s$, which we do not specify
in detail to emphasize the parts corresponding to the equation with three variables. \\
The string $\framebox{$x^{l1}$} 
 \medtriangleleft\medtriangleleft$ denotes the $\phi(x_{i+1})$-alignment
 of the corresponding strings in $S(g^x_{i+1})$.
The strings $\framebox{$z^{l1}$} 
  \medtriangleleft\medtriangleleft$ and $\framebox{$y^{l1}$} 
  \medtriangleleft\medtriangleleft$ are defined analogously.
  In this situation, we want to analyze the cases 
$X\in \{x^{r1}, x^{m0}\}$, $Y\in \{y^{r1}, y^{m0}\}$
and $Z\in \{z^{r1}, z^{m0}\}$. 
We infer that we obtain an overlap of four letters if all equations with
two variables are satisfied. Otherwise, we lose an overlap of one letter
per  unsatisfied equation with two variables.\\ 
\\
\textbf{Case $\phi(x_{i+1})+\phi(y_{j+1})+\phi(z_{k+1})= 2$:} \\
Let $\alpha,\gamma \in \{x_{i+1},y_{j+1},z_{k+1}\}$ be  variables  such that $\phi(\gamma)=\phi(\alpha)=1$ holds.
Then, we use the $\alpha^1$-alignment and $\gamma^1$-alignment of the strings in $S^A(g^3_j)$ and  
 $S^B(g^3_j)$ breaking ties arbitrary.
We display exemplary the situation for $\phi(z_{k+1})=\phi(x_{i+1})=1$.\\
\vspace{3mm}\\
\fbox{\parbox{\dimexpr \linewidth - 2\fboxrule - 2\fboxsep}{
\vspace{2mm}
$$
b
\medtriangleright\medtriangleright \framebox{$X$} \, 
x^{r1} A_j x^{l1}\, 
\framebox{$x^{l1}$} \medtriangleleft\medtriangleleft m_1
\medtriangleright\medtriangleright \framebox{$Y$}\,
\framebox{$y^{m0}$} \medtriangleleft\medtriangleleft m_2
\medtriangleright\medtriangleright \framebox{$Z$} \,
z^{r1}B_jz^{l1}\,
 \framebox{$z^{l1}$} \medtriangleleft\medtriangleleft e
$$
$$
\downarrow
$$
$$
b
\medtriangleright\medtriangleright \framebox{$X$}\
\lefteqn{ \underbrace{ \phantom{x^{r1} A_j \framebox{$x^{l1}$}  }  }
_{x^{r1} A_j x^{l1}  }  } 
 x^{r1} A_j 
 \framebox{$x^{l1}$} \medtriangleleft\medtriangleleft
 m_1
  \medtriangleright\medtriangleright \framebox{$Y$}\
\framebox{$y^{m0}$} \medtriangleleft\medtriangleleft
m_2
\medtriangleright\medtriangleright \framebox{$Z$} \
 \lefteqn{ \overbrace{ \phantom{ z^{r1}B_j\framebox{$z^{l1}$}    }  }
^{z^{r1}B_jz^{l1}   }  } 
z^{r1}B_j
\framebox{$z^{l1}$} \medtriangleleft\medtriangleleft
e
$$
\vspace{2mm}
}}
\vspace{1mm}\\
In this case, we achieve an overlap of five letters if all equations with
two variables are satisfied. Otherwise, we lose an overlap of one letter
per unsatisfied equation with two variables.\\
\\
\textbf{Case $\phi(x_{i+1})+\phi(y_{j+1})+\phi(z_{k+1})= 1$:} \\
If $\phi(z_{k+1})+\phi(x_{i+1})=1$ holds, we align the strings in $S^B(g^3_j)$ and $S^A(g^3_j)$
to obtain $x^{r1} A_j x^{l1}$ and $z^{r1} B_j z^{l1}$. Otherwise, we 
make use of the strings $x^{r1} B_j x^{l1}$ and $y^{r1} A_j y^{l1}$.
We display the  situation for $\phi(y_{j+1})=1$.\\
\vspace{3mm}\\
\fbox{\parbox{\dimexpr \linewidth - 2\fboxrule - 2\fboxsep}{
\vspace{2mm}
$$
b
\medtriangleright\medtriangleright \framebox{$X$}  \,
x^{r1} B_j x^{l1}
\framebox{$x^{m0}$}\medtriangleleft\medtriangleleft m_1
\medtriangleright\medtriangleright \framebox{$Y$} \,
 y^{r1} A_j y^{l1} \framebox{$y^{l1}$} \medtriangleleft\medtriangleleft 
 m_2 
\medtriangleright\medtriangleright \framebox{$Z$} \,
 \framebox{$z^{m0}$}\medtriangleleft\medtriangleleft  e
$$
$$
\downarrow
$$
$$
b
\medtriangleright\medtriangleright
 \framebox{$X$} \, \framebox{$x^{m0}$} \medtriangleleft\medtriangleleft  
 m_1
\medtriangleright\medtriangleright \framebox{$Y$} \,
\lefteqn{ \overbrace{ \phantom{y^{r1} A_j \framebox{$y^{l1}$}   }  }
^{y^{r1} A_j y^{l1}  }  }
y^{r1}  A_j
\framebox{$y^{l1}$} \medtriangleleft\medtriangleleft 
m_2
\medtriangleright\medtriangleright
 \framebox{$Z$} \, \framebox{$z^{m0}$} \medtriangleleft\medtriangleleft
 e\, 
x^{r1} B_j x^{l1}
$$\vspace{2mm}
}}
\vspace{1mm}\\
Notice that  we obtain an overlap of four letters if the equations with two variables 
are satisfied, i.e. $X=x^{m0}$, $Z=z^{m0}$ and $Y=y^{r1}$.
Otherwise, we lose an overlap of one letter per unsatisfied equation with two variables.\\
\\
\textbf{Case $\phi(x_{i+1})+\phi(y_{j+1})+\phi(z_{k+1})= 0$:} \\
In this case, we use the $x^0$-alignment of the strings in $S(g^3_j)$.
The situation is displayed below.\\
\vspace{3mm}\\
\fbox{\parbox{\dimexpr \linewidth - 2\fboxrule - 2\fboxsep}{
\vspace{4mm}
$$
b
\medtriangleright\medtriangleright \framebox{$X$} \,
x^{m0}  
C_j  x^{m0}\
\framebox{$x^{m0}$} \medtriangleleft\medtriangleleft 
m_1
\medtriangleright\medtriangleright \framebox{$Y$} \,
 \framebox{$y^{m0}$} \medtriangleleft\medtriangleleft  
 m_2
\medtriangleright\medtriangleright \framebox{$Z$} \,
\framebox{$z^{m0}$} \medtriangleleft\medtriangleleft
e 
$$
$$
\downarrow
$$
$$
b
\medtriangleright\medtriangleright \framebox{$X$} \
 \lefteqn{ \underbrace{ \phantom{x^{m0} C_j  \framebox{$x^{m0}$}   }  }
_{x^{m0} C_j  x^{m0}   }  }
x^{m0} C_j 
\framebox{$x^{m0}$} \medtriangleleft\medtriangleleft
m_1
\medtriangleright\medtriangleright \framebox{$Y$} \,
 \framebox{$y^{m0}$}\medtriangleleft\medtriangleleft
 m_2
\medtriangleright\medtriangleright\framebox{$Z$} \, 
\framebox{$z^{m0}$} \medtriangleleft\medtriangleleft
e
$$
\vspace{1mm}
}}
\vspace{1mm}\\
Here, we are able to achieve an 
 overlap of five letters  if all equations with two variables are satisfied,
 i.e. $X=x^{m0}$, $Z=z^{m0}$ and $Y=y^{m0}$.\\
 \\
In summary, we state that we can achieve 
an overlap of at least one letter independent of the assignment $\phi$. 
Additionally, we gain another overlap of 
one letter if the corresponding equation is satisfied by $\phi$.\\
\\
The situation for equations of the form $x\oplus y\oplus z= 1$
can be analyzed analogously. We are going to define the assignment $\psi_s$,
which is associated to a given superstring for $\cS_{\cH}$.

%
\subsection{ Defining the Assignment }\label{sspassignment}
Given a superstring $s$ for $\cS_{\cH}$, we are going to define the associated
assignment $\psi_s$ to the variables of $\cH$. In order to deduce 
the values assigned to the variables in $\cH$ from $s$,
 we have to normalize 
the given superstring $s$. For this reason, we define rules
that transform a superstring for $\cS_{\cH}$ into a normed
superstring for $\cS_{\cH}$ without increasing the length.\\
\\
First, we introduce the definition of a normed superstring for $\cS_{\cH}$.
\begin{definition}[Normed Superstring $s$ for $\cS_{\cH}$]
Let $\cH$ be an instance of the Hybrid problem, $\cS_{\cH}$
the corresponding instance of the Shortest Superstring problem
and $s$ a superstring for $\cS_{\cH}$. We refer to $s$
as a normed superstring for $\cS_{\cH}$ if for every $g\in \cH$,
the superstring $s$ contains $s_g$ as a proper substring, whereby $s_g$
is resulted due to a simple alignment of 
the strings included in $S(g)$. 
\end{definition}
\noindent After having  defined  a normed superstring, 
we are going to state  rules which transform a superstring for 
 $\cS_{\cH}$ into a normed superstring for 
 $\cS_{\cH}$ without increasing the length of the underlying superstring.
 All transformation can be performed in polynomial time. 
 Once accomplished to generate a normed superstring, we are able to 
 define the assignment $\psi_s$ and analyze the number of overlapped letters
 in dependence of the number of satisfied equations in $\cH$ by $\psi_s$.
 Let us start with transformations of strings corresponding to circle equations 
 and circle border equations.
\subsubsection{Normalizing Strings Corresponding to Circle and Circle Border Equations}
Let $x_i\oplus x_{i+1}=0$ be a circle equation in $\cH$.  
Furthermore, let   
$x^{m0}_{i} x^{r0}_{i+1} x^{l1}_{i} x^{m1}_{i+1}$ 
and 
$ x^{l1}_{i} x^{m1}_{i+1} x^{m0}_{i} x^{r0}_{i+1}$
be its corresponding strings.
We observe that these strings can have an overlap of at most one letter from the left side 
as well as from the right side with other strings
in $\cS_{\cH}$. Given a superstring $s$ for $\cS_{\cH}$, we obtain at least the same number of 
overlapped letters if we use one of the  simple alignments in $s$. 
In particular, we have to use the simple alignment that maximizes the number of overlapped 
letters.\\
\\
Given a superstring $s$ for $\cS_{\cH}$, we separate the strings 
$x^{m0}_{i} x^{r0}_{i+1} x^{l1}_{i} x^{m1}_{i+1}$
and $ x^{l1}_{i} x^{m1}_{i+1} x^{m0}_{i} x^{r0}_{i+1}$ from $s$. 
Consequently, this results in at most three strings $bx^{m0}_{i}$, $x^{m1}_{i+1} m  x^{l1}_{i}$
and $x^{r0}_{i+1}e$   such that 
$$s=bx^{m0}_{i}x^{r0}_{i+1} x^{l1}_{i}x^{m1}_{i+1} m  x^{l1}_{i}x^{m1}_{i+1} x^{m0}_{i}x^{r0}_{i+1}e. $$ 
Then, we define the transformed superstring $s'$ with at least the same number of overlapped letters by
$$s'=  bx^{m0}_{i}x^{r0}_{i+1} x^{l1}_{i}x^{m1}_{i+1}x^{m0}_{i} x^{r0}_{i+1} e \, x^{m1}_{i+1} m  x^{l1}_{i} . $$
In order to define the simple alignment, which is used in $s$ by the strings in $S(g^x_{i+1})$, we are going to state
a criterion.\\
\\
Let $s$ be a superstring for $\cS_{\cH}$  and  $g^x_{i+1}$ a circle equation. Let the 
 corresponding strings are given by $x^{m0}_{i} x^{r0}_{i+1} x^{l1}_{i} x^{m1}_{i+1}$
and $ x^{l1}_{i} x^{m1}_{i+1} x^{m0}_{i} x^{r0}_{i+1}$. 
Then, we say that the strings in $S(g^x_{i+1})$ use a $1$-alignment in $s$ if there are more strings $s^1$ in 
$\cS_{\cH} \backslash S(g^x_{i+1})$ such that either $s^1$ is overlapped by one letter from the right side with 
$x^{l1}_{i} x^{m1}_{i+1} x^{m0}_{i} x^{r0}_{i+1}$ or $s^1$ is overlapped by one letter from the left side with
$x^{m0}_{i} x^{r0}_{i+1} x^{l1}_{i} x^{m1}_{i+1}$ in $s$ than strings $s^0$ in 
$\cS_{\cH} \backslash S(g^x_{i+1})$ such that either $s^0$ is overlapped by one letter from the left side with 
$x^{l1}_{i} x^{m1}_{i+1} x^{m0}_{i} x^{r0}_{i+1}$ or $s^0$ is overlapped by one letter from the right side with
$x^{m0}_{i} x^{r0}_{i+1} x^{l1}_{i} x^{m1}_{i+1}$ in $s$. Otherwise, the strings in 
$S(g^x_{i+1})$ use a $0$-alignment in $s$.\\
\\
Given a superstring $s$ for $\cS_{\cH}$, we define informally a part of the backbone of our transformed
superstring by  the strings $s_g$, where $s_g$ is resulted due to a simple alignment used in $s$ of    
 the strings $S(g)$ for every circle equation $g\in \cH$. Afterwards, we  use this construction
 to align them with strings corresponding to matching equations, equations with three variables and
 circle border equations. Moreover, it will help us to define the assignment $\psi_s$ and
 relate the number of satisfied equations to the number of overlapped letters. But first, 
 we concentrate on circle border equations.\\
 \\   
 Let $x_1\oplus x_n=0$ be a circle border equation. Furthermore, let the corresponding strings are
given by
$$
L_xC^l_x, \quad  C^l_x x^{m0}_1   x^{l1}_n  C^r_x , \quad 
  x^{l1}_n C^r_x   C^l_x x^{m0}_1 , \quad 
C^l_x x^{r1}_1   x^{m0}_n  C^r_x, \quad 
 x^{m0}_n C^r_x   C^l_x x^{r1}_1, \quad \textrm{ and } \quad
C^r_x R_x.
$$
Since the simple alignments of the strings in $S(g^x_1)$ achieve an overlap of two letters for each pair
$\{C^l_x x^{m0}_1   x^{l1}_n  C^r_x, ~ 
  x^{l1}_n C^r_x   C^l_x x^{m0}_1\}$ and $\{C^l_x x^{r1}_1   x^{m0}_n  C^r_x ,~
 x^{m0}_n C^r_x   C^l_x x^{r1}_1\}$, we
argue as before that these strings can be rearranged in a given  superstring for $\cS_{\cH}$ 
such that the pairs use a simple alignment without increasing the length of the underlying
superstring  for $\cS_{\cH}$. In this situation, we are able to overlap one of the  pairs using a simple alignment
 with $L_xC^l_x$  from the left side
and the other one with $C^r_x R_x$ from the right  side without increasing the length. This construction
 checks whether the variables $x_1$ and $x_n$ have the same assigned value, which 
is rewarded by another overlap of one letter of the corresponding  strings using a simple alignment. \\
\\    
For any fixed order of the circles $C^x$ in $\cH$, we build the backbone of our superstring
consisting of the concatenation of the strings $s^xs^y\cdots s^z$, where the string $s^x$
is associated to its circle $C^x$. Furthermore, $s^x$ consists of the corresponding 
simple alignments of the strings in $S(g^x_i)$ used in $s$ and the order of the strings
is given by the order of the variables in $C^x$. The string $s^x$ starts with the letter $L_x$
and ends with $R_x$.\\
\\  
Notice that similar transformations can be applied to
 strings corresponding to matching equations and to equations with three variables, 
 but we are going to define the transformation for those strings in detail while 
 analyzing the upper bound of overlapped letters for simple aligned
 strings corresponding to circle equations, which  are contained in a given
  superstring $s$ for $\cS_{\cH}$.  \\
  \\
  Before we start our  analysis, we define the assignment $\psi_{s}$ 
  based on the actual superstring $s$ for $\cS_{\cH}$, which is not necessarily
  a normed superstring for $\cS_{\cH}$. By applying the transformations,
  which we are going to define, the assignment  $\psi_s$ will change 
  in dependence to the actual considered superstring. 
  \begin{eqnarray*}
  \psi_s(x_i) &=& 1 \textrm{ if the strings in $S(g^x_i)$ use a $1$-alignment in $s$}\\
              & =& 0  \textrm{ otherwise } 
  \end{eqnarray*}
  Due to the transformations for the strings corresponding to circle and circle border
  equations, the assignment $\psi_s$ is well-defined.
\subsubsection{Defining the Assignment for  Checker Variables}
Let $x\in V(\mathcal{E}_3)$, $C^x$ be the corresponding circle in $\cH$
and $M^x$ its associated perfect matching. Furthermore, let
$x_{i}\oplus x_{i+1}= 0$, $x_{i-1}\oplus x_{i}= 0$, $x_{j-1}\oplus x_{j}= 0$,
$x_{j}\oplus x_{j+1}= 0$
and $x_i \oplus x_j =0$ be equations in $\cH$, where $\{i,j\}\in M^x$
and $i<j$
 holds. Let $s$ be a superstring for $\cS_{\cH}$ such that the strings
 corresponding to circle and circle border equations are using a
 simple alignment in $s$. 
Based on the  simple alignments of the strings corresponding to
$g^x_{i}$, $g^x_{i+1}$, $g^x_{j}$ and $g^x_{j+1}$, which are used in the 
superstring $s$, we are going to define the assignment to
the variables $x_i$ and $x_j$.
Furthermore, we analyze the number of overlapped letters that can be achieved 
 given the simple aligned strings and relate them  to the number of satisfied
equations in $\cH$ by $\psi_s$. \\
\\
In the remainder, we will assume that the underlying
 superstring for $\cS_{\cH}$ contains simple aligned strings
 corresponding to circle and circle border equations. 
Before we start our analysis, we introduce the notation of a constellation
that  denotes which of the simple alignments are used by the strings corresponding 
to the equations $g^x_i$,  
$g^x_{i+1}$, $g^x_j$ and $g^x_{j+1}$ in $s$.\\
\\
Given a superstring $s$ for $S_{\cH}$ and $\{i,j\}\in M^x$, a
 \emph{constellation} $c$ is defined by $(X_{i}X_{i+1},X_{j-1}X_{j+1})^s_{\{i,j\}}$ with
 $X_{i},X_{i+1},X_{j},X_{j+1}\in \{0,1\}$, where $X_k=1$ if and only if   
the  strings in $S(g^x_{k})$ use
the $1$-alignment in $s$ for $k\in \{i,i+1,j,j+1\}$. 
We call a constellation $c$ inconsistent if there is an  entry $A_1A_2$ with
 $A_1\neq A_2$. Otherwise, $c$ is called  consistent.\\
 \\ 
Based on the given constellations, we are going to define $\psi_s$. 
\begin{definition}[Assignment $\psi_s$ to Checker Variables]
Let $\cH$ be an instance of the  Hybrid problem, $\cS_{\cH}$ its corresponding
instance of the superstring problem and $s$  a superstring for $\cS_{\cH}$.
Given the constellation $(X_{i}X_{i+1},X_{j}X_{j+1})^s_{\{i,j\}}$, we define $\psi_s$ 
in the following way.
\begin{enumerate}
\item[$(i)$] $\psi_s(x_i)=X_i$ and $\psi_s(x_j)=X_j$ if
 $X_i  \oplus X_j=1 $ and  $c$ is  consistent
\item[$(ii)$] $\psi_s(x_i)=X_i$ and $\psi_s(x_j)=X_j$ if
 $X_i  \oplus X_j=0 $ 
\item[$(iii)$] $\psi_s(x_i)=1-X_i$ and $\psi_s(x_j)=X_j$ if
 $X_i  \oplus X_j=1 $ and $X_i\neq X_{i+1}$
\item[$(iv)$] $\psi_s(x_i)=X_i$ and $\psi_s(x_j)=1-X_j$ if
 $X_i  \oplus X_j=1 $, $X_j\neq X_{j+1}$ and $X_i= X_{i+1}$
\end{enumerate}
\end{definition}
\noindent
We are going to analyze the the different constellations and discuss the cases 
$(i)$-$(iv)$ of the definition of $\psi_s$. We start with case $(i)$.\\
\\
%
\textbf{CASE $(i)$~$X_i  \oplus X_j=1 $ and  $c$ is consistent:}\\
%
%
There are two constellations, which we have to analyze, namely $(11,00)^s_{\{i,j\}}$ and 
$(00,11)^s_{\{i,j\}}$. Starting with the former constellation, we obtain the  scenario depicted 
below.
The string $\medtriangleright\medtriangleright \framebox{$X_i$}$ with $X_i\in \{x^{m0}_i,x^{r1}_i\}$ 
represents a simple alignment of the strings in $S(g^x_{i})$.
Analogously, the string 
$$\framebox{$X_{i+1}$} \medtriangleleft\medtriangleleft \textrm{ with } X_{i+1}\in \{x^{m0}_i,x^{l1}_i\}$$ 
represents a simple alignment of the strings in $S(g^x_{i+1})$.
Since we know that using the most profitable simple alignment of the strings in $S(g^x_{\{i,j\}})$ 
 does not increase the length of the superstring, we make use of the $1$-alignment and transform
 the superstring $s$ in the superstring $s'$, which are both depicted below.   
\\
\vspace{3mm}\\
\fbox{\parbox{\dimexpr \linewidth - 2\fboxrule - 2\fboxsep}{
\vspace{4mm}
$$
s=b
 \medtriangleright\medtriangleright
 \framebox{$x^{r1}_i$}\,
 x^{r1}_i x^{l1}_i x^{r0}_j x^{l0}_j \,
\framebox{$x^{l1}_i$}   \medtriangleleft \medtriangleleft 
m
\medtriangleright\medtriangleright
\framebox{$x^{r0}_j$} \, x^{r0}_j x^{l0}_j x^{r1}_i x^{l1}_i \,
\framebox{$x^{l0}_j$} 
 \medtriangleleft \medtriangleleft
 e
$$
$$\downarrow$$
$$
s'=b
\medtriangleright\medtriangleright
\lefteqn{ \overbrace{ \phantom{\framebox{$x^{r1}_i$}x^{l1}_i x^{r0}_j
x^{l0}_j }}^{  x^{r1}_i x^{l1}_i x^{r0}_j x^{l0}_j   } }
 \framebox{$x^{r1}_i$}\, x^{l1}_i 
\lefteqn{ \underbrace{ \phantom{x^{r0}_j x^{l0}_j x^{r1}_i  \framebox{$x^{l1}_i$} }}
_{x^{r0}_j x^{l0}_j x^{r1}_i x^{l1}_i }  }
x^{r0}_j x^{l0}_j x^{r1}_i \framebox{$x^{l1}_i$}
  \medtriangleleft \medtriangleleft 
m \,
   \medtriangleright\medtriangleright \framebox{$x^{r0}_j$} \,
\framebox{$x^{l0}_j$}
\medtriangleleft \medtriangleleft
e
$$\vspace{1mm}
}}
\vspace{1mm}\\
Let us derive an upper bound on the number of overlapped  letters. More precisely, we 
are interested in 
the number of overlapped  letters being additional 
to the overlap of two letters due to the simple alignment. 
In both cases, either by using the $1$-alignment or the $0$-alignment of the strings in $S(g^x_{\{i,j\}})$,
we cannot obtain more than an overlap of two letters.
It corresponds to the number of  satisfied equations, which are 
$x_i\oplus x_{i+1}=0$ and  $x_j\oplus x_{j+1}=0$.\\
\\
In case of the constellation $(00,11)^s_{\{i,j\}}$, we separate the 
strings $x^{r1}_i x^{l1}_i x^{r0}_j x^{l0}_j$ and $x^{r0}_j x^{l0}_j x^{r1}_i x^{l1}_i$ from the superstring
$s$. Then, we attach the aligned string $x^{r1}_i x^{l1}_i x^{r0}_j x^{l0}_jx^{r1}_i x^{l1}_i$ 
at the end of the actual solution. The considered
situation is depicted below.\\
\vspace{3mm}\\
\fbox{\parbox{\dimexpr \linewidth - 2\fboxrule - 2\fboxsep}{
\vspace{4mm}
$$
b
\medtriangleright\medtriangleright
\framebox{$x^{m0}_i$} \,
x^{r1}_i x^{l1}_i x^{r0}_j x^{l0}_j \,
\framebox{$x^{m0}_i$}
\medtriangleleft \medtriangleleft
m
\medtriangleright\medtriangleright
\framebox{$x^{m1}_j$}  \,
 x^{r0}_j x^{l0}_j x^{r1}_i x^{l1}_i\,
\framebox{$x^{m1}_j$}
\medtriangleleft \medtriangleleft
e
$$
$$\downarrow$$
$$
b
\lefteqn{ \overbrace{\phantom{ 
\medtriangleright\medtriangleright \framebox{$x^{m0}_i$}  }  }
^{ \medtriangleright\medtriangleright \framebox{$x^{m0}_i$} }  }
\medtriangleright\medtriangleright
\framebox{$x^{m0}_i$} 
\medtriangleleft \medtriangleleft m\,
\lefteqn{ \overbrace{\phantom{ \medtriangleright\medtriangleright   \framebox{$x^{m1}_j$}    }  } 
^{\medtriangleright\medtriangleright   \framebox{$x^{m1}_j$}}   }
\medtriangleright\medtriangleright
\framebox{$x^{m1}_j$} 
\medtriangleleft \medtriangleleft
e
 \quad
\lefteqn{ \overbrace{\phantom{ x^{r1}_i x^{l1}_i x^{r0}_j x^{l0}_j  }  }^
{x^{r1}_i x^{l1}_i x^{r0}_j x^{l0}_j}  } 
x^{r1}_i x^{l1}_i 
\lefteqn{ \underbrace{\phantom{ x^{r0}_j x^{l0}_j x^{r1}_i x^{l1}_i  }  }_
{x^{r0}_j x^{l0}_j x^{r1}_i x^{l1}_i}  } 
x^{r0}_j x^{l0}_j x^{r1}_i x^{l1}_i
$$
\vspace{1mm}
}}
\vspace{1mm}\\
In this scenario, the best that we are able to obtain is an overlap of two letters. This corresponds
to the number of satisfied equations, namely $x_i\oplus x_{i+1}=0$ and  $x_j\oplus x_{j+1}=0$.\\
\\
\textbf{CASE $(ii)$ ~ $ X_i  \oplus X_j  =0 $ :}\\
Let us start with the constellation $(0X_{i+1},0X_{j+1})^s_{\{i,j\}}$.
In this case, we set  $\psi_s(x_i)=0$ and $\psi_s(x_j)=0$.
Given the strings $\medtriangleright\medtriangleright\framebox{$x^{m0}_i$}$\, , 
$\framebox{$X_{i+1}$}\medtriangleleft \medtriangleleft$,
$\medtriangleright\medtriangleright \framebox{$x^{r0}_j$}$ and 
$ \framebox{$X_{j+1}$}\medtriangleleft\medtriangleleft$ with $X_{i+1}\in \{x^{m0}_i, x^{l1}_i\}$ 
and $X_{j+1} \in \{x^{m1}_j, x^{l0}_j\}$,
we obtain the following scenario:\\
\vspace{3mm}\\
\fbox{\parbox{\dimexpr \linewidth - 2\fboxrule - 2\fboxsep}{
\vspace{4mm}
$$
b
\medtriangleright\medtriangleright 
 \framebox{$x^{m0}_i$} \,
 x^{r1}_i x^{l1}_i x^{r0}_j x^{l0}_j \,
 \framebox{$X_{i+1}$}\medtriangleleft\medtriangleleft
 m
 \medtriangleright\medtriangleright
\lefteqn{ \overbrace{\phantom{ \framebox{$x^{r0}_j$} x^{l0}_j x^{r1}_i x^{l1}_i }}
^{x^{r0}_j x^{l0}_j x^{r1}_i x^{l1}_i} }
\framebox{$x^{r0}_j$}  x^{l0}_j x^{r1}_i x^{l1}_i
\framebox{$X_{j+1}$}\medtriangleleft \medtriangleleft  
e 
$$
$$\downarrow$$
$$
b
\medtriangleright\medtriangleright 
 \framebox{$x^{m0}_i$} \,
\framebox{$X_{i+1}$}\medtriangleleft\medtriangleleft
 m
 \medtriangleright\medtriangleright
\lefteqn{ \overbrace{\phantom{ \framebox{$x^{r0}_j$} x^{l0}_j x^{r1}_i x^{l1}_i }}
^{x^{r0}_j x^{l0}_j x^{r1}_i x^{l1}_i} }
\framebox{$x^{r0}_j$}  x^{l0}_j 
\lefteqn{ \underbrace{\phantom{ x^{r1}_i x^{l1}_i x^{r0}_j x^{l0}_j  }}
_{x^{r1}_i x^{l1}_i x^{r0}_j x^{l0}_j   } }
x^{r1}_i x^{l1}_ix^{r0}_j x^{l0}_j
\framebox{$X_{j+1}$}\medtriangleleft \medtriangleleft  
e 
$$\vspace{1mm}
}}
\vspace{1mm}\\
The most advantageous simple alignment in this case is the  $0$-alignment of the strings in $S(g^x_{\{i,j\}})$. 
If  $\psi_s(x_i)=\psi_s(x_{i+1})=0$
holds, which means $X_{i+1}=x^{m0}_i$, we obtain another overlap of one letter by aligning 
$\medtriangleright\medtriangleright 
 \framebox{$x^{m0}_i$}$ with 
$\framebox{$x^{m0}_i$}\medtriangleleft \medtriangleleft$.
A similar argument holds for $\psi_s(x_{j})=\psi_s(x_{j+1})=0$. Notice that the equation $x_{i}\oplus x_{j}=0$
is satisfied by $\psi_s$.
In summary, we state that we obtain an  overlap of one additional letter per satisfied equation.
Hence, we obtain an overlap of three letters according to the satisfied equations
$x_{i}\oplus x_{i+1}=0$,  $x_{i}\oplus x_{j}=0$ and $x_{j}\oplus x_{j+1}=0$.\\
\\
Consider  the constellation $(1X_{i+1},1X_{j+1})^s_{\{i,j\}}$. Hence, we are
given the strings $\medtriangleright\medtriangleright\framebox{$x^{r1}_i$}$\, , 
$\framebox{$X_{i+1}$}\medtriangleleft \medtriangleleft$,
$\medtriangleright\medtriangleright \framebox{$x^{m1}_j$}$ and 
$ \framebox{$X_{j+1}$}\medtriangleleft\medtriangleleft$ with $X_{i+1}\in \{x^{m0}_i, x^{l1}_i\}$
and $X_{j+1} \in \{x^{m1}_j, x^{l0}_j\}$.
We obtain the scenario displayed below.\\
\vspace{3mm}\\
\fbox{\parbox{\dimexpr \linewidth - 2\fboxrule - 2\fboxsep}{
\vspace{4mm}
$$
b
\medtriangleright\medtriangleright \framebox{$x^{r1}_i$} \,
x^{r1}_i x^{l1}_i x^{r0}_j x^{l0}_j \,
  \framebox{$X_{i+1}$}\medtriangleleft \medtriangleleft
  m
\medtriangleright\medtriangleright\framebox{$x^{m1}_j$} \,
x^{r0}_j x^{l0}_j   x^{r1}_i x^{l1}_i \,
 \framebox{$X_{j+1}$} \medtriangleleft \medtriangleleft
 e
$$
$$\downarrow$$
$$
b
\medtriangleright\medtriangleright
\lefteqn{ \overbrace{ \phantom{\framebox{$x^{r1}_i$}x^{l1}_i x^{r0}_j
x^{l0}_j }}^{  x^{r1}_i x^{l1}_i x^{r0}_j x^{l0}_j   } }
 \framebox{$x^{r1}_i$}\, x^{l1}_i 
\lefteqn{ \underbrace{ \phantom{x^{r0}_j x^{l0}_j x^{r1}_i  x^{l1}_i }}
_{x^{r0}_j x^{l0}_j x^{r1}_i x^{l1}_i }  }
x^{r0}_j x^{l0}_j x^{r1}_i   x^{l1}_i
 \framebox{$X_{i+1}$}
  \medtriangleleft \medtriangleleft m
 \medtriangleright\medtriangleright\framebox{$x^{m1}_j$} \,
 \framebox{$X_{j+1}$} \medtriangleleft \medtriangleleft
 e
$$
\vspace{1mm}
}}
\vspace{1mm}\\
In this case, we use the $1$-alignment of the strings in $S(g^x_{\{i,j\}})$. If  $\psi_s(x_i)=\psi_s(x_{i+1})=1$
holds, which means $X_{i+1}=x^{l1}_i$, we obtain another overlap of one letter by aligning 
$$
\medtriangleright\medtriangleright
\lefteqn{ \overbrace{ \phantom{\framebox{$x^{r1}_i$}x^{l1}_i x^{r0}_j
x^{l0}_j }}^{  x^{r1}_i x^{l1}_i x^{r0}_j x^{l0}_j   } }
 \framebox{$x^{r1}_i$}\, x^{l1}_i 
\lefteqn{ \underbrace{ \phantom{x^{r0}_j x^{l0}_j x^{r1}_i  x^{l1}_i }}
_{x^{r0}_j x^{l0}_j x^{r1}_i x^{l1}_i }  }
x^{r0}_j x^{l0}_j x^{r1}_i   x^{l1}_i 
\quad 
\textrm{ with } 
\quad
\framebox{$x^{l1}_i$}\medtriangleleft \medtriangleleft.
$$
In case of $\psi_s(x_{j})=\psi_s(x_{j+1})=1$, we may apply a similar argument. Notice that the equation $x_{i}\oplus x_{j}=0$
is satisfied by $\psi_s$.
In summary, we state that we obtain an  overlap of one additional letter per satisfied equation.
Hence, we obtain an overlap of three letters according to the satisfied equations
$x_{i}\oplus x_{i+1}=0$,  $x_{i}\oplus x_{j}=0$ and $x_{j}\oplus x_{j+1}=0$.\\
\\
\textbf{CASE $(iii)$~ $X_i  \oplus X_j=1 $ and $X_i\neq X_{i+1}$:}\\
Let us begin with the constellation $(10,0X_{j+1})^s_{\{i,j\}}$. 
We consider the  scenario depicted below, in which we are given
the strings $\medtriangleright\medtriangleright\framebox{$x^{r1}_i$}$\, , 
$\framebox{$x^{m0}_i$}\medtriangleleft \medtriangleleft$,
$\medtriangleright\medtriangleright \framebox{$x^{l0}_j$}$ and 
$ \framebox{$X_{j+1}$}\medtriangleleft\medtriangleleft$ with $X_{j+1}\in \{x^{l0}_j, x^{m1}_j\}$.\\
\vspace{3mm}\\
\fbox{\parbox{\dimexpr \linewidth - 2\fboxrule - 2\fboxsep}{
\vspace{4mm}
$$
b
\medtriangleright\medtriangleright
\framebox{$x^{r1}_i$} \,
x^{r1}_i x^{l1}_i x^{r0}_j x^{l0}_j \,
 \framebox{$x^{m0}_i$} \medtriangleleft\medtriangleleft 
 m 
\medtriangleright\medtriangleright\framebox{$x^{r0}_j$} \,
x^{r0}_j x^{l0}_j x^{r1}_i x^{l1}_i \,
\framebox{$X_{j+1}$}
\medtriangleleft\medtriangleleft
e
$$
$$\downarrow$$
$$
b
\lefteqn{ \overbrace{\phantom{ 
\medtriangleright\medtriangleright \framebox{$x^{m0}_i$}  }  }
^{ \medtriangleright\medtriangleright \framebox{$x^{m0}_i$} }  }
\medtriangleright\medtriangleright
\framebox{$x^{m0}_i$} 
\medtriangleleft \medtriangleleft m\,
\medtriangleright\medtriangleright
\lefteqn{ \overbrace{\phantom{ \framebox{$x^{r0}_j$} x^{l0}_j x^{r1}_i x^{l1}_i }}
^{x^{r0}_j x^{l0}_j x^{r1}_i x^{l1}_i} }
\framebox{$x^{r0}_j$}  x^{l0}_j 
\lefteqn{ \underbrace{\phantom{ x^{r1}_i x^{l1}_i x^{r0}_j x^{l0}_j  }}
_{x^{r1}_i x^{l1}_i x^{r0}_j x^{l0}_j   } }
x^{r1}_i x^{l1}_ix^{r0}_j x^{l0}_j \,
\framebox{$X_{j+1}$}\medtriangleleft \medtriangleleft  
e 
$$
\vspace{1mm}
}}
\vspace{1mm}\\
Instead of using the $1$-alignment of the strings in  $S(g^x_i)$, we rather switch to
the $0$-alignment, i.e. we obtain the string 
$\medtriangleright\medtriangleright \framebox{$x^{m0}_i$}$
and define $\psi(x_i)=0$.
It results directly in gaining two additional satisfied equations and 
an overlap of one additional letter. As a matter of fact,
we might lose an overlap of one letter, because the string
 $\medtriangleright\medtriangleright\framebox{$x^m_1$}$ might have been aligned
 from the right side with another string. Furthermore, the equation $x_{i-1}\oplus x_{i}=0$
 might be unsatisfied. But all in all, we obtain at least $2-1$ additional satisfied equations
 by switching the value without increasing the superstring.  
Notice that we may achieve an additional overlap of one letter if $X_{j+1}=x^{l0}_j$ holds,
which means that $\psi_s$ satisfies the equation $x_j\oplus x_{j+1}=0$.\\
\\
The next  constellation, we are going to analyze, is $(01,1X_{j+1})^s_{\{i,j\}}$. 
Hence, we are given
the strings $\medtriangleright\medtriangleright\framebox{$x^{m0}_i$}$\, , 
$\framebox{$x^{l1}_i$}\medtriangleleft \medtriangleleft$,
$\medtriangleright\medtriangleright \framebox{$x^{m1}_j$}$ and 
$ \framebox{$X_{j+1}$}\medtriangleleft\medtriangleleft$, with $X_{j+1}\in \{x^{l0}_j, x^{m1}_j\}$.
The situation is displayed below.\\
\vspace{3mm}\\
\fbox{\parbox{\dimexpr \linewidth - 2\fboxrule - 2\fboxsep}{ 
\vspace{4mm}
$$
b
\medtriangleright\medtriangleright
\framebox{$x^{m0}_i$} \,
x^{r1}_i x^{l1}_i x^{r0}_j x^{l0}_j \,
 \framebox{$x^{l1}_i$} \medtriangleleft\medtriangleleft 
 m 
 \medtriangleright\medtriangleright\framebox{$x^{m1}_j$}  \,
x^{r0}_j x^{l0}_j x^{r1}_i x^{l1}_i \,
\framebox{$X_{j+1}$}
\medtriangleleft\medtriangleleft
e
 $$
$$\downarrow$$
$$
b
\medtriangleright\medtriangleright
\lefteqn{ \overbrace{ \phantom{\framebox{$x^{r1}_i$}x^{l1}_i x^{r0}_j
x^{l0}_j }}^{  x^{r1}_i x^{l1}_i x^{r0}_j x^{l0}_j   } }
 \framebox{$x^{r1}_i$}\, x^{l1}_i 
\lefteqn{ \underbrace{ \phantom{x^{r0}_j x^{l0}_j x^{r1}_i  \framebox{$x^{l1}_i$} }}
_{x^{r0}_j x^{l0}_j x^{r1}_i x^{l1}_i }  }
x^{r0}_j x^{l0}_j x^{r1}_i   
\framebox{$x^{l1}_i$}
  \medtriangleleft \medtriangleleft m
 \medtriangleright\medtriangleright\framebox{$x^{m1}_j$}
 ~\framebox{$X_{j+1}$} \medtriangleleft \medtriangleleft
 e
 $$
 \vspace{1mm}
 }}
 \vspace{1mm}\\
We obtain a similar situation, in which we switch 
$\medtriangleright\medtriangleright\framebox{$x^{m0}_i$}$
to 
$\medtriangleright\medtriangleright \framebox{$x^{r1}_i$}$.
Accordingly, we define $\psi_s(x_i)=1$.
We obtain at least one additional satisfied equation
 by switching the value without increasing the length of the superstring.  
Notice that we may achieve an additional overlap of one letter if $X_{j+1}=x^{m1}_j$ holds.
It corresponds to the satisfied   equation $x_j\oplus x_{j+1}=0$.\\
\\
\textbf{CASE $(iv)$~ $X_i  \oplus X_j=1 $, $X_j\neq X_{j+1}$ and $X_i= X_{i+1}$:}\\
Starting our analysis with the constellation $(00,10)^s_{\{i,j\}}$, we obtain the following scenario.\\
\vspace{1mm}\\
\fbox{\parbox{\dimexpr \linewidth - 2\fboxrule - 2\fboxsep}{ 
\vspace{4mm}
$$
b
\medtriangleright\medtriangleright
\framebox{$x^{m0}_i$} \,
x^{r1}_i x^{l1}_i x^{r0}_j x^{l0}_j \,
 \framebox{$x^{m0}_i$} \medtriangleleft\medtriangleleft 
 m 
 \medtriangleright\medtriangleright\framebox{$x^{m1}_j$} \,
x^{r0}_j x^{l0}_j x^{r1}_i x^{l1}_i \,
\framebox{$x^{l0}_j$}
\medtriangleleft\medtriangleleft
e
 $$
$$\downarrow$$
$$
b
\lefteqn{ \overbrace{\phantom{ \medtriangleright\medtriangleright 
 \framebox{$x^{m0}_i$} }}
^{\medtriangleright\medtriangleright 
 \framebox{$x^{m0}_i$} } }
\medtriangleright\medtriangleright 
 \framebox{$x^{m0}_i$} \,
\medtriangleleft\medtriangleleft
 m
 \medtriangleright\medtriangleright
\lefteqn{ \overbrace{\phantom{ \framebox{$x^{r0}_j$} x^{l0}_j x^{r1}_i x^{l1}_i }}
^{x^{r0}_j x^{l0}_j x^{r1}_i x^{l1}_i} }
\framebox{$x^{r0}_j$}  x^{l0}_j 
\lefteqn{ \underbrace{\phantom{ x^{r1}_i x^{l1}_i x^{r0}_j \framebox{$ x^{l0}_j$}  }}
_{x^{r1}_i x^{l1}_i x^{r0}_j x^{l0}_j   } }
x^{r1}_i x^{l1}_ix^{r0}_j 
\framebox{$ x^{l0}_j$}\medtriangleleft \medtriangleleft  
e 
 $$\vspace{1mm}
 }}
 \vspace{3mm}\\
In this case, we argue that we switch the string $\medtriangleright\medtriangleright\framebox{$x^{m1}_j$}$
to  $\medtriangleright\medtriangleright\framebox{$x^{r0}_j$}$. This means that we set $\psi_s(x_j)=0$.
This transformation yields an overlap of at least the same number of letters, since we might lose an overlap
of one letter from the left side. On the other hand, we align the string  
$$
\medtriangleright\medtriangleright\framebox{$x^{r0}_j$} \quad \textrm{ with } \quad 
\lefteqn{ \overbrace{\phantom{ x^{r0}_j x^{l0}_j x^{r1}_i x^{l1}_i }}
^{x^{r0}_j x^{l0}_j x^{r1}_i x^{l1}_i} }
x^{r0}_j  x^{l0}_j 
\lefteqn{ \underbrace{\phantom{ x^{r1}_i x^{l1}_i x^{r0}_j \framebox{$ x^{l0}_j$}  }}
_{x^{r1}_i x^{l1}_i x^{r0}_j x^{l0}_j   } }
x^{r1}_i x^{l1}_ix^{r0}_j 
\framebox{$ x^{l0}_j$}\medtriangleleft \medtriangleleft
$$
 from the right side by one letter. Notice that we  gain at least one additional satisfied equation. \\
\\
The last constellation, we are going to analyze, is 
$(11,01)^s_{\{i,j\}}$.  The corresponding situation is depicted below. \\
\vspace{1mm}\\
\fbox{\parbox{\dimexpr \linewidth - 2\fboxrule - 2\fboxsep}{ 
\vspace{4mm}
$$
b
\medtriangleright\medtriangleright
\framebox{$x^{r1}_i$} \,
x^{r1}_i x^{l1}_i x^{r0}_j x^{l0}_j \,
 \framebox{$x^{l1}_i$} \medtriangleleft\medtriangleleft 
 m 
 \medtriangleright\medtriangleright\framebox{$x^{r0}_j$} \,
x^{r0}_j x^{l0}_j x^{r1}_i x^{l1}_i \,
\framebox{$x^{m1}_j$}
\medtriangleleft\medtriangleleft
e
 $$
$$\downarrow$$
$$
b
\medtriangleright\medtriangleright
\lefteqn{ \overbrace{ \phantom{\framebox{$x^{r1}_i$}x^{l1}_i x^{r0}_j
x^{l0}_j }}^{  x^{r1}_i x^{l1}_i x^{r0}_j x^{l0}_j   } }
 \framebox{$x^{r1}_i$}\, x^{l1}_i 
\lefteqn{ \underbrace{ \phantom{x^{r0}_j x^{l0}_j x^{r1}_i  \framebox{$x^{l1}_i$} }}
_{x^{r0}_j x^{l0}_j x^{r1}_i x^{l1}_i }  }
x^{r0}_j x^{l0}_j x^{r1}_i   
\framebox{$x^{l1}_i$}
  \medtriangleleft \medtriangleleft m
  \lefteqn{ \overbrace{ \phantom{\medtriangleright\medtriangleright\framebox{$x^{m1}_j$} }}
  ^{  \medtriangleright\medtriangleright\framebox{$x^{m1}_j$}   } }
 \medtriangleright\medtriangleright\framebox{$x^{m1}_j$}
  \medtriangleleft \medtriangleleft
 e
 $$
 \vspace{1mm}
 }}
 \vspace{3mm}\\
 In this case, we switch the string $\medtriangleright\medtriangleright\framebox{$x^{r0}_j$}$
 to $\medtriangleright\medtriangleright\framebox{$x^{m1}_j$}$. Similarly to the former case,
 this transformation does not increase the length of the superstring. By defining $\psi_s(x_j)=1$, we achieve at least 
 one more satisfied equation.\\
 \\
 In summary, we note that we achieve at least the same number of satisfied equations as the number of overlapped 
 letters. By applying the defined transformations, the actual  
superstring contains only strings corresponding to
 matching equations using a simple alignment.  
 Matching equations $x_i\oplus x_j=0$ with $i>j$ can be analyzed analogously.\\
\\
 We are going to define the assignment for contact variables. 
 
\subsubsection{Defining the Assignment for Contact Variables }
Let $g^3_j\equiv x\oplus y\oplus z =0$ be an equation with exactly three
variables in $\cH$. 
Given a simple alignment of the strings corresponding to the equations $x_{j_1-1}\oplus x=0$,  
$x \oplus   x_{j_1+1}=0$, $y_{j_2-1}\oplus y=0$,  
$y \oplus   y_{j_2+1}=0$, $z_{j_3-1}\oplus z=0$, and 
$z \oplus   z_{j_3+1}=0$, we are going to define an assignment based
on the underlying simple alignments and analyze the number of satisfied equations
in dependence of the number of overlapped letters in the superstring.\\
\\
For a given superstring $s$ for $\cS_{\cH}$ and equation 
$g^3_j\equiv x\oplus y \oplus z =0$, we define a \emph{constellation} 
$c$ given by  $(X_1X_2,Y_1Y_2,Z_1Z_2)^s_j$
with $X_1,X_2,Y_1,Y_2,Z_1,Z_2\in \{0,1\}$, where $C=1$ with 
$C\in \{X_1,X_2,Y_1,Y_2,Z_1,Z_2\}$ if and only if the 
strings in the corresponding set are using a $1$-alignment in $s$.
 A constellation denotes which of the simple alignments 
is used by the strings in $s$.  
We call a constellation inconsistent if there is an 
entry $A_1A_2$ such that $A_1\neq A_2$. Otherwise, $c$ is called
consistent.\\
\\
Based on a constellation for a given superstring and an equation $g^3_j$ with three variables,
we are going to define the assignment $\psi_s$ for the variables in $g^3_j$.
\begin{definition}[Assignment $\psi_s$ to Contact Variables]
Let $\cH$ be an instance of the  Hybrid problem, $\cS_{\cH}$ its corresponding
instance of the superstring problem, $s$  a superstring for $\cS_{\cH}$ and
$g^3_j\equiv x\oplus y \oplus z =0$ an equation with three variables
in $\cH$.
For the associated constellation $c=(X_1X_2,Y_1Y_2,Z_1Z_2)^s_j$, 
we define $\psi_s$ 
in the following way.

\begin{enumerate}
\item[$(i)$] If $c$ is  consistent, then, we define 
$\psi_s(x)=X_1$, $\psi_s(y)=Y_1$ 
and $\psi_s(z)=Z_1$ 
\item[$(ii)$] Otherwise, let $A_1A_2$ be an entry in $c$ 
with $A_1\neq A_2$ and $\alpha $ its corresponding variable.
Furthermore, let  $\beta$ and $\gamma$ be  variables associated with the entry $B_1B_2$
and $C_1C_2$, respectively. If  $A_1  \oplus B_1  \oplus  C_1=0$ holds, 
we define  $\psi_s(\alpha )=A_1$, $\psi_s(\beta)=B_1$ 
and $\psi_s(\gamma)=C_1$. 
\item[$(iii)$] Otherwise, we have $A_1  \oplus B_1  \oplus  C_1=1$. Then, we define
$\psi_s(\alpha )=1-A_1$, $\psi_s(\beta)=B_1$ 
and $\psi_s(\gamma)=C_1$. 
\end{enumerate}
\end{definition}
We are going to analyze the following three cases and define the transformations
for the actual superstring for $\cS_{\cH}$.
\begin{enumerate}
\item[$(i)$] $X_1  \oplus Y_1  \oplus  Z_1=1 $ and $c$ is consistent
\item[$(ii)$] $X_1  \oplus Y_1  \oplus  Z_1=0 $ and $c$ is inconsistent
\item[$(iii)$] $X_1  \oplus Y_1  \oplus  Z_1=1 $ and $c$ is inconsistent
\end{enumerate}
Let us begin with case $(i)$.\\
\\
\textbf{CASE $(i)$ ~$X_1  \oplus Y_1  \oplus  Z_1=1 $ and $c$ is consistent:}\\
In this case, we start with the constellation $(11,11,11)^s_j$. We depict the considered
situation below.\\
\vspace{3mm}\\
\fbox{\parbox{\dimexpr \linewidth - 2\fboxrule - 2\fboxsep}{\
\vspace{2mm}
$$
b
\medtriangleright
\medtriangleright  
\framebox{$x^{r1}$}\,
\framebox{$x^{l1}$} 
\medtriangleleft\medtriangleleft
m_1
\medtriangleright   
\medtriangleright
  \framebox{$y^{r1}$}\,
y^{r1}A_j y^{l1} 
 \framebox{$y^{l1}$}
\medtriangleleft\medtriangleleft
m_2
\medtriangleright   
\medtriangleright
  \framebox{$z^{r1}$}\, 
z^{r1}B_j z^{l1}
 \framebox{$z^{l1}$}
\medtriangleleft\medtriangleleft
e
$$
$$
\downarrow
$$
$$
b
\medtriangleright
\medtriangleright  
\framebox{$x^{r1}$} \,
\framebox{$x^{l1}$} 
\medtriangleleft\medtriangleleft
m_1
\medtriangleright   
\medtriangleright
\lefteqn{  \underbrace{ \phantom{   \framebox{$y^{r1}$} A \framebox{$y^{l1}$}  }}
_{y^{r1}A y^{l1} }  }
\framebox{$y^{r1}$} A_j \framebox{$y^{l1}$}
\medtriangleleft\medtriangleleft
m_2
\medtriangleright   
\medtriangleright
\lefteqn{  \underbrace{ \phantom{   \framebox{$z^{r1}$} B \framebox{$z^{l1}$}  }}
_{z^{r1}B z^{l1} }  }
\framebox{$z^{r1}$} B_j \framebox{$z^{l1}$}
\medtriangleleft\medtriangleleft
e
$$
\vspace{1mm}
}}
\vspace{1mm}\\
According to the definition of $\psi_s$, we have $\psi_s(x)=\psi_s(y)=\psi_s(z)=1$. Notice that 
the equation $x\oplus y \oplus z =0$ is unsatisfied. On the other hand, 
the assignment $\psi_s$
satisfies the equations $x\oplus x_{j_1+1}=0$, $y\oplus y_{j_2+1}=0$ and 
$z\oplus z_{j_3+1}=0$.\\
We note that a string corresponding to $S^A(g^3_j)$ or $S^B(g^3_j)$ using a simple alignment  
can have an overlap of at most
one letter from the right side as well as from the left side. Therefore,  the best 
we can hope for is to overlap the string
$y^{r1}A y^{l1}$ with 
$\medtriangleright   
\medtriangleright
  \framebox{$y^{r1}$}$ and 
$ \framebox{$y^{l1}$}
\medtriangleleft\medtriangleleft$
by one letter in each case. The same holds for the string $z^{r1}B_j z^{l1}$. 
Consequently, we conclude that   the number of overlapped letters is bounded from above by four.
\\
\\
In case of $X_1  + Y_1  +  Z_1=1 $, we analyze exemplary the 
 constellation $(00,00,11)^s_j$. We set
$\psi_s(z)=1$,
$\psi_s(x)=0$ and $\psi_s(y)=0$. 
This  situation is displayed below.\\
\vspace{3mm}\\
\fbox{\parbox{\dimexpr \linewidth - 2\fboxrule - 2\fboxsep}{ 
\vspace{2mm}
$$
b
\medtriangleright
\medtriangleright 
\framebox{$x^{m0}$} \, 
\framebox{$x^{m0}$}  
 \medtriangleleft
\medtriangleleft 
m_1
\medtriangleright
\medtriangleright  
\framebox{$y^{m0}$} \, y^{r1}A_j y^{l1}
\framebox{$y^{m0}$}  
 \medtriangleleft
\medtriangleleft
m_2
\medtriangleright   
\medtriangleright
\framebox{$z^{r1}$} \,
 z^{r1}B_j z^{l1} \framebox{$z^{l1}$}
\medtriangleleft\medtriangleleft 
e
$$ 
$$
\downarrow
$$
$$
b~
\lefteqn{  \underbrace{ \phantom{ 
\medtriangleright
\medtriangleright  
\framebox{$x^{m0}$}   }}
_{
\medtriangleright
\medtriangleright  
\framebox{$x^{m0}$}  
}  }
\medtriangleright
\medtriangleright  
\framebox{$x^{m0}$} 
\medtriangleleft\medtriangleleft
m_1
\lefteqn{  \underbrace{ \phantom{ 
\medtriangleright
\medtriangleright \framebox{$y^{m0}$}   }}
_{\medtriangleright
\medtriangleright \framebox{$y^{m0}$} }  }
\medtriangleright
\medtriangleright  
\framebox{$y^{m0}$}
\medtriangleleft\medtriangleleft
m_2
\medtriangleright   
\medtriangleright
\lefteqn{  \underbrace{ \phantom{   \framebox{$z^{r1}$} B_j \framebox{$z^{l1}$}  }}
_{z^{r1}B_j z^{l1} }  }
\framebox{$z^{r1}$} B_j \framebox{$z^{l1}$}
\medtriangleleft\medtriangleleft e ~
y^{r1}A_j y^{l1}
$$ 
\vspace{2mm}
}}
\vspace{1mm}\\
Due to the $z^1$-alignment of the strings in $S^B(g^3_j)$, we obtain  an overlap of two letters.
Additionally, we align the string $\medtriangleright\medtriangleright  
\framebox{$x^{m0}$}$ from the left with $\framebox{$x^{m0}$} 
\medtriangleleft\medtriangleleft$. The same holds for 
$\medtriangleright
\medtriangleright \framebox{$y^{m0}$}$ and $\framebox{$y^{m0}$} 
\medtriangleleft\medtriangleleft$. Notice that it  is not more 
advantageous to align the string $x^{m0}B_jC_j$ with $\medtriangleright\medtriangleright  
\framebox{$x^{m0}$}$, since we lose the overlap of one letter with 
$\framebox{$x^{m0}$} \medtriangleleft\medtriangleleft$. Hence, we are able to get
an overlap of at most four letters, which corresponds to the satisfied 
equations $x\oplus x_{j_1+1}=0$, $y\oplus y_{j_2+1}=0$ and 
$z\oplus z_{j_3+1}=0$.\\
\\  
\textbf{CASE $X_1  \oplus Y_1  \oplus  Z_1=0 $ and $c$ is inconsistent:}\\
First, we concentrate on the constellations with the property  $X_1+Y_1+Z_1=2$. 
 Exemplary, we analyze the constellation
 $(0X_2,1Y_2,1Z_2)^s_j$ depicted below.\\
 \vspace{3mm}\\
\fbox{\parbox{\dimexpr \linewidth - 2\fboxrule - 2\fboxsep}{
\vspace{2mm}
 $$
 b
\medtriangleright
\medtriangleright  
\framebox{$x^{m0}$} \,
\framebox{$X_2$} 
\medtriangleleft\medtriangleleft
m_1
\medtriangleright
\medtriangleright  
 \framebox{$y^{r1}$} \,
y^{r1}A_j y^{l1} 
\framebox{$Y_2$}\medtriangleleft\medtriangleleft
m_2
\medtriangleright   
\medtriangleright 
 \framebox{$z^{r1}$} 
 z^{r1}B_j z^{l1}
\framebox{$Z_2$}\medtriangleleft\medtriangleleft
e
$$
$$
\downarrow
$$
 $$
 b
\medtriangleright
\medtriangleright  
\framebox{$x^{m0}$} \,
\framebox{$X_2$}
\medtriangleleft\medtriangleleft
m_1
\medtriangleright
\medtriangleright  
\lefteqn{  \underbrace{ \phantom{   \framebox{$y^{r1}$} A_j y^{l1}  }}
_{y^{r1}A_j y^{l1} }  }
\framebox{$y^{r1}$} A_j y^{l1}
\framebox{$Y_2$}\medtriangleleft\medtriangleleft
m_2
\medtriangleright   
\medtriangleright   
\lefteqn{  \underbrace{ \phantom{   \framebox{$z^{r1}$} B_j z^{l1}  }}
_{z^{r1}B_j z^{l1} }  }
\framebox{$z^{r1}$} B_j\, z^{l1}
\framebox{$Z_2$}\medtriangleleft\medtriangleleft
e
$$
\vspace{2mm}
}}
\vspace{1mm}\\
The strings $\medtriangleright
\medtriangleright  
 \framebox{$y^{r1}$}$ and 
 $\medtriangleright   
\medtriangleright 
 \framebox{$z^{r1}$}$
 can be used to  align from the right side with $z^{r1}B z^{l1}$
 and $y^{r1}A y^{l1}$, respectively. It yields an overlap of two letters. If the 
 corresponding equations with two variables are satisfied, which means
 $X_2=x^{m0}$, $Y_2=y^{l1}$ and $Z_2=z^{l1}$, we gain an overlap 
 of one letter per satisfied equation. Notice that using the $x^{0}$-alignment of $S(g^3_j)$
 does not yield more overlapped letters.
In summary, it is possible to attain an overlap of at most five letters, which  corresponds 
to the constellation $(00,11,11)^s_j$. 
An analogue argumentation holds for 
the constellations  $(1X_2,1Y_2,0Z_2)^s_j$
and $(1X_2,0Y_2,1Z_2)^s_j$.\\
\\
Next, we discuss constellations with the property $X_1+Y_1+Z_1=0$.
For this reason, we consider
 the constellation $(0X_2,0Y_2,0Z_2)^s_j$. \\
 \vspace{3mm}\\
\fbox{\parbox{\dimexpr \linewidth - 2\fboxrule - 2\fboxsep}{
\vspace{2mm}
$$
b
\medtriangleright
\medtriangleright
\framebox{$x^{m0}$}  \,
\framebox{$X_2$}
\medtriangleleft\medtriangleleft
m_1
\medtriangleright
\medtriangleright  
\framebox{$y^{m0}$}\,
y^{r1}B_jy^{r1}
 \framebox{$Y_2$}
\medtriangleleft
\medtriangleleft
m_2
\medtriangleright   
\medtriangleright
\framebox{$z^{m0}$} \,
y^{r1}B_jy^{r1}
 \framebox{$Z_2$}
\medtriangleleft\medtriangleleft
e
$$ 
$$
\downarrow
$$ 
$$
b
\medtriangleright
\medtriangleright
\lefteqn{  \underbrace{ \phantom{\framebox{$x^{m0}$} A C_j    }}
_{ x^{m0} A C_j  }  }
\framebox{$x^{m0}$}  A 
\lefteqn{  \overbrace{ \phantom{ C_jB  x^{m0}    }}
^{ C_jBx^{m0}  } }
C_jB  x^{m0} \,
\framebox{$X_2$}
\medtriangleleft\medtriangleleft
m_1
\medtriangleright
\medtriangleright  
\framebox{$y^{m0}$}\,
 \framebox{$Y_2$}
\medtriangleleft
\medtriangleleft
m_2
\medtriangleright   
\medtriangleright
\framebox{$z^{m0}$} \, \framebox{$Z_2$}
\medtriangleleft\medtriangleleft
e
$$ 
\vspace{2mm}
}}
 \vspace{1mm}\\
 Recall that $ x^{m0} C_j x^{m0}$ denotes the $x^0$-alignment of $S(g^3_j)$.
 This string can be aligned from the left
with $\medtriangleright
\medtriangleright x^{m0}$. If $X_2=x^{m0}$ holds, we achieve another
overlap of one letter. Furthermore, the string 
$\medtriangleright
\medtriangleright  
\framebox{$y^{m0}$} $ can be aligned from the right with $\framebox{$Y_2$}
\medtriangleleft
\medtriangleleft$ if and only if $Y_2=y^{m0}$ holds.
A similar argumentation can be applied to the strings
$\medtriangleright   
\medtriangleright
\framebox{$z^{m0}$}$ and $  \framebox{$Z_2$}
\medtriangleleft\medtriangleleft$. 
Finally, we note that we cannot benefit by aligning the string 
 $\framebox{$y^{l1}$}\medtriangleleft\medtriangleleft$ with 
 $ y^{r1} Ay^{l1}$. Consequently, we see that using the string 
 $x^{m0} C_j  x^{m0} $  is generally more profitable.
All in all, we gain an additional overlap of one letter
for satisfying $x\oplus y \oplus z =0$ and another 
overlap of one letter
if
the equation with two variables corresponding to the considered variable
is satisfied.\\ 
\\
\textbf{CASE $X_1  \oplus Y_1  \oplus  Z_1=1 $ and $c$ is inconsistent:}\\
Let us start with constellations satisfying $X_1+Y_1+Z_1=3$.
Exemplary, we analyze the constellation $(10,1Y_2,1Z_2)^s_j$. Due to the definition of $\psi_s$,
we set $\psi_s(x)=1-X_1$, $\psi_s(y)=1$ and $\psi_s(z)=1$. 
Notice that $\psi_s$  satisfies the equation $x\oplus y \oplus z=0$.
 By switching the value $\psi_s(x)$ from $X_1$ to $1-X_1$,
the equation $x_{j_1-1}\oplus x =0$ might become unsatisfied. 
Furthermore, we might lose an overlap of one letter 
by flipping the $1$-alignment of the 
strings corresponding to  $x_{j_1-1} \oplus x=0$ to the $0$-alignment. 
On the other hand, we gain an overlap of one letter
by aligning the string  $\medtriangleright
\medtriangleright 
\framebox{$x^{m0}$}$ from the right
side with $\framebox{$x^{m0}$}\medtriangleleft\medtriangleleft$.
This transformation yields at least one more satisfied equation.
In addition, the strings $y^{r1}A_j y^{l1} $ and $z^{r1}B z^{l1}$
can be aligned by one letter with 
$\medtriangleright
\medtriangleright  
 \framebox{$y^{r1}$}  $ and 
$\medtriangleright   
\medtriangleright
 \framebox{$z^{r1}$}$, respectively.
 If $Z_2=z^{l1}$ and $Y_2=y^{l1}$ holds, we achieve another overlap
 of one letter in each case. 
The situation is depicted below.\\
 \vspace{3mm}\\
\fbox{\parbox{\dimexpr \linewidth - 2\fboxrule - 2\fboxsep}{
\vspace{4mm}
$$
b
\medtriangleright
\medtriangleright  
\framebox{$x^{r1}$} \,
\framebox{$x^{m0}$}  
\medtriangleleft\medtriangleleft
m_1
\medtriangleright
\medtriangleright  
 \framebox{$y^{r1}$}  
y^{r1}A y^{l1} 
\framebox{$Y_2$}
\medtriangleleft\medtriangleleft
m_2
\medtriangleright   
\medtriangleright
 \framebox{$z^{r1}$} 
z^{r1}B z^{l1} 
\framebox{$Z_2$}
\medtriangleleft
\medtriangleleft
e
$$ 
$$
\downarrow
$$
$$
b
\lefteqn{  \underbrace{ \phantom{
\medtriangleright
\medtriangleright 
\framebox{$x^{m0}$}    }}
_{\medtriangleright
\medtriangleright 
\framebox{$x^{m0}$} 
 }  }
\medtriangleright
\medtriangleright  
\framebox{$x^{m0}$}  
\medtriangleleft\medtriangleleft
\medtriangleright
\medtriangleright  
\lefteqn{  \underbrace{ \phantom{ \framebox{$y^{r1}$} A_j y^{l1}    }}
_{y^{r1}A_j y^{l1} }  }
\framebox{$y^{r1}$} A_j y^{l1} \,
\framebox{$Y_2$}
\medtriangleleft\medtriangleleft
\medtriangleright   
\medtriangleright
\lefteqn{  \underbrace{ \phantom{   \framebox{$z^{r1}$} B_j z^{l1}  }}
_{z^{r1}B_j z^{l1} }  }
\framebox{$z^{r1}$} B_j z^{l1} \,
\framebox{$Z_2$}
\medtriangleleft
\medtriangleleft
$$ 
\vspace{2mm}
}}
 \vspace{1mm}\\
The other constellations satisfying $X_1+Y_1+Z_1=3$ can be analyzed analogously.\\ 
\\
The remaining constellations $(X_1X_2,Y_1Y_2,Z_1Z_2)^s_j$ 
to be discussed 
 satisfy $X_1+Y_1+Z_1=1$ and are inconsistent.
Exemplary, we analyze the constellation $(01,0Y_2,1Z_2)^s_j$.
For $(01,0Y_2,1Z_2)^s_j$, 
 we set $\psi_s(x)=1-X_1$, $\psi_s(y)=Y_1$ and $\psi_s(z)=Z_1$.
 The scenario is depicted below.\\
 \vspace{3mm}\\
\fbox{\parbox{\dimexpr \linewidth - 2\fboxrule - 2\fboxsep}{
\vspace{2mm}
$$
b
\medtriangleright
\medtriangleright  
\framebox{$x^{m0}$} \,x^{r1}A_j x^{l1}
\framebox{$x^{l1}$}  
\medtriangleleft\medtriangleleft
m_1
\medtriangleright
\medtriangleright  
\framebox{$y^{m0}$} \,
\framebox{$Y_2$}
\medtriangleleft\medtriangleleft
m_2
\medtriangleright   
\medtriangleright
\lefteqn{  \underbrace{ \phantom{   \framebox{$z^{r1}$} B_j z^{l1}  }}
_{z^{r1}B_j z^{l1} }  }
\framebox{$z^{r1}$} B_j z^{l1} \,
\framebox{$Z_2$}
\medtriangleleft
\medtriangleleft
e
$$ 
$$
\downarrow
$$
$$
b
\medtriangleright   
\medtriangleright
\lefteqn{  \underbrace{ \phantom{   \framebox{$x^{r1}$} A_j \framebox{$x^{l1}$}  }}
_{x^{r1}A_j x^{l1} }  }
\framebox{$x^{r1}$} A_j \framebox{$x^{l1}$}
\medtriangleleft\medtriangleleft
m_1
\medtriangleright
\medtriangleright  
\framebox{$y^{m0}$}  \,
\framebox{$Y_2$}
\medtriangleleft\medtriangleleft
m_2
\medtriangleright   
\medtriangleright
\lefteqn{  \underbrace{ \phantom{   \framebox{$z^{r1}$} B z^{l1}  }}
_{z^{r1}B z^{l1} }  }
\framebox{$z^{r1}$} B z^{l1} \,
\framebox{$Z_2$}
\medtriangleleft
\medtriangleleft
e
$$ 
\vspace{2mm}
}}
 \vspace{1mm}\\
 By flipping the $0$-alignment of the 
strings corresponding to  $x_{j_1-1} \oplus x=0$ to the $1$-alignment, we can overlap $x^{r1}A_j x^{l1}$ 
from the left side with $\medtriangleright   
\medtriangleright\framebox{$x^{r1}$}$ and with 
$\framebox{$x^{l1}$}\medtriangleleft\medtriangleleft$
from the right side. This transformation achieves an overlap of at most one more
 letter. Moreover, we obtain at least one more satisfied
equation. If $Z_2=z^{l1}$ and
$Y_2=y^{m0}$ holds, it yields an overlap of three additional letters,
which corresponds to the constellation $(11,00,11)^s_j$.\\
\\
In summary, we note that it is possible to
 achieve  an overlap of at least one  letter in each case.
In addition to it, the assignment $\psi_s$ yields at least the same number
of satisfied equations as the number of overlapped letters.
 This means that if $\psi_s$ satisfies the equations $g^3_j$, 
$x \oplus x_{j_1+1} =0$, $y \oplus y_{j_2+1} =0$ and $z \oplus z_{j_3+1} =0$, the
corresponding strings in $s$ can have an overlap of at most five letters.

\subsection{Proof of  Theorem~\ref{hauptsatzsspI}}

Given an instance $\cH$ of the Hybrid problem  with $n$ circles,
$m_2$ equations with two variables and  $m_3$ equations with exactly
three variables with the properties described in Theorem~\ref{ssphybridsatz},
we construct in polynomial time an instance $\cS_{\cH}$ of
the Shortest Superstring problem with the  properties described in section~\ref{sspconstr}.
 Let $\phi$ be an assignment  to the variables of $\cH$ which leaves at most  $u$ equations
unsatisfied. According to section~\ref{sspgivenphi}, the length of the superstring $s_{\phi}$ is 
at most 
$$|s_{\phi}| \leq 7\cdot n+5\cdot m_2+  22\cdot m_3+u ,$$ 
since the length of the superstring increases by at most one letter for every unsatisfied equation
of the assignment.
Regarding the compression measure, we obtain the following.
\begin{eqnarray*}
comp(\cS_{\cH}, s_{\phi}) &\geq & 
\sum\limits_{s\in \cS_{\cH} } |s|-(7\cdot n+5\cdot m_2+  22\cdot m_3+u) \\
  &  =  &  (4+8)n+ 8\cdot m_2+ 36\cdot m_3-(7\cdot n+5\cdot m_2+  22\cdot m_3+u) \\
  &  =  &  5n+3m_2+14m_3-u
\end{eqnarray*}
\noindent 
On the other hand, given an superstring $s$ for $S_{\cH}$  with length 
$$|s|=5m_2+22 m_3+u+7n,$$
we can construct in polynomial time an normed superstring $s'$ without increasing
the length of it by applying the transformations defined in section~\ref{sspassignment}.
This enables us to define an  
 assignment $\psi_s$ to the variables
of $\cH$ according section~\ref{sspassignment} that leaves at most $u$ equations in $\cH$ unsatisfied.
A similar argumentation leads to the conclusion that given a superstring $s$ for $S_{\cH}$  with
compression 
$$comp(\cS_{\cH}, s_{\phi})= 5n+3m_2+14m_3-u,$$
 we construct in polynomial time
an assignment to the variables in $\cH$ such that at most $u$ equations are unsatisfied. 
\qed \\
\\
Next, we are going to describe smaller gadgets for equations with three variables
implying an improved explicit lower bound and give the proof of
 Theorem~\ref{hauptsatzssp}. 

\subsection{Proof of  Theorem~\ref{hauptsatzssp}}
\label{sec:improvapproach}
Given an equation with three variables $g^3_c\equiv x\oplus y \oplus z =0$, 
we introduce the sets $S^\alpha(g^3_j)$ and $S^\beta(g^3_j)$ including the following 
strings. 
$$
x^{r1\alpha}x^{l1}y^{r1}y^{l1}, \qquad y^{r1}y^{l1} x^{m0} C_j, 
\qquad 
x^{m0} C_j x^{r1\alpha}  x^{l1} \qquad \in S^\alpha(g^3_j)
$$
$$
x^{r1\beta} x^{l1}    z^{r1} z^{l1}, \qquad 
z^{r1}   z^{l1} C_j  x^{m0}, \qquad 
C_j  x^{m0}  x^{r1\beta}   x^{l1} \qquad \in S^\beta(g^3_j)
$$
In addition, we introduce new strings for the equation $x_{i-1}\oplus x=0$.
On the other hand, 
the strings corresponding to $x\oplus x_{i+1}=0$, 
$y_{i-1}\oplus y=0$, $y\oplus y_{i+1}=0$, $z_{i-1}\oplus z=0$
and $z\oplus z_{i+1}=0$ remain the same.
Let us define the strings for $x_{i-1}\oplus x=0$:
$$
x^{l1}_{i-1} x^{r1\beta}   x^{l1}_{i-1} x^{r1\alpha}
\quad
x^{l1}_{i-1} x^{r1\alpha} x^{m0}_{i-1} x^{m0} 
\quad 
x^{m0}_{i-1} x^{m0}  x^{l1}_{i-1} x^{r1\beta}
$$
These three strings can be aligned each by two letters in a cyclic fashion.
Accordingly, we obtain three combinations that can be used to overlap with other
strings by one letter from the left side as well as from the right side. 
Note that we have only two combinations if we consider only the left most position
of the combined strings. For example, the combination 
 $$
x^{l1}_{i-1} x^{r1\beta} 
 x^{l1}_{i-1} x^{r1\alpha} 
x^{m0}_{i-1} x^{m0}  x^{l1}_{i-1} x^{r1\beta}
$$
 can be used to overlap from the right side with strings in $S^\beta(g^3_j)$, whereas
$$
x^{l1}_{i-1} x^{r1\alpha} 
x^{m0}_{i-1} x^{m0}  
x^{l1}_{i-1} x^{r1\beta}
x^{l1}_{i-1} x^{r1\alpha} 
$$
can be aligned with strings contained in $S^\alpha(g^3_j)$.
Therefore, we may apply the same arguments as in the proof of Theorem~\ref{hauptsatzssp}.
The strings corresponding to equations of the form $x\oplus y\oplus z =1$
can be constructed analogously.\\
\\
Given an instance $\cH$ of the Hybrid problem  with $n$ circles,
$m_2$ equations with two variables and  $m_3$ equations with exactly
three variables with the properties described in Theorem~\ref{ssphybridsatz},
we construct in polynomial time an instance $\cS_{\cH}$ of
the Shortest Superstring problem.
 Let $\phi$ be an assignment  to the variables of $\cH$ which leaves at most  $u$ equations
unsatisfied. Then, it is possible to construct a superstring $s_{\phi}$ with length 
$$
|s_{\phi}| \leq 7\cdot n+5\cdot m_2+  16\cdot m_3+u ,
$$ 
since the length of the superstring increases by at most one letter for every unsatisfied equation
of the assignment. 
Regarding the compression measure, we obtain the following.
\begin{eqnarray*}
comp(\cS_{\cH}, s_{\phi}) &\geq & 
\sum\limits_{s\in \cS_{\cH} } |s|-(7\cdot n+5\cdot m_2+  16\cdot m_3+u) \\
  &  =  &  (4+8)n+ 8\cdot m_2+ 28\cdot m_3-(7\cdot n+5\cdot m_2+  16\cdot m_3+u) \\
  &  =  &  5n+3m_2+12m_3-u
\end{eqnarray*}
On the other hand, given an superstring $s$ for $S_{\cH}$  with length 
$$|s|=5m_2+16 m_3+u+7n,$$
we can construct in polynomial time a normed superstring $s'$ without increasing
the length of it.
The corresponding
 assignment $\psi_{s'}$ to the variables
of $\cH$  leaves at most $u$ equations in $\cH$ unsatisfied.
A similar argumentation leads to the conclusion that given a superstring $s$ for $S_{\cH}$  with
compression 
$$comp(\cS_{\cH}, s_{\phi})= 5n+3m_2+12m_3-u,$$
 we construct in polynomial time
an assignment to the variables in $\cH$ such that at most $u$ equations are unsatisfied. 
\qed 
\section{ Concluding Remarks }
It seems that a new method
is needed now in order to obtain better approximation lower bounds. 
Perhaps direct PCP constructions are the natural next step for proving stronger
approximation hardness results for the problems considered in this paper.

\end{document}